\newcommand{\ignore}[1]{} 
\newcommand{\too}{\longrightarrow}
\newcommand{\arro}[1]{\xrightarrow{#1}} 
\newcommand{\hoo}{\hookrightarrow} 
\newcommand{\del}{\mathsf{del}}
\newcommand{\rolldel}{\mathsf{rolldel}} 
\newcommand{\comp}{\:|\:} 
\newtheorem{theorem}{Theorem}
\newtheorem{lemma}[theorem]{Lemma}
\newtheorem{corollary}[theorem]{Corollary}
\newdefinition{definition}[theorem]{Definition}
\newdefinition{example}[theorem]{Example}
\newproof{proof}{Proof}
\newcommand{\blue}[1]{{\color{black} #1}}  
\newenvironment{prog}{\vspace{0.7ex}\par
\setlength{\parindent}{0.7cm}
\obeylines\@vobeyspaces\tt}{\vspace{0.7ex}\noindent
}
\newcommand{\startprog}{\begin{prog}}
\newcommand{\stopprog}{\end{prog}\noindent}
\newcommand{\id}{id}
\def\defemb#1#2{\expandafter\def\csname #1\endcsname
                              {\relax\ifmmode #2\else\hbox{$#2$}\fi}}
\newcommand{\dom}{{\cD}om}
\newcommand{\nil}{[\:]}
\renewcommand{\phi}{\varphi}
\newcommand{\ol}[1]{\overline{#1}}  
\newcommand{\cons}{\!:\!}
\newcommand{\h}{\mathit{h}}
\renewcommand{\k}{\lambda}
\newcommand{\init}{\mathsf{init}}
\newcommand{\final}{\mathsf{final}}
\newcommand{\rlh}{\rightleftharpoons}
\newcommand{\lh}{\leftharpoondown}
\newcommand{\rh}{\rightharpoonup}
\newcommand{\lhh}{\leftharpoondown\hspace{-1.8ex}\leftharpoondown}
\let\l=\langle
\let\r=\rangle
\def \tuple#1{\langle #1 \rangle}
\long\def\comment#1{}
\begin{document}

\begin{frontmatter}

\title{A Theory of Reversibility for Erlang\tnoteref{t1}}
\tnotetext[t1]{This work has been partially supported by
  MINECO/AEI/FEDER (EU) under grants TIN2013-44742-C4-1-R and
  TIN2016-76843-C4-1-R, by the \emph{Generalitat Valenciana} under
  grant PROMETEO-II/2015/013 (SmartLogic), by the COST Action
  IC1405 on Reversible Computation - extending horizons of computing, and by JSPS KAKENHI Grant Number JP17H01722.
  Adri\'an Palacios was partially supported by the EU (FEDER) and the
  Spanish \emph{Ayudas para contratos predoctorales para la
    formaci\'on de doctores} and \emph{Ayudas a la movilidad
    predoctoral para la realizaci\'on de estancias breves en centros
    de I+D}, MINECO (SEIDI), under FPI grants BES-2014-069749 and
  EEBB-I-16-11469.
  Ivan Lanese was partially supported by INdAM as a member of GNCS (Gruppo Nazionale per il Calcolo Scientifico).
  Part of this research was done while the third and fourth authors
  were visiting Nagoya and Bologna Universities; they gratefully
  acknowledge their hospitality.
  Finally, we thank Salvador Tamarit and the anonymous reviewers for
  their helpful suggestions and comments.\\[1ex]
  $\copyright$ 2018. This manuscript version is made available under
  the CC-BY-NC-ND 4.0 license \url{http://creativecommons.org/licenses/by-nc-nd/4.0/}
}
  \author[blg]{Ivan Lanese}
  \ead{ivan.lanese@gmail.com}
  \author[ngy]{Naoki Nishida}
  \ead{nishida@i.nagoya-u.ac.jp}
  \author[vlc]{Adri\'an Palacios} \ead{apalacios@dsic.upv.es}
  \author[vlc]{Germ\'an Vidal\corref{cor}} \ead{gvidal@dsic.upv.es}
  \address[blg]{Focus Team, University of Bologna/INRIA\\
  Mura Anteo Zamboni, 7, Bologna, Italy}
  \address[ngy]{Graduate School of Informatics, Nagoya University\\
    Furo-cho, Chikusa-ku, 4648603 Nagoya, Japan}
  \address[vlc]{MiST, DSIC, Universitat Polit\`ecnica de Val\`encia\\
    Camino de Vera, s/n, 46022 Valencia, Spain}

\cortext[cor]{Corresponding author.}


\begin{abstract}
  In a reversible language, any forward computation can be undone by a
  finite sequence of backward steps. Reversible computing has been
  studied in the context of different programming languages and
  formalisms, where it has been used for testing and verification,
  among others.
  In this paper, we consider a subset of Erlang, a functional and
  concurrent programming language based on the actor model. We present
  a formal semantics for reversible computation in this language and
  prove its main properties, including its causal consistency. We also
  build on top of it a rollback operator that can be used to undo the actions
  of a process up to a given checkpoint.\\[1ex]

  To appear in the \emph{Journal of
Logical and Algebraic Methods in Programming} (Elsevier).
\end{abstract}

\begin{keyword}
  reversible computation \sep actor model \sep concurrency \sep
  rollback recovery
\end{keyword}

\end{frontmatter}

\section{Introduction} \label{intro}

Let us consider that the operational semantics of a programming
language is specified by a state transition relation $R$ such that
$R(s,s')$ holds if the state $s'$ is reachable---in one step---from
state $s$.
Then, we say that a programming language (or formalism) is
\emph{reversible} if there exists a constructive algorithm that can be
used to recover the predecessor state $s$ from $s'$.
In general, such a property does not hold for most programming
languages and formalisms.
We refer the interested reader to, e.g.,
\cite{Bennett00,Fra05,Yok10,YAG08} for a high level account of the
principles of reversible computation.

The notion of \emph{reversible computation} was first introduced in
Landauer's seminal work \cite{Lan61} and, then, further improved by
Bennett \cite{Ben73} in order to avoid the generation of ``garbage''
data.  The idea underlying these works is that any programming
language or formalism can be made reversible by adding the
\emph{history} of the computation to each state, which is usually
called a \emph{Landauer embedding}. Although carrying the history of
a computation might seem infeasible because of its size, there are
several successful proposals that are based on this idea. In
particular, one can restrict the original language or apply a number
of analysis in order to restrict the required information in the
history as much as possible, as in, e.g., \cite{MHNHT07,NPV16,TA15} in
the context of a functional language.

In this paper, we aim at introducing a form of reversibility in the
context of a programming language that follows the actor model
(concurrency based on message passing), a first-order subset of the
concurrent and functional language Erlang \cite{AVW96}. Previous
approaches have mainly considered reversibility in---mostly
synchronous---concurrent calculi like CCS \cite{DK04,DK05} and
$\pi$-calculus~\cite{CKV13}; a general framework for reversibility of
algebraic process calculi \cite{PU07}, or the recent approach to
reversible session-based $\pi$-calculus \cite{TY15}. However, we can
only find a few approaches that considered the reversibility of
\emph{asynchronous} calculi, e.g., Cardelli and Laneve's reversible
structures \cite{CL11}, and reversible extensions of the concurrent
functional language $\mu$Oz~\cite{LLMS12}, of a higher-order
asynchronous $\pi$-calculus~\cite{LMS16}, and of the coordination
language $\mu$Klaim~\cite{GLMT15}. In the last two cases, a form
of control of the backward execution using a rollback operator has
also been studied~\cite{LMSS11,GLMT15}. In the case of $\mu$Oz,
reversibility has been exploited for debugging~\cite{GLM14}.

To the best of our knowledge, our work is the first one that considers
reversibility in the context of the functional, concurrent, and
distributed language Erlang. Here, given a running Erlang system
consisting of a pool of interacting processes, possibly distributed in
several computers, we aim at allowing a \emph{single} process to undo
its actions in a stepwise manner, including the interactions with
other processes, following a rollback fashion. In this context, we
must ensure \emph{causal consistency} \cite{DK04}, i.e., an action
cannot be undone until all the actions that depend on it have already
been undone. E.g., if a process $\mathrm{p1}$ spawns a process $\mathrm{p2}$, we cannot undo
the spawning of process $\mathrm{p2}$ until all the actions performed by the
process $\mathrm{p2}$ are undone too. This is particularly challenging in an
asynchronous and distributed setting, where ensuring causal
consistency for backward computations is far from trivial.

In this paper, we consider a simple Erlang-like language that can be
seen as a subset of \emph{Core Erlang} \cite{CGJLNPV04}.  We present
the following contributions:
\begin{itemize}
\item First, we introduce an appropriate semantics for the
  language. In contrast to previous semantics like that in
  \cite{CMRT13tr} which were monolithic, ours is modular, which
  simplifies the definition of a reversible extension. Here, we follow
  some of the ideas in~\cite{SFB10}, e.g., the use of a global mailbox
  (there called ``ether'').
  There are also some differences though.
  In the semantics of \cite{SFB10}, at the expression level, the
  semantics of a receive statement is, in
  principle, infinitely branching, since their formulation allows for
  an infinite number of possible queues and selected messages (see
  \cite[page 53]{Fre01} for a detailed explanation).  This
  source of nondeterminism is avoided in our semantics.  

\item We then introduce a reversible semantics that can go both
  forward and backward (basically, a Landauer embedding), in a
  nondeterministic fashion, called an \emph{uncontrolled} reversible
  semantics according to the terminology in~\cite{LMT14}. Here, we
  focus on the concurrent actions (namely, process spawning, message
  sending and receiving) and, thus, we do not define a reversible
  semantics for the functional component of the language;
  rather, we assume that the state of the process---the current
  expression and its environment---is stored in the history after each
  execution step. This approach could be improved following, e.g., the
  techniques presented in \cite{MHNHT07,NPV16,TA15}.  We state and formally
  prove several properties of the semantics and, particularly, its
  causal consistency.

\item Finally, we add control to the reversible semantics by
  introducing a \emph{rollback operator} that can be used to undo 
  the actions of a given process until a given
  checkpoint---introduced by the programmer---is reached. In order to
  ensure causal consistency, the rollback action might be propagated
  to other, dependent processes.
\end{itemize}
This paper is an extended version of \cite{NPV16b}. Compared to
\cite{NPV16b}, we introduce an uncontrolled reversible semantics and
prove a number of fundamental theoretical properties, including its
causal consistency. The rollback semantics, originally introduced in
\cite{NPV16b}, has been refined and improved (see
Section~\ref{sec:relwork} for more details).

The paper is organised as follows. The syntax and semantics of the
considered language are presented in Sections~\ref{sec:syntax} and
\ref{sec:semantics}, respectively. Our (uncontrolled) reversible
semantics is then introduced in Section~\ref{sec:revsem}, while the
rollback operator is defined in Section~\ref{sec:rollbacksem}.  A
proof-of-concept implementation of the reversible semantics is
described in Section~\ref{sec:imple}.  Finally, some related work is
discussed in Section~\ref{sec:relwork}, and Section~\ref{sec:conc}
concludes and points out some directions for future work.

\section{Language Syntax} \label{sec:syntax} 

In this section, we present the syntax of a first-order concurrent and
distributed functional language that follows the actor model. Our
language is equivalent to a subset of Core Erlang \cite{CGJLNPV04}.

\begin{figure}[t]
  \begin{center}
  $
  \begin{array}{rcl@{~~~~~~}l}
    \mathit{module} & ::= & \mathsf{module} ~ Atom = 
    \mathit{fun}_1~\ldots~\mathit{fun}_n\\
    {\mathit{fun}} & ::= & \mathit{fname} = \mathsf{fun}~(\mathit{Var}_1,\ldots,\mathit{Var}_n) \to expr \\
    {\mathit{fname}} & ::= & Atom/\mathit{Integer} \\
    lit & ::= & Atom \mid \mathit{Integer} \mid \mathit{Float} \mid
                \mathit{Pid} \mid \nil \\
    expr & ::= & \mathit{Var} \mid lit \mid \mathit{fname} \mid [expr_1|expr_2]
                 \mid   \{expr_1,\ldots,expr_n\} \\
    & \mid & \mathsf{call}~Op~(expr_1,\ldots,expr_n) 
    \mid \mathsf{apply}~\mathit{fname}~(expr_1,\ldots,expr_n) \\
    & \mid &
    \mathsf{case}~expr~\mathsf{of}~clause_1;\ldots;clause_m~\mathsf{end}\\
    & \mid & \mathsf{let}~\mathit{Var}=expr_1~\mathsf{in}~expr_2 
    \mid \mathsf{receive}~clause_1;\ldots;clause_n~\mathsf{end}\\
    & \mid & \mathsf{spawn}(\mathit{fname},[expr_1,\ldots,expr_n])  
     \mid expr \:!\: expr \mid \mathsf{self}()\\
    clause & ::= & pat ~\mathsf{when}~expr_1 \to expr_2
    \\
    pat & ::= & \mathit{Var} \mid lit \mid [pat_1|pat_2] \mid
    \{pat_1,\ldots,pat_n\} \\
  \end{array}
  $
  \end{center}
\caption{Language syntax rules} \label{ErlangSyntax}
\end{figure}

The syntax of the language can be found in Figure~\ref{ErlangSyntax}.
Here, a module is a sequence of function definitions, where each
function name $f/n$ (atom/arity) has an associated definition of the
form $\mathsf{fun}~(X_1,\ldots,X_n) \to e$. We consider that a program
consists of a single module for simplicity. The body of a function is
an \emph{expression}, which can include variables, 
literals, function names, lists, tuples, calls
to built-in functions---mainly arithmetic and relational operators---,
function applications, case expressions, let bindings, and receive
expressions; furthermore, we also include the functions
$\mathsf{spawn}$, ``$!$'' (for sending a message), and
$\mathsf{self}()$ that are usually considered built-ins in the Erlang
language. As is common practice, we assume that $X$ is a fresh
variable in a let binding of the form
$\mathsf{let}~X=\mathit{expr}_1~\mathsf{in}~\mathit{expr}_2$.

As shown by the syntax in Figure~\ref{ErlangSyntax}, we only consider
first-order expressions. Therefore, the first argument in applications
and spawns is a function name (instead of an arbitrary expression or
closure). Analogously, the first argument in calls is a built-in operation $Op$.

In this language, we distinguish expressions, patterns, and values. 
Here, \emph{patterns} are built from variables, 
literals, lists, and tuples, while \emph{values} are built from
literals, 
lists, and tuples, i.e., they are \emph{ground}---without
variables---patterns. Expressions are denoted by
$e,e',e_1,e_2,\ldots$, patterns by $pat$, $pat'$, $pat_1$, $pat_2,\ldots$ and
values by $v,v',v_1,v_2,\ldots$ Atoms are typically denoted with
roman letters, while variables start with an uppercase letter.
As it is common practice, a \emph{substitution} $\theta$ is a mapping
from variables to expressions, and $\dom(\theta) =
\{X\in\mathit{Var} \mid X \neq \theta(X)\}$ is its
domain.\!\footnote{Since we consider an eager language, variables are bound 
to values.}
Substitutions are usually denoted by sets of bindings like,
e.g., $\{X_1\mapsto v_1,\ldots,X_n\mapsto v_n\}$.
Substitutions are extended to morphisms from expressions to
expressions in the natural way.  
The identity substitution is denoted by $\id$. Composition of
substitutions is denoted by juxtaposition, i.e., $\theta\theta'$
denotes a substitution $\theta''$ such that $\theta''(X) =
\theta'(\theta(X))$ for all $X\in\mathit{Var}$. Also, we denote by
$\theta[X_1\mapsto v_1,\ldots,X_n\mapsto v_n]$ the \emph{update} of
$\theta$ with the mapping $\{X_1\mapsto v_1,\ldots,X_n\mapsto v_n\}$,
i.e., it denotes a new substitution $\theta'$ such that $\theta'(X) =
v_i$ if $X = X_i$, for some $i\in\{1,\ldots,n\}$, and $\theta'(X) =
\theta(X)$ otherwise. 

In a case expression
``$\mathsf{case}~e~\mathsf{of} ~ pat_1~\mathsf{when}~e_1 \to e'_1;
~\ldots; ~ pat_n~\mathsf{when}~e_n \to e'_n~~\mathsf{end} $''\!\!, we
first evaluate $e$ to a value, say $v$; then, we should find (if any)
the first clause $pat_i ~\mathsf{when}~e_i\to e'_i$ such that $v$
matches $pat_i$ (i.e., there exists a substitution $\sigma$ for the
variables of $pat_i$ such that $v=pat_i\sigma$) and $e_i\sigma$---the
\emph{guard}---reduces to $\emph{true}$; then, the case expression
reduces to $e'_i\sigma$. Note that guards can only contain calls to
built-in functions (typically, arithmetic and relational operators).

As for the concurrent features of the language, we consider that a
\emph{system} is a pool of processes that can only interact through
message sending and receiving (i.e., there is no shared memory). Each
process has an associated \textit{pid} (process identifier), which is
unique in a system. As in Erlang, we consider a specific type or
domain Pid for pids. Furthermore, in this work, we assume that pids can
only be introduced in a computation from the evaluation of functions
$\mathsf{spawn}$ and $\mathsf{self}$ (see below).
By abuse of notation, when no confusion can arise, we refer to a
process with its pid.

An expression of the form $\mathsf{spawn}(f/n,[e_1,\ldots,e_n])$ has,
as a \emph{side effect}, the creation of a new process, with a fresh
pid $p$, initialised with the expression
$\mathsf{apply}~f/n~(v_1,\ldots,v_n)$, where $v_1,\ldots,v_n$ are the
evaluations of $e_1,\ldots,e_n$, respectively; the expression
$\mathsf{spawn}(f/n,[e_1,\ldots,e_n])$ itself evaluates to the new pid
$p$.
The function $\mathsf{self}()$ just returns the pid of the current
process.  
An expression of the form $e_1\:!\: e_2$, where $e_1$ evaluates to a pid $p$ and $e_2$ to a value $v$, also evaluates to the value $v$ and,
as a side effect, the value $v$---the \emph{message}---will be stored in the
queue or \emph{mailbox} of process $p$ at some point in the future.

Finally, an expression
``$ \mathsf{receive}~pat_1~\mathsf{when}~e_1\to
e'_1;\ldots;pat_n~\mathsf{when}~e_n\to e'_n~~\mathsf{end} $''
traverses the messages in the process' queue until one of them matches
a branch in the receive statement; i.e., it should find the
\emph{first} message $v$ in the process' queue (if any) such that
$\mathsf{case}~v~\mathsf{of}~pat_1~\mathsf{when}~e_1\to
e'_1;\ldots;pat_n~\mathsf{when}~e_n\to e'_n~\mathsf{end}$ can be
reduced; then, the receive expression evaluates to the same expression
to which the above case expression would be evaluated, with the
additional side effect of deleting the message $v$ from the process'
queue.
If there is no matching message in the queue, the process suspends its
execution until a matching message arrives.

\begin{example} \label{ex1} Consider the program shown in
  Figure~\ref{fig:ex1}, where the symbol ``$\_$'' is used to denote an
  \emph{anonymous} variable, i.e., a variable whose name is not
  relevant. The computation starts with ``$ \mathsf{apply}~\mathrm{main}/0~()
  $.\!'' This creates a process, say $\mathrm{p1}$. Then, $\mathrm{p1}$
  spawns two new processes, say $\mathrm{p2}$ and $\mathrm{p3}$, and
  then sends the message $\mathrm{hello}$ to process $\mathrm{p3}$
  and the message $\{\mathrm{p3},\mathrm{world}\}$ to process
  $\mathrm{p2}$, which then resends $\mathrm{world}$ to
  $\mathrm{p3}$. Note that we consider that variables $P2$ and $P3$
  are bound to pids $\mathrm{p2}$ and $\mathrm{p3}$, respectively.

  In our language, there is no guarantee regarding which message
  arrives first to $\mathrm{p3}$, i.e., both interleavings (a) and (b)
  in Figure~\ref{fig:ex1aux} are possible (resulting in function
  $\mathrm{target}/0$ returning either
  $\{\mathrm{hello},\mathrm{world}\}$ or
  $\{\mathrm{world},\mathrm{hello}\}$).  This is coherent with the
  semantics of Erlang, where the only guarantee is that if two
  messages are sent from process $p$ to process $p'$, and both are
  delivered, then the order of these messages is
  kept.\footnote{Current implementations only guarantee this
    restriction within the same node though.}

  \begin{figure}[t]
    \[
    \begin{array}{r@{~}l@{~~}r@{~}l}
      \mathrm{main}/0 = \mathsf{fun}~()\to & \mathsf{let}~P2 =
      \mathsf{spawn}(\mathrm{echo}/0,\nil) \\
      & \mathsf{in}~\mathsf{let}~P3 = \mathsf{spawn}(\mathrm{target}/0,\nil) \\
      & \mathsf{in}~\mathsf{let}~\_ = P3\:!\: \mathrm{hello} \\
      & \mathsf{in}~P2\:!\:\{P3,\mathrm{world}\}\\[1ex]
      \mathrm{target}/0 = \mathsf{fun}~()\to & \mathsf{receive}\\
      & \hspace{3ex}A \to \mathsf{receive} \\
      & \hspace{11ex}B\to \{A,B\}\\
      & \hspace{8.5ex}\mathsf{end}& \\
      &\mathsf{end}\\[1ex]
      \mathrm{echo}/0 = \mathsf{fun}~()\to & \mathsf{receive} \\
      & \hspace{3ex}\{P,M\}\to P\:!\: M~\\
      &\mathsf{end} \\
    \end{array}
    \]
    \caption{A simple concurrent program} \label{fig:ex1}
  \end{figure}

  \begin{figure}[t]
    \footnotesize
     \[
     \xymatrix@R=8pt@C=2pt{
         \underline{\mathrm{p1}} \ar@{..}[d] 
        & \underline{\mathrm{p2}} \ar@{..}[ddd] &
        \underline{\mathrm{p3}} \ar@{..}[ddddd] \\
        \mathrm{p3}\:!\:\mathrm{hello} \ar@{..}[d] \ar[drr] & & \\
        \mathrm{p2}\:!\:\{\mathrm{p3},\mathrm{world}\} \ar[dr] & & \\
        & \mathrm{p3}\:!\:\mathrm{world} \ar[dr] & \\
        & & \\
        & & \\
      } 
\hspace{12ex}
      \xymatrix@R=8pt@C=2pt{
        & \underline{\mathrm{p1}} \ar@{..}[d] 
        & \underline{\mathrm{p2}} \ar@{..}[ddd] &
        \underline{\mathrm{p3}} \ar@{..}[ddddd] \\
        & \mathrm{p3}\:!\:\mathrm{hello} \ar@{..}[d] \ar `l /20pt [d] `[dddd] [ddddrr] & & \\
        & \mathrm{p2}\:!\:\{\mathrm{p3},\mathrm{world}\} \ar[dr] & & \\
        & & \mathrm{p3}\:!\:\mathrm{world} \ar[dr] & \\
        & & & \\
        & & & \\
      } 
      \]  
      \centering (a)\hspace{40ex}(b)
  \caption{Admissible interleavings in Example~\ref{ex1}} \label{fig:ex1aux}
  \end{figure}
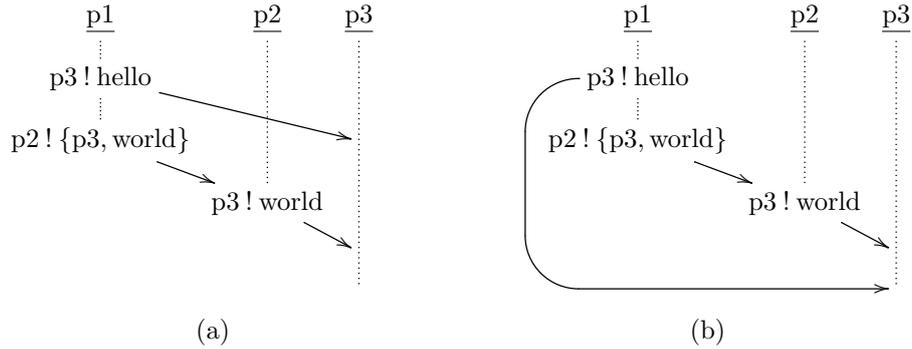
\end{example}

\section{The Language Semantics} \label{sec:semantics} 

In order to precisely set the framework for our proposal, in this
section we formalise the semantics of the considered language.

\begin{definition}[Process]
  A process is denoted by a tuple $\tuple{p,(\theta,e),q}$ where $p$
  is the pid of the process, $(\theta,e)$ is the control---which
  consists of an environment (a substitution) and an expression to be
  evaluated---and $q$ is the process' mailbox, a FIFO queue with the
  sequence of messages that have been sent to the process.

  We consider the following operations on local mailboxes. Given a
  message $v$ and a local mailbox $q$, we let $v:q$ denote a new
  mailbox with message $v$ on top of it (i.e., $v$ is the newer
  message). We also denote with $q\backslash\!\!\backslash v$ a new
  queue that results from $q$ by removing the oldest occurrence of
  message $v$ (which is not necessarily the oldest message in the
  queue).
\end{definition}
A running \emph{system} can then be seen as a pool of processes, which
we formally define as follows:

\begin{definition}[System]
  A system is denoted by $\Gamma;\Pi$, where $\Gamma$, the \emph{global
  mailbox}, is a multiset of pairs of the form
  $(target\_process\_pid,message)$, and $\Pi$ is a pool of processes,
  denoted by an expression of the form
  \[
    \tuple{p_1,(\theta_1,e_1),q_1} ~\comp \cdots
    \comp~\tuple{p_n,(\theta_n,e_n),q_n}
  \]
  where ``$\comp$'' denotes an associative and commutative operator.
  Given a global mailbox $\Gamma$, we let $\Gamma\cup\{(p,v)\}$ denote
  a new mailbox also including the pair $(p,v)$, where we use
  ``$\cup$'' as multiset union.

  We often denote a system by an expression of the form
  $ \Gamma; \tuple{p,(\theta,e),q}\comp\Pi $ to point out that
  $\tuple{p,(\theta,e),q}$ is an arbitrary process of the pool (thanks
  to the fact that ``$\comp$'' is associative and commutative).
\end{definition}
Intuitively, $\Gamma$ stores messages after they are sent, and before
they are inserted in the target mailbox, hence it models messages
which are in the network.  The use of $\Gamma$ (which is similar to
the ``ether'' in \cite{SFB10}) is needed to guarantee that all message
interleavings admissible in an asynchronous communication model (where
the order of messages is not preserved) can be generated by our
semantics.

In the following, we denote by $\ol{o_n}$ a sequence of syntactic
objects $o_1,\ldots,o_n$ for some $n$. We also write $\ol{o_{i,j}}$
for the sequence $o_i,\ldots,o_j$ when $i\leq j$ (and the empty
sequence otherwise).  We write $\ol{o}$ when the number of elements is
not relevant.

\begin{figure}[t]
  \footnotesize
  \[
  \begin{array}{c}
    (\mathit{Var}) ~ {\displaystyle \frac{}{\theta,X
        \arro{\tau} \theta,\theta(X)}} 

    \hspace{2ex}

    (\mathit{Tuple}) ~ {\displaystyle 
      \frac{\theta,e_i \arro{\ell}
        \theta',e'_i}{\theta,\{\ol{v_{1,{i-1}}},e_i,\ol{e_{{i+1},n}}\}
        \arro{\ell} \theta',
        \{\ol{v_{1,{i-1}}},e'_i,\ol{e_{{i+1},n}}\}}}\\[4ex] 

    (\mathit{List1})  ~{\displaystyle 
      \frac{\theta,e_1 \arro{\ell}
        \theta',e'_1}{\theta,[e_1|e_2]
        \arro{\ell} \theta',
        [e'_1|e_2]}} 

    \hspace{2ex}

    (\mathit{List2}) ~ {\displaystyle 
      \frac{\theta,e_2 \arro{\ell}
        \theta',e'_2}{\theta,[v_1|e_2]
        \arro{\ell} \theta',
        [v_1|e'_2]}} \\[4ex]

      (\mathit{Let1}) ~ {\displaystyle \frac{\theta,e_1
          \arro{\ell} \theta',e'_1 }{\theta,\mathsf{let}~ 
          X=e_1~\mathsf{in}~e_2
          \arro{\ell} \theta',\mathsf{let}~ 
          X=e'_1~\mathsf{in}~e_2}}

      \hspace{1ex}

      (\mathit{Let2}) ~ {\displaystyle \frac{}{\theta,\mathsf{let}~ 
          X=v~\mathsf{in}~e
          \arro{\tau} \theta[X\mapsto v],e}} \\[4ex]
 
     (\mathit{Case1}) ~ {\displaystyle
        \frac{\theta,e\arro{\ell}
          \theta',e'}{\begin{array}{l}
            \theta,\mathsf{case}~e~\mathsf{of}~cl_1;\ldots;cl_n~\mathsf{end}
        \arro{\ell}
        \theta',\mathsf{case}~e'~\mathsf{of}~cl_1;\ldots;cl_n~\mathsf{end}\\
      \end{array}}}\\[4ex]
  
      (\mathit{Case2}) ~ {\displaystyle
        \frac{\mathsf{match}(\theta,v,cl_1,\ldots,cl_n) = \tuple{\theta_i,e_i}}{\theta,\mathsf{case}~v~\mathsf{of}~cl_1;\ldots;cl_n~\mathsf{end}
        \arro{\tau} \theta\theta_i,e_i}} \\[4ex]

    (\mathit{Call1}) ~ {\displaystyle
    \frac{\theta,e_i\arro{\ell}
    \theta',e'_i~~~i\in\{1,\ldots,n\}}{\theta,\mathsf{call}~op~(\ol{v_{1,i-1}},e_i,\ol{e_{i+1,n}})
    \arro{\ell}
    \theta',\mathsf{call}~op~(\ol{v_{1,i-1}},e'_i,\ol{e_{i+1,n}})}} \\[5ex]

    (\mathit{Call2}) ~ {\displaystyle
      \frac{\mathsf{eval}(op,v_1,\ldots,v_n)=v}{\theta,\mathsf{call}~op~(v_1,\ldots,v_n)
        \arro{\tau} \theta,v}} \\[4ex]

    (\mathit{Apply1}) ~ {\displaystyle
      \frac{\theta,e_i\arro{\ell}
        \theta',e'_i~~~i\in\{1,\ldots,n\}}{\theta,\mathsf{apply}~a/n~(\ol{v_{1,i-1}},e_i,\ol{e_{i+1,n}}) 
        \arro{\ell}
        \theta',\mathsf{apply}~a/n~(\ol{v_{1,i-1}},e'_i,\ol{e_{i+1,n}})}}\\[5ex] 

    (\mathit{Apply2}) ~ {\displaystyle
      \frac{\mu(a/n) = \mathsf{fun}~(X_1,\ldots,X_n)\to e}{\theta,\mathsf{apply}~a/n~(v_1,\ldots,v_n)
        \arro{\tau} \theta\cup\{X_1\mapsto v_1,\ldots,X_n\mapsto v_n\},e}} 
  \end{array}
  \]
\caption{Standard semantics: evaluation of sequential expressions} \label{fig:seq-rules}
\end{figure}

\begin{figure}[t]
 \footnotesize
  \[
  \begin{array}{r@{~}c}
      (\mathit{Send1}) & {\displaystyle 
      \frac{\theta,e_1 \arro{\ell} \theta',e'_1}{\theta,e_1\:!\: e_2 \arro{\ell}
        \theta',e'_1\:!\: e_2} ~~~~(\mathit{Send2}) ~\frac{\theta,e_2 \arro{\ell} \theta',e'_2}{\theta,v_1\:!\: e_2 \arro{\ell}
        \theta',v_1\:!\: e'_2} 
      }\\[4ex]

      (\mathit{Send3}) & {\displaystyle
          \frac{}{\theta,v_1\:!\: v_2 \arro{\mathsf{send}(v_1,v_2)} \theta,v_2}
          }\\[4ex]

    (\mathit{Receive}) & {\displaystyle
      \frac{}{\theta,\mathsf{receive}~cl_1;\ldots;cl_n~\mathsf{end}
        \arro{\mathsf{rec}(\kappa,\ol{cl_n})}
        \theta,\kappa
      }
      }\\[4ex]

     (\mathit{Spawn1}) & {\displaystyle
       \frac{\theta,e_i\arro{\ell}
        \theta',e'_i~~~i\in\{1,\ldots,n\}}{\theta,\mathsf{spawn}(a/n,[\ol{v_{1,i-1}},e_i,\ol{e_{i+1,n}}])
         \arro{\ell} \theta',\mathsf{spawn}(a/n,[\ol{v_{1,i-1}},e'_i,\ol{e_{i+1,n}}])         
       }}\\[4ex]

     (\mathit{Spawn2}) & {\displaystyle
       \frac{}{\theta,\mathsf{spawn}(a/n,[\ol{v_n}])
         \arro{\mathsf{spawn}(\kappa,a/n,[\ol{v_n}])} \theta,\kappa         
       }}\\[4ex]

    (\mathit{Self}) & {\displaystyle
     \frac{}{\theta,\mathsf{self}() \arro{\mathsf{self}(\kappa)} \theta,\kappa}}
  \end{array}
  \]
\caption{Standard semantics: evaluation of concurrent expressions} \label{fig:concurrent-rules}
\end{figure}

The semantics is defined by means of two transition relations: $\too$
for expressions and $\hoo$ for systems. Let us first consider the
labelled transition relation
\[
\too\; : (Env,Exp)\times Label\times(Env,Exp)
\]
where $Env$ and $Exp$ are the domains of environments (i.e.,
substitutions) and expressions, respectively, and $Label$ denotes an
element of the set 
\[
\{\tau, \mathsf{send}(v_1,v_2),
\mathsf{rec}(\kappa,\ol{cl_n}), \mathsf{spawn}(\kappa,a/n,[\ol{v_n}]),
\mathsf{self}(\kappa)\}
\]
whose meaning will be explained below. We use $\ell$ to range over labels.
For clarity, we divide the transition rules of the semantics for
expressions in two sets: rules for sequential expressions are depicted
in Figure~\ref{fig:seq-rules}, while rules for concurrent ones are in
Figure~\ref{fig:concurrent-rules}.\footnote{By abuse, we include the
  rule for $\mathsf{self}()$ together with the concurrent actions.}
Note, however, that concurrent expressions can occur inside sequential
expressions.

Most of the rules are self-explanatory. In the following, we only
discuss some subtle or complex issues. In principle, the transitions
are labelled either with $\tau$ (a sequential reduction without side
effects) or with a label that identifies the reduction of a (possibly
concurrent) action with some side-effects. Labels are used in the
system rules (Figure~\ref{fig:system-rules}) to determine the
associated side effects and/or the information to be retrieved.

As in Erlang, we consider that the order of evaluation of the
arguments in a tuple, list, etc., is fixed from left to right.

For case evaluation, we assume an auxiliary function $\mathsf{match}$
which selects the first clause, $cl_i = (pat_i~\mathsf{when}~e'_i\to
e_i)$, such that $v$ matches $pat_i$, i.e., $v=\theta_i(pat_i)$, and
the guard holds, i.e., $ \theta\theta_i,e'_i \too^\ast \theta',true
$. As in Core Erlang, we assume that the patterns can only contain
fresh variables (but guards might have bound variables, thus we
pass the current environment $\theta$ to function $\mathsf{match}$).
Note that, for simplicity, we assume here that if the argument $v$
matches no clause then the evaluation is blocked.\footnote{This is not
  an issue in practice since, when an Erlang program is
  translated to the intermediate representation Core Erlang, a
  catch-all clause is added to every case expression in order to deal
  with pattern matching errors.}

Functions can either be defined in the program (in this case they are
invoked by $\mathsf{apply}$) or be a built-in (invoked by $\mathsf{call}$). In
the latter case, they are evaluated using the auxiliary function
$\mathsf{eval}$.
In rule $\mathit{Apply2}$, we consider that the mapping $\mu$ stores
all function definitions in the program, i.e., it maps every function name $a/n$ to a
copy of its definition $\mathsf{fun}~(X_1,\ldots,X_n)\to e$, where
$X_1,\ldots,X_n$ are (distinct) 
fresh variables and are the only variables that may occur free in $e$.
As for the applications, note that we only consider first-order
functions. In order to extend our semantics to also consider
higher-order functions, one should reduce the function name to a
\emph{closure} of the form $(\theta',\mathsf{fun}~(X_1,\ldots,X_n)\to
e)$.
We skip this extension since it is
orthogonal to our contribution.

Let us now consider the evaluation of concurrent expressions that
produce some side effect (Figure~\ref{fig:concurrent-rules}). Here, we
can distinguish two kinds of rules. On the one hand, we have rules
$\mathit{Send1}$, $\mathit{Send2}$ and $\mathit{Send3}$ for ``$!$''. In this case, we
know \emph{locally} what the expression should be reduced to (i.e.,
$v_2$ in rule $\mathit{Send3}$). For the remaining rules, this is not
known locally and, thus, we return a fresh distinguished symbol,
$\kappa$---by abuse, $\kappa$ is dealt with as a variable---so that
the system rules of Figure~\ref{fig:system-rules} 
will eventually bind $\kappa$ to its correct value:\footnote{Note that
  $\kappa$ takes values on the domain $expr \cup Pid$, in contrast to
  ordinary variables that can only be bound to values.
}
the selected expression in rule $\mathit{Receive}$ and a pid in
rules $\mathit{Spawn}$ and $\mathit{Self}$.
In these cases, the label of the transition contains all the information needed
by system rules to perform the evaluation at the system level, including the symbol $\kappa$.
This \emph{trick} allows us to keep the rules for expressions and
systems separated (i.e., the semantics shown in
Figures~\ref{fig:seq-rules} and \ref{fig:concurrent-rules} is mostly
independent from the rules in Figure~\ref{fig:system-rules}), in
contrast to other Erlang semantics, e.g., \cite{CMRT13tr}, where they are
combined into a single transition relation.

\begin{figure}[t]
  \footnotesize
  \[
  \begin{array}{r@{~~}c}
    (\mathit{Seq}) & {\displaystyle
      \frac{\theta,e\arro{\tau} \theta',e'
      }{\Gamma;\tuple{p,(\theta,e),q}\comp \Pi \hoo
      \Gamma;\tuple{p,(\theta',e'),q}\comp \Pi}
      }\\[4ex]

    (\mathit{Send}) & {\displaystyle
      \frac{\theta,e \arro{\mathsf{send}(p'',v)}
        \theta',e'}{\Gamma;\tuple{p,(\theta,e),q} 
        \comp \Pi \hoo \Gamma\cup (p'',v);\tuple{p,(\theta',e'),q}\comp \Pi}
      }\\[4ex]

      (\mathit{Receive}) & {\displaystyle
        \frac{\theta,e \arro{\mathsf{rec}(\kappa,\ol{cl_n})}
          \theta',e'~~~ \mathsf{matchrec}(\theta,\ol{cl_n},q) =
         (\theta_i,e_i,v)} {\Gamma;\tuple{p,(\theta,e),q}\comp \Pi \hoo
          \Gamma;\tuple{p,(\theta'\theta_i,e'\{\kappa\mapsto
            e_i\}),q\backslash\!\!\backslash v}\comp \Pi}
      }\\[4ex]
      
      (\mathit{Spawn}) & {\displaystyle
        \frac{\theta,e \arro{\mathsf{spawn}(\kappa,a/n,[\ol{v_n}])}
          \theta',e'~~~ p'~\mbox{is a fresh pid}}{\Gamma;\tuple{p,(\theta,e),q} 
          \comp \Pi \hoo \Gamma;\tuple{p,(\theta',e'\{\kappa\mapsto p'\}),q}\comp \tuple{p',(\id,\mathsf{apply}~a/n~(\ol{v_n})),\nil} 
          \comp \Pi}
      }\\[4ex]

    (\mathit{Self}) & {\displaystyle
      \frac{\theta,e \arro{\mathsf{self}(\kappa)} \theta',e'}{\Gamma;\tuple{p,(\theta,e),q} 
        \comp \Pi \hoo \Gamma;\tuple{p,(\theta',e'\{\kappa\mapsto p\}),q} 
        \comp \Pi }
      }\\[4ex]

    (\mathit{Sched}) & {\displaystyle
      \frac{~}{\Gamma\cup\{(p,v)\};\tuple{p,(\theta,e),q}\comp\Pi 
          \hoo \Gamma;\tuple{p,(\theta,e),v\cons q}\comp\Pi}
      }
  \end{array}
  \]
\caption{Standard semantics: system rules} \label{fig:system-rules}
\end{figure}

Finally, we consider the system rules, which are depicted in
Figure~\ref{fig:system-rules}. In most of the transition rules, we
consider an arbitrary system of the form
$ \Gamma;\tuple{p,(\theta,e),q}\:\comp\: \Pi $, where $\Gamma$ is
the global mailbox and $\tuple{p,(\theta,e),q}\:\comp\: \Pi$ is a
pool of processes that contains at least one process
$\tuple{p,(\theta,e),q}$.
Let us briefly describe the system rules. 

Rule $\mathit{Seq}$ just updates the control $(\theta, e)$ of the
considered process when a sequential expression is reduced using the
expression rules.

Rule $\mathit{Send}$ adds the pair $(p'',v)$ to the global mailbox
$\Gamma$ instead of adding it to the queue of process $p''$. This is
necessary to ensure that all possible message interleavings are
correctly modelled (as discussed in Example~\ref{ex1}).
Observe that $e'$ is usually different from
$v$ since $e$ may have different nested operators. E.g., if $e$ has
the form ``$\mathsf{case}~\mathrm{p}\:!\:
v~\mathsf{of}~\{\ldots\}$,\!'' then $e'$ will be ``$\mathsf{case}~
v~\mathsf{of}~\{\ldots\}$'' with label $\mathsf{send}(\mathrm{p},v)$.

In rule $\mathit{Receive}$, we use the auxiliary function
$\mathsf{matchrec}$ to evaluate a receive expression.  The main
difference w.r.t.\ $\mathsf{match}$ is that $\mathsf{matchrec}$ also
takes a queue $q$ and returns the selected message $v$.  More
precisely, function $\mathsf{matchrec}$ scans the queue $q$ looking
for the \emph{first} message $v$ matching a pattern of the receive
statement.
Then, $\kappa$ is bound to the expression in the selected clause,
$e_i$, and the environment is extended with the matching substitution.
If no message in the queue $q$ matches any clause, then the rule is
not applicable and the selected process cannot be reduced (i.e., it
suspends). As in case expressions, we assume that the patterns can
only contain fresh variables.

The rules presented so far allow one to store messages in the global mailbox, but not to remove messages from it.
This is precisely the task of the scheduler, which is
modelled by rule $\mathit{Sched}$. This rule nondeterministically
chooses a pair $(p,v)$ in the global mailbox $\Gamma$ and delivers the
message $v$ to the target process $p$. Here, we deliberately ignore
the restriction mentioned in Example~\ref{ex1}: ``the messages
sent---directly---between two given processes arrive in the same order
they were sent", since current implementations only guarantee it
within the same node.
In practice, ignoring this restriction amounts
to consider that each process is potentially run in a different node.
An alternative definition ensuring this
  restriction can be found in \cite{NPV16b}.

\begin{example} \label{ex1b} Consider again the program shown in
  Example~\ref{ex1}. Figures~\ref{fig:ex1b1} and \ref{fig:ex1b2} show
  a derivation from ``$ \mathsf{apply}~\mathrm{main}/0~() $'' where the call to
  function $\mathrm{target}$ reduces to
  $\{\mathrm{world},\mathrm{hello}\}$, as discussed in
  Example~\ref{ex1} (i.e., the interleaving shown in
  Figure~\ref{fig:ex1} (b)). Processes' pids are denoted with
  $\mathrm{p1}$, $\mathrm{p2}$ and $\mathrm{p3}$. For clarity, we
  label each transition step with the applied rule and underline the
  reduced expression.

  \begin{figure}[p]
    \scriptsize
    \[
      \begin{array}{l@{~}l@{~}l}
       & \{\:\}; & \tuple{\mathrm{p1},
          (\id,\underline{\mathsf{apply}~\mathrm{main}/0~()}),\nil} \\[1ex]

        \hoo_\mathit{Seq} & \{\:\}; & \tuple{\mathrm{p1},
          (\id,\mathsf{let}~P2=\underline{\mathsf{spawn}(\mathrm{echo}/0,\nil)}~\mathsf{in}~\ldots),\nil} \\[1ex]

        \hoo_\mathit{Spawn} & \{\:\}; & \tuple{\mathrm{p1},
          (\id,\underline{\mathsf{let}~P2=\mathrm{p2}~\mathsf{in}~\ldots}),\nil}\\
        & & \comp~\tuple{\mathrm{p2},(\id,\mathsf{apply}~\mathrm{echo}/0~\nil),\nil} \\[1ex]

        \hoo_\mathit{Seq} & \{\:\}; & \tuple{\mathrm{p1},
          (\{P2\mapsto\mathrm{p2}\},\mathsf{let}~P3=\underline{\mathsf{spawn}(\mathrm{target}/0,\nil)}~\mathsf{in}~\ldots),\nil}\\
        & & \comp~\tuple{\mathrm{p2},(\id,\mathsf{apply}~\mathrm{echo}/0~\nil),\nil} \\[1ex] 

        \hoo_\mathit{Spawn} & \{\:\}; & \tuple{\mathrm{p1},
          (\{P2\mapsto\mathrm{p2}\},\underline{\mathsf{let}~P3=\mathrm{p3}~\mathsf{in}~\ldots},\nil)}\\
        & & \comp~\tuple{\mathrm{p2},(\id,\mathsf{apply}~\mathrm{echo}/0~\nil),\nil} \\
        & & \comp~\tuple{\mathrm{p3},(\id,\mathsf{apply}~\mathrm{target}/0~\nil),\nil} \\[1ex] 

        \hoo_\mathit{Seq} & \{\:\}; & \tuple{\mathrm{p1},
          (\{P2\mapsto\mathrm{p2},P3\mapsto\mathrm{p3}\},\mathsf{let}~\_=\underline{P3}\:
                                    !\:\mathrm{hello}~\mathsf{in}~\ldots,\nil)}\\
        & & \comp~\tuple{\mathrm{p2},(\id,\mathsf{apply}~\mathrm{echo}/0~\nil),\nil} \\
        & & \comp~\tuple{\mathrm{p3},(\id,\mathsf{apply}~\mathrm{target}/0~\nil),\nil} \\[1ex] 

        \hoo_\mathit{Seq} & \{\:\}; & \tuple{\mathrm{p1},
          (\{P2\mapsto\mathrm{p2},P3\mapsto\mathrm{p3}\},\mathsf{let}~\_=\underline{\mathrm{p3}\:
                                    !\:\mathrm{hello}}~\mathsf{in}~\ldots,\nil)}\\
        & & \comp~\tuple{\mathrm{p2},(\id,\mathsf{apply}~\mathrm{echo}/0~\nil),\nil} \\
        & & \comp~\tuple{\mathrm{p3},(\id,\mathsf{apply}~\mathrm{target}/0~\nil),\nil} \\[1ex] 

        \hoo_\mathit{Send} & \{m_1\}; & \tuple{\mathrm{p1},
          (\{P2\mapsto\mathrm{p2},P3\mapsto\mathrm{p3}\},\underline{\mathsf{let}~\_=\mathrm{hello
                                        }~\mathsf{in}~\ldots},\nil)}\\
        & & \comp~\tuple{\mathrm{p2},(\id,\mathsf{apply}~\mathrm{echo}/0~\nil),\nil} \\
        & & \comp~\tuple{\mathrm{p3},(\id,\mathsf{apply}~\mathrm{target}/0~\nil),\nil} \\[1ex] 

        \hoo_\mathit{Seq} & \{m_1\}; & \tuple{\mathrm{p1},
          (\{P2\mapsto\mathrm{p2},P3\mapsto\mathrm{p3}\},\underline{P2}\:!\:\{P3,\mathrm{world}\},\nil)}\\
        & & \comp~\tuple{\mathrm{p2},(\id,\mathsf{apply}~\mathrm{echo}/0~\nil),\nil} \\
        & & \comp~\tuple{\mathrm{p3},(\id,\mathsf{apply}~\mathrm{target}/0~\nil),\nil} \\[1ex] 

        \hoo_\mathit{Seq} & \{m_1\}; & \tuple{\mathrm{p1},
          (\{P2\mapsto\mathrm{p2},P3\mapsto\mathrm{p3}\},\mathrm{p2}\:!\:\{\underline{P3},\mathrm{world}\},\nil)}\\
        & & \comp~\tuple{\mathrm{p2},(\id,\mathsf{apply}~\mathrm{echo}/0~\nil),\nil} \\
        & & \comp~\tuple{\mathrm{p3},(\id,\mathsf{apply}~\mathrm{target}/0~\nil),\nil} \\[1ex] 

        \hoo_\mathit{Seq} & \{m_1\}; & \tuple{\mathrm{p1},
          (\{P2\mapsto\mathrm{p2},P3\mapsto\mathrm{p3}\},\underline{\mathrm{p2}\:!\:\{\mathrm{p3},\mathrm{world}\}},\nil)}\\
        & & \comp~\tuple{\mathrm{p2},(\id,\mathsf{apply}~\mathrm{echo}/0~\nil),\nil} \\
        & & \comp~\tuple{\mathrm{p3},(\id,\mathsf{apply}~\mathrm{target}/0~\nil),\nil} \\[1ex] 

        \hoo_\mathit{Send} & \{m_1,m_2\}; & \tuple{\mathrm{p1},
          (\{P2\mapsto\mathrm{p2},P3\mapsto\mathrm{p3}\},\{\mathrm{p3},\mathrm{world}\},\nil)}\\
        & & \comp~\tuple{\mathrm{p2},(\id,\underline{\mathsf{apply}~\mathrm{echo}/0~\nil}),\nil} \\
        & & \comp~\tuple{\mathrm{p3},(\id,\mathsf{apply}~\mathrm{target}/0~\nil),\nil} \\[1ex] 

        \hoo_\mathit{Seq} & \{m_1,m_2\}; & \tuple{\mathrm{p1},
          (\{P2\mapsto\mathrm{p2},P3\mapsto\mathrm{p3}\},\{\mathrm{p3},\mathrm{world}\},\nil)}\\
        & &
              \comp~ \tuple{\mathrm{p2},(\id,\mathsf{receive}~\{P,M\}\to P\:!\:M~\mathsf{end}),\nil} \\
        & & \comp~\tuple{\mathrm{p3},(\id, \underline{\mathsf{apply}~\mathrm{target}/0~\nil}),\nil} \\[1ex] 

        \hoo_\mathit{Seq} & \{m_1,m_2\}; & \tuple{\mathrm{p1},
          (\{P2\mapsto\mathrm{p2},P3\mapsto\mathrm{p3}\},\{\mathrm{p3},\mathrm{world}\},\nil)}\\
        & &
              \comp~ \tuple{\mathrm{p2},(\id,\mathsf{receive}~\{P,M\}\to P\:!\:M~\mathsf{end}),\nil} \\
        & &
              \comp~ \tuple{\mathrm{p3},(\id,\mathsf{receive}~A\to\ldots~\mathsf{end}),\nil} \\
      \end{array}
    \]
    \caption{A derivation from ``$ \mathsf{apply}~\mathrm{main}/0~() $'', where
      $m_1 = (\mathrm{p3},\mathrm{hello})$,
      $m_2 = (\mathrm{p2},\{\mathrm{p3},\mathrm{world}\}
      )$, and $m_3 = (\mathrm{p3},\mathrm{world})$ (part
      1/2)} \label{fig:ex1b1}
  \end{figure}

  \begin{figure}[p]
    \scriptsize
    \[
      \begin{array}{l@{~}l@{~}l}
        \hoo_\mathit{Sched} & \{m_1\}; & \tuple{\mathrm{p1},
          (\{P2\mapsto\mathrm{p2},P3\mapsto\mathrm{p3}\},\{\mathrm{p3},\mathrm{world}\},\nil)}\\
        & &
              \comp~ \tuple{\mathrm{p2},(\id,\underline{\mathsf{receive}~\{P,M\}\to P\:!\:M~\mathsf{end}}), [\{\mathrm{p3},\mathrm{world}\}]} \\
        & & \comp~ \tuple{\mathrm{p3},(\id,\mathsf{receive}~A\to\ldots~\mathsf{end}),\nil} \\[1ex] 

        \hoo_\mathit{Receive} & \{m_1\}; & \tuple{\mathrm{p1},
          (\{P2\mapsto\mathrm{p2},P3\mapsto\mathrm{p3}\},\{\mathrm{p3},\mathrm{world}\},\nil)}\\
        & &
              \comp~ \tuple{\mathrm{p2},(\{P\mapsto \mathrm{p3},M\mapsto \mathrm{world}\},\underline{P}\:!\:M), \nil} \\
        & & \comp~ \tuple{\mathrm{p3},(\id,\mathsf{receive}~A\to\ldots~\mathsf{end}),\nil} \\[1ex] 

        \hoo_\mathit{Seq} & \{m_1\}; & \tuple{\mathrm{p1},
          (\{P2\mapsto\mathrm{p2},P3\mapsto\mathrm{p3}\},\{\mathrm{p3},\mathrm{world}\},\nil)}\\
        & &
              \comp~ \tuple{\mathrm{p2},(\{P\mapsto \mathrm{p3},M\mapsto \mathrm{world}\},\mathrm{p3}\:!\:\underline{M}), \nil} \\
        & & \comp~ \tuple{\mathrm{p3},(\id,\mathsf{receive}~A\to\ldots~\mathsf{end}),\nil} \\[1ex] 

        \hoo_\mathit{Seq} & \{m_1\}; & \tuple{\mathrm{p1},
          (\{P2\mapsto\mathrm{p2},P3\mapsto\mathrm{p3}\},\{\mathrm{p3},\mathrm{world}\},\nil)}\\
        & &
              \comp~ \tuple{\mathrm{p2},(\{P\mapsto \mathrm{p3},M\mapsto \mathrm{world}\},\underline{\mathrm{p3}\:!\:\mathrm{world}}), \nil} \\
        & & \comp~ \tuple{\mathrm{p3},(\id,\mathsf{receive}~A\to\ldots~\mathsf{end}),\nil} \\[1ex] 

        \hoo_\mathit{Send} & \{m_1,m_3\}; & \tuple{\mathrm{p1},
          (\{P2\mapsto\mathrm{p2},P3\mapsto\mathrm{p3}\},\{\mathrm{p3},\mathrm{world}\},\nil)}\\
        & &
              \comp~ \tuple{\mathrm{p2},(\{P\mapsto \mathrm{p3},M\mapsto \mathrm{world}\},\mathrm{world}), \nil} \\
        & & \comp~ \tuple{\mathrm{p3},(\id,\mathsf{receive}~A\to\ldots~\mathsf{end}),\nil} \\[1ex] 

        \hoo_\mathit{Sched} & \{m_1\}; & \tuple{\mathrm{p1},
          (\{P2\mapsto\mathrm{p2},P3\mapsto\mathrm{p3}\},\{\mathrm{p3},\mathrm{world}\},\nil)}\\
        & &
              \comp~ \tuple{\mathrm{p2},(\{P\mapsto \mathrm{p3},M\mapsto \mathrm{world}\},\mathrm{world}), \nil} \\
        & & \comp~ \tuple{\mathrm{p3},(\id,\underline{\mathsf{receive}~A\to\ldots~\mathsf{end}}),[\mathrm{world}]} \\[1ex] 

        \hoo_\mathit{Receive} & \{m_1\}; & \tuple{\mathrm{p1},
          (\{P2\mapsto\mathrm{p2},P3\mapsto\mathrm{p3}\},\{\mathrm{p3},\mathrm{world}\},\nil)),\nil}\\
        & &
              \comp~ \tuple{\mathrm{p2},(\{P\mapsto \mathrm{p3},M\mapsto \mathrm{world}\},\mathrm{world}), \nil} \\
        & & \comp~ \tuple{\mathrm{p3},(\{A\mapsto \mathrm{world} \},\mathsf{receive}~B\to \{A,B\}~\mathsf{end}),\nil} \\[1ex] 

        \hoo_\mathit{Sched} & \{\:\}; & \tuple{\mathrm{p1},
          (\{P2\mapsto\mathrm{p2},P3\mapsto\mathrm{p3}\},\{\mathrm{p3},\mathrm{world}\},\nil)),\nil}\\
        & &
              \comp~ \tuple{\mathrm{p2},(\{P\mapsto \mathrm{p3},M\mapsto \mathrm{world}\},\mathrm{world}), \nil} \\
        & & \comp~ \tuple{\mathrm{p3},(\{A\mapsto \mathrm{world} \},\underline{\mathsf{receive}~B\to \{A,B\}~\mathsf{end}}),[\mathrm{hello}]} \\[1ex] 

       \hoo_\mathit{Receive} & \{\:\}; & \tuple{\mathrm{p1},
          (\{P2\mapsto\mathrm{p2},P3\mapsto\mathrm{p3}\},\{\mathrm{p3},\mathrm{world}\},\nil)),\nil}\\
        & &
              \comp~ \tuple{\mathrm{p2},(\{P\mapsto \mathrm{p3},M\mapsto \mathrm{world}\},\mathrm{world}), \nil} \\
        & & \comp~ \tuple{\mathrm{p3},(\{A\mapsto
            \mathrm{world},B\mapsto \mathrm{hello} \},\{\underline{A},B\}),\nil} \\[1ex] 

       \hoo_\mathit{Seq} & \{\:\}; & \tuple{\mathrm{p1},
          (\{P2\mapsto\mathrm{p2},P3\mapsto\mathrm{p3}\},\{\mathrm{p3},\mathrm{world}\},\nil)),\nil}\\
        & &
              \comp~ \tuple{\mathrm{p2},(\{P\mapsto \mathrm{p3},M\mapsto \mathrm{world}\},\mathrm{world}), \nil} \\
        & & \comp~ \tuple{\mathrm{p3},(\{A\mapsto
            \mathrm{world},B\mapsto \mathrm{hello} \},\{\mathrm{world},\underline{B}\}),\nil} \\[1ex] 

       \hoo_\mathit{Seq} & \{\:\}; & \tuple{\mathrm{p1},
          (\{P2\mapsto\mathrm{p2},P3\mapsto\mathrm{p3}\},\{\mathrm{p3},\mathrm{world}\},\nil)),\nil}\\
        & &
              \comp~ \tuple{\mathrm{p2},(\{P\mapsto \mathrm{p3},M\mapsto \mathrm{world}\},\mathrm{world}), \nil} \\
        & & \comp~ \tuple{\mathrm{p3},(\{A\mapsto
            \mathrm{world},B\mapsto \mathrm{hello} \},\{\mathrm{world},\mathrm{hello}\}),\nil} \\[1ex]

      \end{array}
    \]
    \caption{A derivation from ``$ \mathsf{apply}~\mathrm{main}/0~() $'', where
      $m_1 = (\mathrm{p3},\mathrm{hello})$,
      $m_2 = (\mathrm{p2},\{\mathrm{p3},\mathrm{world}\}
      )$, and $m_3 = (\mathrm{p3},\mathrm{world})$ (part
      2/2)} \label{fig:ex1b2}
  \end{figure}
  
\end{example}

\subsection{Erlang Concurrency} \label{sec:concur}

In order to define a causal-consistent reversible semantics for Erlang
we need not only an interleaving semantics such as the one we just
presented, but also a notion of concurrency (or, equivalently, the
opposite notion of conflict). While concurrency is a main feature of
Erlang, as far as we know no formal definition of the concurrency
model of Erlang exists in the literature. We propose below one such
definition.

Given systems $s_1,s_2$, we call $s_1 \hoo^\ast s_2$ a
\emph{derivation}.
One-step derivations are simply called \emph{transitions}. We use
$d,d',d_1,\ldots$ to denote derivations and $t, t', t_1,\ldots$ for
transitions.
We label transitions as follows: $s_1 \hoo_{p,r} s_2$
where\footnote{Note that $p,r$ in $\hoo_{p,r}$ are not parameters of
  the transition relation $\hoo$ but just labels with some information
  on the reduction step. This information becomes useful to formally
  define the notion of concurrent transitions.}
\begin{itemize}
\item $p$ is the pid of the selected process in the
  transition or of the process to which a message is delivered (if the
  applied rule is $\mathit{Sched})$;
\item $r$ is the label of the applied transition rule.
\end{itemize}
We ignore some labels when they are clear from the context.

Given a derivation $d = (s_1 \hoo^\ast s_2)$, we define $\init(d) =
s_1$ and $\final(d) = s_2$.
Two derivations, $d_1$ and $d_2$, are \emph{composable} if
$\final(d_1) = \init(d_2)$. In this case, we let $d_1;d_2$ denote their
composition with $d_1;d_2 = (s_1 \hoo s_2 \hoo \cdots \hoo s_n \hoo
s_{n+1} \hoo \cdots \hoo s_m)$ if $d_1 = (s_1 \hoo s_2 \hoo \cdots
\hoo s_n)$ and $d_2 = (s_n \hoo s_{n+1} \hoo \cdots \hoo s_m)$.
Two derivations, $d_1$ and $d_2$, are said \emph{coinitial} if
$\init(d_1) = \init(d_2)$, and \emph{cofinal} if $\final(d_1) =
\final(d_2)$.

We let $\epsilon_s$ denote the zero-step derivation
$s\hoo^\ast s$.

\begin{definition}[Concurrent transitions] \label{def:concurrent1}
  Given two coinitial transitions, $t_1 = (s \hoo_{p_1,r_1} s_1)$
  and $t_2 = (s \hoo_{p_2,r_2} s_2)$, we say that they are
  \emph{in conflict} if they consider the same
    process, i.e., $p_1 = p_2$, and either $r_1 = r_2 =
    \mathit{Sched}$ or one transition applies rule $\mathit{Sched}$
    and the other transition applies rule $\mathit{Receive}$.
  Two coinitial transitions are \emph{concurrent} if they are not in
  conflict.
\end{definition}
We show below that our definition of concurrent transitions makes sense.

\begin{lemma}[Square lemma] \label{lemma:square1} Given two coinitial
  concurrent transitions $t_1 = (s \hoo_{p_1,r_1} s_1)$ and
  $t_2 = (s \hoo_{p_2,r_2} s_2)$, there exist two cofinal
  transitions $t_2/t_1 = (s_1 \hoo_{p_2,r_2} s')$ and
  $t_1/t_2 = (s_2 \hoo_{p_1,r_1} s')$.  Graphically,
  \[
  \begin{minipage}{50ex}
  \xymatrix@C=50pt@R=20pt{
   s \ar@{^{(}->}[r]^{p_1,r_1} \ar@<1pt>@{^{(}->}[d]_{p_2,r_2} & s_1\\
   s_2 & 
  }
  \end{minipage}
  ~~
  \Longrightarrow
  ~~
  \begin{minipage}{50ex}
  \xymatrix@C=50pt@R=20pt{
   s \ar@{^{(}->}[r]^{p_1,r_1} \ar@{^{(}->}[d]_{p_2,r_2} & s_1 \ar@{^{(}->}[d]^{p_2,r_2}\\
   s_2 \ar@{^{(}->}[r]_{p_1,r_1} &  s'
  }
  \end{minipage}
  \]
\end{lemma}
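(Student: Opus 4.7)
My plan is to proceed by case analysis on whether $p_1 = p_2$ and on the pair of system rules $(r_1, r_2)$. The guiding observation is that each system rule touches only a localized slice of the configuration: a non-$\mathit{Sched}$ rule fired by process $p$ updates only $p$'s control (and possibly adds a pair to $\Gamma$ in $\mathit{Send}$, consumes a message from $p$'s mailbox in $\mathit{Receive}$, or creates a new process in $\mathit{Spawn}$), while $\mathit{Sched}$ with label $p$ only moves a single pair out of $\Gamma$ onto $p$'s mailbox. Since $\Gamma$ is a multiset and $\comp$ is associative-commutative, modifications of disjoint parts automatically commute.

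I first handle the case $p_1 \neq p_2$. The process records of $p_1$ and $p_2$ in $\Pi$ are disjoint, so the control, environment and mailbox read by the premise of $t_2$ are untouched by $t_1$, and symmetrically; both transitions therefore remain enabled after the other has fired. For the interaction through $\Gamma$ I check the subcases $\mathit{Send}$--$\mathit{Send}$ (two commuting multiset additions), $\mathit{Send}$--$\mathit{Sched}$ (an addition commutes with the removal of a pre-existing pair, even if they happen to coincide, since $\Gamma$ is a multiset), and $\mathit{Sched}$--$\mathit{Sched}$ with distinct targets (independent multiset removals). Whenever $\mathit{Spawn}$ is involved, the two branches of the square are aligned by picking a pid that is fresh in every intermediate configuration, which is possible because $\mathit{Pid}$ is infinite. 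In each subcase the two closing transitions do land in a common $s'$ up to the associative-commutative structure of $\comp$ and multiset union.

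I then treat the case $p_1 = p_2$. The concurrency hypothesis rules out $\mathit{Sched}$--$\mathit{Sched}$ and $\mathit{Sched}$--$\mathit{Receive}$ on this process. If exactly one of the two transitions applies rule $\mathit{Sched}$, the other comes from an expression-level reduction different from $\mathit{Receive}$ and therefore neither reads nor writes $p_1$'s mailbox; the $\mathit{Sched}$, which only prepends to that mailbox, thus commutes with it and both orderings yield the same control $(\theta, e)$, the same mailbox and the same $\Gamma$. If neither transition is a $\mathit{Sched}$, then both fire expression-level rules on the same control $(\theta,e)$ of $p_1$; inspection of the rules of Figures~\ref{fig:seq-rules} and \ref{fig:concurrent-rules} shows the expression semantics is deterministic up to the fresh names introduced by $\mathit{Apply2}$, $\mathit{Receive}$, $\mathit{Spawn2}$ and $\mathit{Self}$, so after aligning those names $t_1$ and $t_2$ coincide and the square closes trivially with $s' = s_1 = s_2$.

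The main obstacle I expect is not any of the above commutations per se but the bookkeeping around fresh-name generation: the fresh pid produced by $\mathit{Spawn}$ and the fresh $\kappa$ produced by $\mathit{Receive}$, $\mathit{Spawn}$ and $\mathit{Self}$ must be picked so that the two corners of the diagram are syntactically identical, not merely $\alpha$-equivalent to one another. The cleanest way is to first show that the residual transitions $t_1/t_2$ and $t_2/t_1$ exist for any admissible choice of fresh names, and then fix those names on one side to match what was used on the opposite branch of the square.
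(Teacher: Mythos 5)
Your proposal is correct and takes essentially the same route as the paper's proof: a case analysis on the applied rules showing that concurrent transitions touch disjoint parts of the system (or perform commuting multiset operations on $\Gamma$), with the only genuinely conflicting situations---$\mathit{Sched}$--$\mathit{Sched}$ and $\mathit{Sched}$--$\mathit{Receive}$ on the same process---excluded by hypothesis. The paper organizes the cases by rule pair rather than by whether $p_1=p_2$, and leaves the fresh-name alignment and the degenerate same-process, non-$\mathit{Sched}$ case implicit, but the substance is identical.
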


\begin{proof}
  We have the following cases:
  \begin{itemize}
  \item Two transitions $t_1$ and $t_2$ where $r_1\neq \mathit{Sched}$
    and $r_2\neq \mathit{Sched}$. Trivially, they apply to different
    processes, i.e., $p_1\neq p_2$.  Then, we can easily prove that by
    applying rule $r_2$ to $p_1$ in $s_1$ and rule $r_1$ to $p_2$ in
    $s_2$ we have two transitions $t_1/t_2$ and $t_2/t_1$ which
    are cofinal.

  \item One transition $t_1$ which applies rule $r_1 = \mathit{Sched}$
    to deliver message $v_1$ to process $p_1 = \mathrm{p}$,
    and another transition which applies a rule $r_2$ different from
    $\mathit{Sched}$. All cases but $r_2 = \mathit{Receive}$ with $p_2
    = \mathrm{p}$
    are straightforward.
    This last case, though, cannot happen since
    transitions using rules $\mathit{Sched}$ and $\mathit{Receive}$
    are not concurrent.

  \item Two transitions $t_1$ and $t_2$ with rules $r_1 = r_2 =
    \mathit{Sched}$ delivering messages $v_1$ and
    $v_2$, respectively. Since the transitions are
    concurrent, they should deliver the messages to different
    processes, i.e., $p_1\neq p_2$. Therefore, we can see
    that delivering $v_2$ from $s_1$ and $v_1$ from
    $s_2$ we get two cofinal transitions. \qed
  \end{itemize}
\end{proof}
We remark here that other definitions of concurrent transitions are
possible. Changing the concurrency model would require to change the
stored information in the reversible semantics in order to preserve
causal consistency. We have chosen the notion above since it is
reasonably simple to define and to work with,
and captures most of the pairs of coinitial
transitions that satisfy the Square lemma.

\section{A Reversible Semantics for Erlang} \label{sec:revsem}

In this section, we introduce a reversible---uncontrolled---semantics
for the considered language. Thanks to the modular design of the
concrete semantics, the transition rules for the language expressions
need not be changed in order to define the reversible semantics.

To be precise, in this section we introduce two transition relations:
$\rh$ and $\lh$. The first relation, $\rh$, is a conservative
extension of the standard semantics $\hoo$
(Figure~\ref{fig:system-rules}) to also include some additional
information in the states, following a typical Landauer
embedding. We refer to $\rh$ as the \emph{forward} reversible
semantics (or simply the forward semantics). In contrast, the second
relation, $\lh$, proceeds in the backward direction, ``undoing''
actions step by step. We refer to $\lh$ as the backward (reversible)
semantics.  We denote the union $\rh\cup\lh$ by $\rlh$.

In the next section, we will introduce a rollback operator that starts
a reversible computation for a process. In order to avoid undoing all
actions until the beginning of the process, we will also let the
programmer introduce \emph{checkpoints}. Syntactically, they are
denoted with the built-in function \textsf{check}, which takes an
identifier $\mathtt{t}$ as an argument. Such identifiers are supposed
to be unique in the program.  Given an expression, $expr$, we can
introduce a checkpoint by replacing $expr$ with ``$
\mathsf{let}~X=\mathsf{check}(\mathtt{t})~\mathsf{in}~expr $''\!\!.  A
call of the form $\mathsf{check}(\mathtt{t})$ just returns
$\mathtt{t}$ (see below).
In the following, we consider that the rules to evaluate the language
expressions (Figures~\ref{fig:seq-rules} and
\ref{fig:concurrent-rules}) are extended with the following rule:
\[
  (\mathit{Check}) ~~ 
  {\displaystyle 
    \frac{}{\theta,\mathsf{check}(\mathtt{t}) \arro{\mathsf{check}(\mathtt{t})} \theta,\mathtt{t}}}
\]
In this section, we will mostly ignore checkpoints, but they will
become relevant in the next section.

  \begin{figure}[t]
    \footnotesize
     \[
     \xymatrix@R=8pt@C=10pt{
         \underline{\mathrm{p1}} \ar@{..}[dd] 
        & \underline{\mathrm{p2}} \ar@{..}[dd] &
        \underline{\mathrm{p3}} \ar@{..}[d] \\
        & & \mathrm{p2}\:!\:\mathrm{v} \ar[dl] \ar@{..}[ddd]\\
        \mathrm{p2}\:!\:\mathrm{v} \ar@{..}[dd] \ar[dr] & \fbox{$t_1$} \ar@{..}[d] & \\
        & \fbox{$t_2$} \ar@{..}[d] & \\
        & & \\
      } 
\hspace{12ex}
     \xymatrix@R=8pt@C=10pt{
         \underline{\mathrm{p1}} \ar@{..}[dd] 
        & \underline{\mathrm{p2}} \ar@{..}[dd] &
        \underline{\mathrm{p3}} \ar@{..}[d] \\
        & & \mathrm{p2}\:!\:\mathrm{v} \ar[ddl] \ar@{..}[ddd]\\
        \mathrm{p2}\:!\:\mathrm{v} \ar@{..}[dd] \ar[r] & \fbox{$t_1$}
        \ar@{..}[d] & \\
        & \fbox{$t_2$} \ar@{..}[d]  & \\
        & & \\
      } 
      \]  
      \centering (a)\hspace{30ex}(b)
     \[
     \xymatrix@R=8pt@C=30pt{
         \underline{\mathrm{p1}} \ar@{..}[d] 
        & \underline{\mathrm{p2}} \ar@{..}[dd] \\
        \mathrm{p2}\:!\:\mathrm{v} \ar[dr] \ar@{..}[d]& \\
        \mathrm{p2}\:!\:\mathrm{v} \ar[dr] \ar@{..}[dd] &  \fbox{$t_1$}\ar@{..}[d]
        \\
        & \fbox{$t_2$} \ar@{..}[d] \\
        & \\
      } 
\hspace{12ex}
     \xymatrix@R=8pt@C=30pt{
         \underline{\mathrm{p1}} \ar@{..}[d] 
        & \underline{\mathrm{p2}} \ar@{..}[dd] \\
        \mathrm{p2}\:!\:\mathrm{v} \ar[ddr] \ar@{..}[d]& \\
        \mathrm{p2}\:!\:\mathrm{v} \ar[r] \ar@{..}[dd] & \fbox{$t_1$} \ar@{..}[d] \\
        & \fbox{$t_2$} \ar@{..}[d] \\
        & \\
      } 
      \]  
      \centering (c)\hspace{30ex}(d)
      \caption{Interleavings and the need for unique identifiers for
        messages} \label{fig:timestamps}
\end{figure}
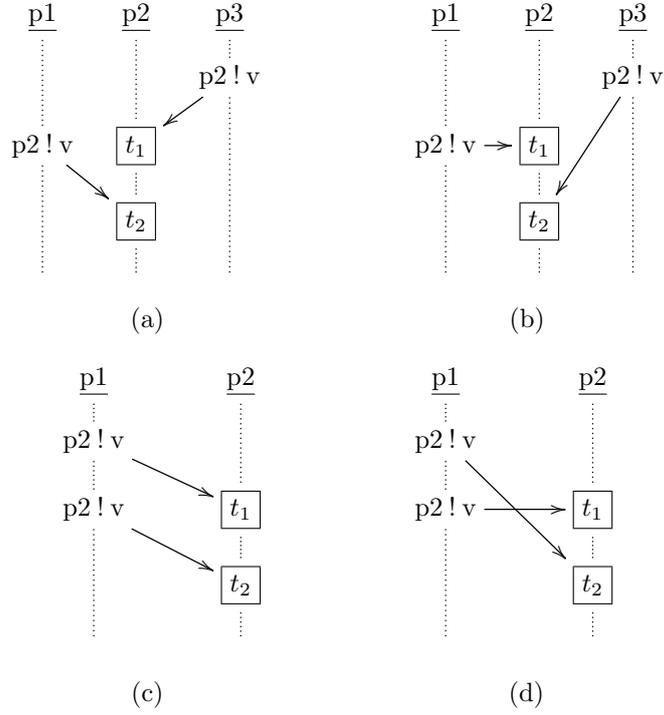

The most significant novelty in the forward semantics is that messages
now include a unique identifier (e.g., a timestamp $\k$). Let us illustrate
with some examples why we introduce these identifiers. Consider first
diagram (a) in Figure~\ref{fig:timestamps}, where two different
processes, $\mathrm{p1}$ and $\mathrm{p3}$, send the same message $v$
to process $\mathrm{p2}$. In order to undo the action $\mathrm{p2}\: !
\:\mathrm{v}$ in process $\mathrm{p3}$, one needs to first undo all
actions of $\mathrm{p2}$ up to $\fbox{$t_1$}$ (to ensure causal
consistency). However, currently, messages only store information
about the target process and the value sent, therefore it is not
possible to know whether it is safe to stop undoing actions at
$\fbox{$t_1$}$ or at $\fbox{$t_2$}$. Actually, the situations in
diagrams (a) and (b) are not distinguishable. In this case, it would
suffice to add the pid of the sender to every message in
order to avoid the confusion. However, this is not always
sufficient. Consider now diagram (c). Here, a process $\mathrm{p1}$
sends two identical messages to another process $\mathrm{p2}$ (which
is not unusual, say an ``ack'' after receiving a request). In this
case, in order to undo the first action $\mathrm{p2}\:!\:\mathrm{v}$ of
process $\mathrm{p1}$ one needs to undo all actions of process
$\mathrm{p2}$ up to $\fbox{$t_1$}$. However, we cannot distinguish
$\fbox{$t_1$}$ from $\fbox{$t_2$}$ unless some additional information
is taken into account (and considering triples of the form
$(source\_process\_pid,target\_process\_pid,message)$ would not
help). Therefore, one needs to introduce some unique identifier in
order to precisely distinguish case (c) from case (d).

Of course, we could have a less precise semantics where just the
message, $v$, is observable. However, that would make the backward
semantics unpredictable (e.g., we could often undo the ``wrong''
message delivery). Also, defining the corresponding notion of
\emph{conflicting} transitions (see Definition~\ref{def:concurrent} 
below) would be challenging, since one would like to have only a
conflict between the sending of a message $v$ and the ``last''
delivery of the same message $v$, which would be very
tricky. Therefore, in this paper, we prefer to assume that messages
can be uniquely distinguished.

\begin{figure}[p]
  \footnotesize
  \[
    \begin{array}{r@{~~}c}
    (\mathit{Seq}) & {\displaystyle
      \frac{\theta,e\arro{\tau} \theta',e'
      }{\Gamma;\tuple{p,\h,(\theta,e),q}\comp \Pi \rh
      \Gamma;\tuple{p,\tau(\theta,e)\cons\h,(\theta',e'),q}\comp \Pi}
      }\\[4ex]

    (\mathit{Check}) & {\displaystyle
      \frac{\theta,e \arro{\mathsf{check}(\mathtt{t})} \theta',e'}{\Gamma;\tuple{p,\h,(\theta,e),q} 
        \comp \Pi \rh \Gamma;\tuple{p,\mathsf{check}(\theta,e,\mathtt{t})\cons\h,(\theta',e'),q} 
        \comp \Pi }
      }\\[4ex]

    (\mathit{Send}) & {\displaystyle
      \frac{\theta,e \arro{\mathsf{send}(p'',v)}
        \theta',e'~~~\k~\mbox{is a fresh identifier}}{\Gamma;\tuple{p,\h,(\theta,e),q} 
        \comp \Pi \rh
           \Gamma\cup (p'',\{v,\k\});\tuple{p,\mathsf{send}(\theta,e,p'',\{v,\k\})\cons
           \h,(\theta',e'),q}\comp \Pi}
      }\\[4ex]

      (\mathit{Receive}) & {\displaystyle
        \frac{\theta,e \arro{\mathsf{rec}(\kappa,\ol{cl_n})}
          \theta',e'~~~ \mathsf{matchrec}(\theta,\ol{cl_n},q) = (\theta_i,e_i,\{v,\k\})}{\Gamma;\tuple{p,\h,(\theta,e),q}\comp \Pi \rh
          \Gamma;\tuple{p,\mathsf{rec}(\theta,e,\{v,\k\},q)\cons\h,(\theta'\theta_i,e'\{\kappa\mapsto e_i\}),q\backslash\!\!\backslash\{v,\k\}}\comp \Pi}
      }\\[4ex]
      
    (\mathit{Spawn}) & {\displaystyle
      \frac{\theta,e \arro{\mathsf{spawn}(\kappa,a/n,[\ol{v_n}])}
        \theta',e'~~~ p'~\mbox{is a fresh pid}}
{\begin{array}{ll}
          \Gamma;\tuple{p,\h,(\theta,e),q} 
        \comp \Pi \rh &
        \Gamma;\tuple{p,\mathsf{spawn}(\theta,e,p')\cons\h,(\theta',e'\{\kappa\mapsto
          p'\}),q} \\
        & \comp \tuple{p',\nil,(\id,\mathsf{apply}~a/n~(\ol{v_n})),\nil} 
        \comp \Pi
      \end{array}}
      }\\[6ex]

    (\mathit{Self}) & {\displaystyle
      \frac{\theta,e \arro{\mathsf{self}(\kappa)} \theta',e'}{\Gamma;\tuple{p,\h,(\theta,e),q} 
        \comp \Pi \rh \Gamma;\tuple{p,\mathsf{self}(\theta,e)\cons\h,(\theta',e'\{\kappa\mapsto p\}),q} 
        \comp \Pi }
      }\\[4ex]

      (\mathit{Sched}) & {\displaystyle
        \frac{~}{\Gamma\cup\{(p,\{v,\k\})\};\tuple{p,\h,(\theta,e),q}\comp\Pi 
          \rh \Gamma;\tuple{p,\h,(\theta,e),\{v,\k\}\cons q}\comp\Pi}
      }

  \end{array}
  \]
\caption{Forward reversible semantics} \label{fig:forwardsem}
\end{figure}

\begin{figure}[p]
  \footnotesize
  \[
  \begin{array}{r@{~~}c}
      (\mathit{\ol{Seq}}) & {\displaystyle
        \Gamma;
                            \tuple{p,\tau(\theta,e)\cons\h,(\theta',e'),q}
                            \:\comp\: \Pi
          \lh  \Gamma;\tuple{p,\h,(\theta,e),q}\:\comp\: \Pi
        }
      \\[2ex]

      (\mathit{\ol{Check}}) & {\displaystyle
        \Gamma;
                            \tuple{p,\mathsf{check}(\theta,e,\mathtt{t})\cons\h,(\theta',e'),q}
                            \:\comp\: \Pi
          \lh  \Gamma;\tuple{p,\h,(\theta,e),q}\:\comp\: \Pi
        } 
      \\[2ex]

    (\mathit{\ol{Send}}) & {\displaystyle
      \begin{array}{l}
        \Gamma\cup\{(p'',\{v,\k\})\};\tuple{p,\mathsf{send}(\theta,e,p'',\{v,\k\})\cons\h,(\theta',e'),q}\:\comp\:
        \Pi
        \lh \Gamma;\tuple{p,\h,(\theta,e),q} 
        \:\comp\: \Pi\\
    \end{array}
  }\\[3ex]

      (\mathit{\ol{Receive}}) & {\displaystyle
        \Gamma;\tuple{p,\mathsf{rec}(\theta,e,\{v,\k\},q)\cons\h,(\theta',e'), q\backslash\!\!\backslash\{v,\k\}}\:\comp\: \Pi
          \lh  \Gamma;\tuple{p,\h,(\theta,e),q}\:\comp\: \Pi
        }
      \\[2ex]
      
    (\mathit{\ol{Spawn}}) & {\displaystyle
      \begin{array}{l}
       \Gamma;\tuple{p,\mathsf{spawn}(\theta,e,p')\cons\h,(\theta',e'),q}
        \:\comp\: 
             \tuple{p',\nil,(\id,e''),\nil}
        \:\comp\: \Pi 
        \\
        \hspace{20ex}\lh
       \Gamma;\tuple{p,\h,(\theta,e),q}
        \:\comp\: \Pi 
      \end{array}
      }\\[3ex]

    (\mathit{\ol{Self}}) & {\displaystyle
       \Gamma;\tuple{p,\mathsf{self}(\theta,e)\cons\h,(\theta',e'),q} 
        \:\comp\: \Pi \lh
      \Gamma;\tuple{p,\h,(\theta,e),q} 
        \:\comp\: \Pi  
      }\\[2ex]

    (\mathit{\ol{Sched}}) & {\displaystyle
      \begin{array}{l}
      \Gamma;\tuple{p,\h,(\theta,e),\{v,\k\}\cons q} \: \comp\: \Pi \lh 
      \Gamma\cup(p,\{v,\k\});\tuple{p,\h,(\theta,e),q}\:\comp\:\Pi\\
      \hspace{20ex}\mbox{if the topmost $\mathsf{rec}(\ldots)$ item in
        $\h$ (if any) has the}\\ 
      \hspace{20ex}\mbox{form}~\mathsf{rec}(\theta',e',\{v',\k'\},q')~\mbox{with}~q'\backslash\!\!\backslash\{v',\k'\}\neq \{v,\k\}\cons q
    \end{array}
    }
  \end{array}
  \]
\caption{Backward reversible semantics} \label{fig:backwardsem}
\end{figure}

The transition rules of the forward reversible semantics can be found
in Figure~\ref{fig:forwardsem}. Processes now include a memory (or
\emph{history}) $\h$ that records the intermediate states of a process,
and messages have an associated unique identifier. In the memory, we use terms
headed by constructors $\tau$, $\mathsf{check}$, $\mathsf{send}$,
$\mathsf{rec}$, $\mathsf{spawn}$, and $\mathsf{self}$
to record the steps performed by the forward semantics. Note that
we could optimise the information stored in these terms by following a
strategy similar to that in \cite{MHNHT07,NPV16,TA15} for the
reversibility of functional expressions, but this is orthogonal to our
purpose in this paper, so we focus mainly on the concurrent actions.
Note also that the auxiliary function $\mathsf{matchrec}$ now deals
with messages of the form $\{v,\k\}$, which is a trivial
extension of the original function in the standard semantics by just
ignoring $\k$ when computing the first matching message.

\begin{example} \label{ex2} 

  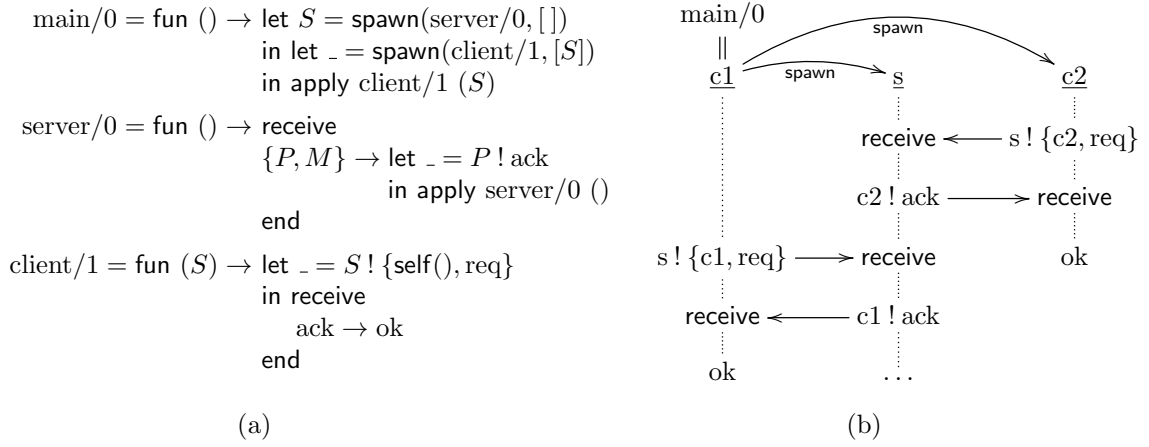
\begin{figure}[t]
    \footnotesize
    \centering
      \begin{minipage}{.45\linewidth}\hspace{-10ex}
        $
        \begin{array}{r@{~}ll}
        \mathrm{main}/0 = \mathsf{fun}~()\to & \mathsf{let}~S =
        \mathsf{spawn}(\mathrm{server}/0,\nil)\\
        & \mathsf{in}~\mathsf{let}~\_ = \mathsf{spawn}(\mathrm{client}/1,[S]) \\
        & \mathsf{in}~\mathsf{apply}~\mathrm{client}/1~(S)\\[1ex]

        \mathrm{server}/0 = \mathsf{fun}~()\to & \mathsf{receive}\\
        &\hspace{0ex}\{P,M\}\to \mathsf{let}~\_=P\:!\:
        \mathrm{ack}~\\
        &\hspace{11ex}\mathsf{in}~\mathsf{apply}~\mathrm{server}/0~()\\
        &\mathsf{end}\\[1ex]

        \mathrm{client}/1 = \mathsf{fun}~(S)\to & 
        \mathsf{let}~\_=S\:!\:\{\mathsf{self}(),\mathrm{req}\} \\
        & \mathsf{in}~\mathsf{receive}\\
        &\hspace{3ex} \mathrm{ack}\to \mathrm{ok}\\
        & \mathsf{end} \\
      \end{array}
      $
      \end{minipage}
      \hspace{5ex}
      \begin{minipage}{.35\linewidth}
      $
      \xymatrix@R=8pt@C=20pt{
        \mathrm{main}/0 \ar@{=}[d] & & \\
        \underline{\mathrm{c1}} \ar@{..}[ddd] \ar@/^/[r]_{\mathsf{spawn}}
        \ar@/^{8mm}/[rr]_{\mathsf{spawn}} 
        & \underline{\mathrm{s}} \ar@{..}[d] &
        \underline{\mathrm{c2}} \ar@{..}[d] \\
        & \mathsf{receive}
        \ar@{..}[d] &  \mathrm{s}\:!\:\{\mathrm{c2},\mathrm{req}\} \ar@{..}[d]
        \ar[l]\\
        & \mathrm{c2}\:!\: \mathrm{ack} \ar@{..}[d]
       \ar[r] & \mathsf{receive}
        \ar@{..}[d] \\
        \blue{\mathrm{s}\:!\: \{\mathrm{c1},\mathrm{req}\}} \ar@{..}[d] \ar[r] & \blue{\mathsf{receive}}
        \ar@{..}[d] &  \mathrm{ok} \\
        \blue{\mathsf{receive}} \ar@{..}[d]  &  \ar[l]  \blue{\mathrm{c1}\:!\:\mathrm{ack}} \ar@{..}[d] & \\
        \blue{\mathrm{ok}} & \ldots & \\ 
      } 
      $
    \end{minipage}\\[2ex]
    (a)\hspace{50ex}(b)
    \caption{A simple client-server} \label{fig:ex2-prog}
  \end{figure}

\begin{figure}[p]
    \scriptsize
    \[
      \begin{array}{l@{~}l@{~}l}
        & \{\:\}; & \l\mathrm{c1},\nil, (\id,C[\mathsf{apply}~\mathrm{main}/0~()]),\nil\r \\[1ex]

        \rh^\ast
        & \{\:\}; & \l\mathrm{c1},\nil, (\_,C[\mathsf{spawn}(\mathrm{server}/0,\nil)]),\nil\r \\[1ex]

        \rh_\mathit{Spawn} 
        & \{\:\}; & \l\mathrm{c1},[\mathsf{spawn}(\_,\_,\mathrm{s})],(\_,C[\underline{\mathsf{spawn}(\mathrm{client}/1,[\mathrm{s}])}]),\nil\r \\
        &&\comp\l\mathrm{s}, \nil,(\_,C[\mathsf{receive}~\{P,M\}\to\ldots]),\nil\r
        \\[1ex]

        \rh_\mathit{Spawn} 
        & \{\:\}; & \l\mathrm{c1},[\mathsf{spawn}(\_,\_,\mathrm{c2}),\mathsf{spawn}(\_,\_,\mathrm{s})],(\_,C[\mathrm{s}\:!\:\{\mathrm{c1},\mathrm{req}\}]),\nil\r \\
        &&\comp\l\mathrm{s}, \nil,(\_,C[\mathsf{receive}~\{P,M\}\to\ldots]),\nil\r\\
        &&\comp\l\mathrm{c2}, \nil,(\_,C[\underline{\mathrm{s}\:!\:\{\mathrm{c2}, \mathrm{req}\}}]),\nil\r
        \\[1ex]

        \rh_\mathit{Send} 
        & \{(\mathrm{s},m_1)\}; & \l\mathrm{c1},[\mathsf{spawn}(\_,\_,\mathrm{c2}),\mathsf{spawn}(\_,\_,\mathrm{s})],(\_,C[\mathrm{s}\:!\:\{\mathrm{c1},\mathrm{req}\}]),\nil\r \\
        &&\comp\l\mathrm{s}, \nil,(\_,C[\mathsf{receive}~\{P,M\}\to\ldots]),\nil\r\\
        &&\comp\l\mathrm{c2},[\mathsf{send}(\_,\_,\mathrm{s},m_1)],(\_,C[\mathsf{receive}~\mathrm{ack}\to \mathrm{ok}]),\nil\r
        \\[1ex]
 
       \rh_\mathit{Sched} 
        & \{\:\}; & \l\mathrm{c1},[\mathsf{spawn}(\_,\_,\mathrm{c2}),\mathsf{spawn}(\_,\_,\mathrm{s})],(\_,C[\mathrm{s}\:!\:\{\mathrm{c1},\mathrm{req}\}]),\nil\r \\
        &&\comp\l\mathrm{s}, \nil,(\_,C[\underline{\mathsf{receive}~\{P,M\}\to\ldots]}),[m_1]\r\\
        &&\comp\l\mathrm{c2},[\mathsf{send}(\_,\_,\mathrm{s},m_1)],(\_,C[\mathsf{receive}~\mathrm{ack}\to \mathrm{ok}]),\nil\r
        \\[1ex]

       \rh_\mathit{Receive} 
        & \{\:\}; & \l\mathrm{c1},[\mathsf{spawn}(\_,\_,\mathrm{c2}),\mathsf{spawn}(\_,\_,\mathrm{s})],(\_,C[\mathrm{s}\:!\:\{\mathrm{c1},\mathrm{req}\}]),\nil\r \\
        &&\comp\l\mathrm{s}, [\mathsf{rec}(\_,\_,m_1,[m_1])],(\_,C[\underline{\mathrm{c2}\:!\:\mathrm{ack}}]),\nil\r\\
        &&\comp\l\mathrm{c2},[\mathsf{send}(\_,\_,\mathrm{s},m_1)],(\_,C[\mathsf{receive}~\mathrm{ack}\to \mathrm{ok}]),\nil\r
        \\[1ex]

       \rh_\mathit{Send} 
        & \{(\mathrm{c2},m_2)\}; & \l\mathrm{c1},[\mathsf{spawn}(\_,\_,\mathrm{c2}),\mathsf{spawn}(\_,\_,\mathrm{s})],(\_,C[\mathrm{s}\:!\:\{\mathrm{c1},\mathrm{req}\}]),\nil\r \\
        &&\comp\l\mathrm{s}, [\mathsf{send}(\_,\_,\mathrm{c2},m_2),\mathsf{rec}(\_,\_,m_1,[m_1])],(\_,C[\mathsf{receive}~\{P,M\}\to\ldots]),\nil\r\\
        &&\comp\l\mathrm{c2},[\mathsf{send}(\_,\_,\mathrm{s},m_1)],(\_,C[\mathsf{receive}~\mathrm{ack}\to \mathrm{ok}]),\nil\r
        \\[1ex]

        \rh_\mathit{Sched} 
        & \{\;\}; & \l\mathrm{c1},[\mathsf{spawn}(\_,\_,\mathrm{c2}),\mathsf{spawn}(\_,\_,\mathrm{s})],(\_,C[\mathrm{s}\:!\:\{\mathrm{c1},\mathrm{req}\}]),\nil\r \\
        &&\comp\l\mathrm{s}, [\mathsf{send}(\_,\_,\mathrm{c2},m_2),\mathsf{rec}(\_,\_,m_1,[m_1])],(\_,C[\mathsf{receive}~\{P,M\}\to\ldots]),\nil\r\\
        &&\comp\l\mathrm{c2},[\mathsf{send}(\_,\_,\mathrm{s},m_1)],(\_,C[\underline{\mathsf{receive}~\mathrm{ack}\to \mathrm{ok}}]),[m_2]\r
        \\[1ex]

        \rh_\mathit{Receive} 
        & \{\;\}; & \l\mathrm{c1},[\mathsf{spawn}(\_,\_,\mathrm{c2}),\mathsf{spawn}(\_,\_,\mathrm{s})],(\_,C[\underline{\mathrm{s}\:!\:\{\mathrm{c1},\mathrm{req}\}}]),\nil\r \\
        &&\comp\l\mathrm{s}, [\mathsf{send}(\_,\_,\mathrm{c2},m_2),\mathsf{rec}(\_,\_,m_1,[m_1])],(\_,C[\mathsf{receive}~\{P,M\}\to\ldots]),\nil\r\\
        &&\comp\l\mathrm{c2},[\mathsf{rec}(\_,\_,m_2,[m_2]),\mathsf{send}(\_,\_,\mathrm{s},m_1)],(\_,\mathrm{ok}),\nil\r
        \\[1ex]

        \rh_\mathit{Send} 
        & \{(\mathrm{s},m_3)\}; & \l\mathrm{c1},[\mathsf{send}(\_,\_,\mathrm{s},m_3),\mathsf{spawn}(\_,\_,\mathrm{c2}),\mathsf{spawn}(\_,\_,\mathrm{s})],(\_,C[\mathsf{receive}~\mathrm{ack}\to\mathrm{ok}]),\nil\r \\
        &&\comp\l\mathrm{s}, [\mathsf{send}(\_,\_,\mathrm{c2},m_2),\mathsf{rec}(\_,\_,m_1,[m_1])],(\_,C[\mathsf{receive}~\{P,M\}\to\ldots]),\nil\r\\
        &&\comp\l\mathrm{c2},[\mathsf{rec}(\_,\_,m_2,[m_2]),\mathsf{send}(\_,\_,\mathrm{s},m_1)],(\_,\mathrm{ok}),\nil\r
        \\[1ex]

        \rh_\mathit{Sched} 
        & \{\;\}; & \l\mathrm{c1},[\mathsf{send}(\_,\_,\mathrm{s},m_3),\mathsf{spawn}(\_,\_,\mathrm{c2}),\mathsf{spawn}(\_,\_,\mathrm{s})],(\_,C[\mathsf{receive}~\mathrm{ack}\to\mathrm{ok}]),\nil\r \\
        &&\comp\l\mathrm{s}, [\mathsf{send}(\_,\_,\mathrm{c2},m_2),\mathsf{rec}(\_,\_,m_1,[m_1])],(\_,C[\underline{\mathsf{receive}~\{P,M\}\to\ldots}]),[m_3]\r\\
        &&\comp\l\mathrm{c2},[\mathsf{rec}(\_,\_,m_2,[m_2]),\mathsf{send}(\_,\_,\mathrm{s},m_1)],(\_,\mathrm{ok}),\nil\r
        \\[1ex]

        \rh_\mathit{Receive} 
        & \{\;\}; & \l\mathrm{c1},[\mathsf{send}(\_,\_,\mathrm{s},m_3),\mathsf{spawn}(\_,\_,\mathrm{c2}),\mathsf{spawn}(\_,\_,\mathrm{s})],(\_,C[\mathsf{receive}~\mathrm{ack}\to\mathrm{ok}]),\nil\r \\
        &&\comp\l\mathrm{s}, [\mathsf{rec}(\_,\_,m_3,[m_3]),\mathsf{send}(\_,\_,\mathrm{c2},m_2),\mathsf{rec}(\_,\_,m_1,[m_1])],(\_,C[\underline{\mathrm{c1}\:!\:\mathrm{ack}}]),\nil\r\\
        &&\comp\l\mathrm{c2},[\mathsf{rec}(\_,\_,m_2,[m_2]),\mathsf{send}(\_,\_,\mathrm{s},m_1)],(\_,\mathrm{ok}),\nil\r
        \\[1ex]

        \rh_\mathit{Send} 
        & \{(\mathrm{c1},m_4)\}; & \l\mathrm{c1},[\mathsf{send}(\_,\_,\mathrm{s},m_3),\mathsf{spawn}(\_,\_,\mathrm{c2}),\mathsf{spawn}(\_,\_,\mathrm{s})],(\_,C[\mathsf{receive}~\mathrm{ack}\to\mathrm{ok}]),\nil\r \\
        &&\comp\l\mathrm{s},
           [\mathsf{send}(\_,\_,\mathrm{c1},m_4),\mathsf{rec}(\_,\_,m_3,[m_3]),\mathsf{send}(\_,\_,\mathrm{c2},m_2),\mathsf{rec}(\_,\_,m_1,[m_1])],\\
        &&~~(\_,C[\mathsf{receive}~\{P,M\}\to\ldots]),\nil\r\\
        &&\comp\l\mathrm{c2},[\mathsf{rec}(\_,\_,m_2,[m_2]),\mathsf{send}(\_,\_,\mathrm{s},m_1)],(\_,\mathrm{ok}),\nil\r
        \\[1ex]

        \rh_\mathit{Sched} 
        & \{\;\}; & \l\mathrm{c1},[\mathsf{send}(\_,\_,\mathrm{s},m_3),\mathsf{spawn}(\_,\_,\mathrm{c2}),\mathsf{spawn}(\_,\_,\mathrm{s})],(\_,C[\underline{\mathsf{receive}~\mathrm{ack}\to\mathrm{ok}}]),[m_4]\r \\
        &&\comp\l\mathrm{s},
           [\mathsf{send}(\_,\_,\mathrm{c1},m_4),\mathsf{rec}(\_,\_,m_3,[m_3]),\mathsf{send}(\_,\_,\mathrm{c2},m_2),\mathsf{rec}(\_,\_,m_1,[m_1])],\\
        &&~~(\_,C[\mathsf{receive}~\{P,M\}\to\ldots]),\nil\r\\
        &&\comp\l\mathrm{c2},[\mathsf{rec}(\_,\_,m_2,[m_2]),\mathsf{send}(\_,\_,\mathrm{s},m_1)],(\_,\mathrm{ok}),\nil\r
        \\[1ex]

        \rh_\mathit{Receive} 
        & \{\;\}; & \l\mathrm{c1},[\mathsf{rec}(\_,\_,m_4,[m_4]),\mathsf{send}(\_,\_,\mathrm{s},m_3),\mathsf{spawn}(\_,\_,\mathrm{c2}),\mathsf{spawn}(\_,\_,\mathrm{s})],(\_,\mathrm{ok}),\nil\r \\
        &&\comp\l\mathrm{s},
           [\mathsf{send}(\_,\_,\mathrm{c1},m_4),\mathsf{rec}(\_,\_,m_3,[m_3]),\mathsf{send}(\_,\_,\mathrm{c2},m_2),\mathsf{rec}(\_,\_,m_1,[m_1])],\\
        &&~~(\_,C[\mathsf{receive}~\{P,M\}\to\ldots]),\nil\r\\
        &&\comp\l\mathrm{c2},[\mathsf{rec}(\_,\_,m_2,[m_2]),\mathsf{send}(\_,\_,\mathrm{s},m_1)],(\_,\mathrm{ok}),\nil\r
        \\
      \end{array}
    \]
    \caption{A derivation under the forward semantics, with
      $m_1 = \{\{\mathrm{c2},\mathrm{req}\},1\}$,
      $m_2 = \{\mathrm{ack},2\}$,
      $m_3 = \{\{\mathrm{c1},\mathrm{req}\},3\}$, and $m_4 = \{\mathrm{ack},4\}$.} \label{fig:ex2-der}
  \end{figure}
  Let us consider the program shown in Figure~\ref{fig:ex2-prog} (a),
  together with the execution trace sketched in
  Figure~\ref{fig:ex2-prog} (b).  
  Figure~\ref{fig:ex2-der}
  shows a high level account of the corresponding derivation under the
  forward semantics. For clarity, we consider the following
  conventions:
  \begin{itemize}
  \item Processes $\mathrm{client1}$, $\mathrm{client2}$ and
    $\mathrm{server}$ are denoted with $\mathrm{c1}$, $\mathrm{c2}$
    and $\mathrm{s}$, respectively.
  \item In the processes, we do not show the current
    environment. Moreover, we use the notation $C[e]$ to denote that
    $e$ is the redex to be reduced next and $C[\;]$ is an arbitrary
    (possibly empty) context. We also underline the selected redex
    when there are more than one (e.g., a redex in each process).
  \item In the histories, some arguments are denoted by ``$\_$'' since
    they are not relevant in the current derivation.
  \item Finally, we only show the steps performed with rules
    $\mathit{Spawn}$, $\mathit{Send}$, $\mathit{Receive}$ and
    $\mathit{Sched}$; the transition relation is labelled with the
    applied rule.
  \end{itemize}

\end{example}
We now prove that the forward semantics $\rh$ is a
conservative extension of the standard semantics $\hoo$.

In order to state the result, we let $\del(s)$ denote the system that
results from $s$ by removing the histories of the processes; formally,
$\del(\Gamma;\Pi) = \Gamma;\del'(\Pi)$, where
\[
  \begin{array}{lll}
    \del'(\tuple{p,\h,(\theta,e),q}) & = &
                                                  \tuple{p,(\theta,e),q}
    \\
    \del'(\tuple{p,\h,(\theta,e),q}\:\comp\:\Pi) & = &
                                                  \tuple{p,(\theta,e),q}\:\comp\:\del'(\Pi)
    \\
  \end{array}
\]
where we assume that $\Pi$ is not empty. 

We can now state the conservative extension result.

\begin{theorem} \label{th:conservative}
  Let $s_1$ be a system of the reversible semantics without
  occurrences of ``$\mathsf{check}$'' and $s'_1=\del(s_1)$ a system of
  the standard semantics.
  Then, $s'_1 \hoo^\ast s'_2$ iff $s_1 \rh^\ast s_2$
  and $\del(s_2) = s'_2$.
\end{theorem}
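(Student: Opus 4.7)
The plan is to prove the statement by induction on the length of the derivation, with the induction step reduced to a one-step correspondence lemma of the form: for every transition $s_1 \hoo s_2'$ of the standard semantics there exists a transition $s_1^{\rh} \rh s_2^{\rh}$ of the forward reversible semantics with $\del(s_1^{\rh}) = s_1$ and $\del(s_2^{\rh}) = s_2'$, and conversely every $s_1^{\rh} \rh s_2^{\rh}$ projects (via $\del$) to a transition $\del(s_1^{\rh}) \hoo \del(s_2^{\rh})$ in the standard semantics. I treat $\del$ as erasing the per-process history $\h$ \emph{and} erasing the identifier component $\k$ from every message $\{v,\k\}$, both in $\Gamma$ and in every local queue $q$. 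The base case of the induction (zero steps) is immediate since $\del(s_1)=s_1'$ by assumption.

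For the one-step lemma I would do a case analysis on the system rule applied. The key observation is that the rules of Figures~\ref{fig:system-rules} and \ref{fig:forwardsem} are in one-to-one correspondence: each reversible rule fires exactly when the standard rule of the same name does, differing only in (a) adding a history item on top of $\h$, (b) tagging newly sent messages with a fresh identifier $\k$, and (c) carrying $\k$ through the scheduler, receive and backward deletion of messages. Since the expression-level transitions $\too$ are shared verbatim between the two semantics, the premises for $\mathit{Seq}$, $\mathit{Send}$, $\mathit{Spawn}$ and $\mathit{Self}$ are interchangeable after $\del$. For $\mathit{Receive}$ one uses that $\mathsf{matchrec}$ ignores the identifier component when selecting the first matching message, so if $\mathsf{matchrec}(\theta,\ol{cl_n},q)$ picks $(\theta_i,e_i,v)$ in the standard semantics then, for the corresponding queue with identifiers, it picks $(\theta_i,e_i,\{v,\k\})$ with the same $\k$ appearing in that position; stripping identifiers gives back the standard-semantics transition. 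For $\mathit{Sched}$ the direction from reversible to standard is trivial, and the direction from standard to reversible requires picking a pair $(p,\{v,\k\})$ in $\Gamma$ whose stripped form equals the pair $(p,v)$ selected by the standard rule, which is possible because $\del$ is surjective on messages.

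From standard to reversible we must also supply fresh names whenever the reversible rule demands them: a fresh pid in $\mathit{Spawn}$ (required by both semantics, so the same choice works) and a fresh identifier $\k$ in $\mathit{Send}$ (not constrained by the standard semantics, so we can always pick one disjoint from those already used in $s_1^{\rh}$). Finally, the hypothesis that $s_1$ contains no occurrence of $\mathsf{check}$ is preserved along $\rh$ steps: no reduction rule synthesizes a new $\mathsf{check}(\mathtt{t})$ subterm out of expressions that did not already contain one (inspection of the expression rules in Figures~\ref{fig:seq-rules} and~\ref{fig:concurrent-rules} and of $\mu$ in $\mathit{Apply2}$, where the stored function bodies come from the program text), so rule $\mathit{Check}$ is never enabled and the correspondence covers every actual reduction.

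The main obstacle I anticipate is the bookkeeping around message identifiers in the $\mathit{Sched}$ and $\mathit{Receive}$ cases, where one must argue that the choice of which occurrence of an indistinguishable message (after $\del$) is delivered or consumed can always be matched on both sides. I would handle this by exhibiting, in the standard-to-reversible direction, an explicit injection from the multiset $\Gamma$ of the standard system to the multiset $\Gamma^{\rh}$ of the reversible system, maintained as an invariant of the induction; and analogously an injection on each local queue $q$. Given such an invariant, each $\mathit{Sched}$ and $\mathit{Receive}$ step on the standard side lifts canonically by delivering/consuming the chosen occurrence, and the invariant is preserved because the rules only prepend, remove, or rename entries in a way that commutes with $\del$.
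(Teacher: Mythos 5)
Your proposal is correct and follows essentially the same route as the paper's (much terser) proof: a rule-by-rule correspondence between Figures~\ref{fig:system-rules} and~\ref{fig:forwardsem}, with the only delicate point being that $\mathsf{matchrec}$ selects the oldest matching occurrence so that message identifiers do not affect $\mathit{Receive}$. Your explicit remark that $\del$ must also erase the identifier components $\k$ from messages (in $\Gamma$ and in local queues) is a reasonable and indeed necessary reading of the paper's definition of $\del$, which formally only strips histories.
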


\begin{proof}
  The proof is straightforward since the transition rules of the
  forward semantics in Figure~\ref{fig:forwardsem} are just annotated
  versions of the corresponding rules in
  Figure~\ref{fig:system-rules}.  The only tricky point is noticing
  that the introduction of unique identifiers for messages does not
  change the behaviour of rule $\mathit{Receive}$ since function
  $\mathsf{matchrec}$ always returns the oldest occurrence (in terms
  of position in the queue) of the selected message. \qed
\end{proof}
The transition rules of the backward semantics are shown in
Figure~\ref{fig:backwardsem}.  In general, all rules restore the
control (and, if it applies, also the queue) of the
process. Nevertheless, let us briefly discuss a few particular
situations:
\begin{itemize}
\item First, observe that rule $\ol{\mathit{Send}}$ can only be
  applied when the message sent is in the global
  mailbox. If this is not the case (i.e., the message has been
  delivered using rule $\mathit{Sched}$), then we should first apply
  backward steps to the receiver process until, eventually, the
  application of rule $\ol{\mathit{Sched}}$ puts the message back into
  the global mailbox and rule $\ol{\mathit{Send}}$ becomes
  applicable. This is required to ensure causal consistency. In the
  next section, we will introduce a particular strategy that achieves
  this effect in a controlled manner.

\item A similar situation occurs with rule
  $\ol{\mathit{Spawn}}$. Given a process $p$ with a history item
  $\mathsf{spawn}(\theta,e,p')$, rule $\ol{\mathit{Spawn}}$ cannot be
  applied until the history and the queue of process $p'$ are both empty. Therefore, one
  should first apply a number of backward steps to process $p'$
  in order to be able to undo the $\mathsf{spawn}$
  item.
  We note that there is no need to require that no message targeting
  the process $p'$ (which would become an \emph{orphan} message) is in the
  global mailbox: in order to send such a message the pid $p'$ is
  needed, hence the sending of the message depends on the
  $\mathsf{spawn}$ and, thus, it must be undone beforehand.

\item Observe too that rule $\ol{\mathit{Receive}}$ can only be
  applied when the queue of the process is exactly the same queue that
  was obtained after applying the corresponding (forward)
  $\mathit{Receive}$ step. This is necessary in order to ensure that
  the restored queue is indeed the right one (note that adding the
  message to an arbitrary queue would not work since we do not know
  the ``right'' position for the message).

\item In principle, there is some degree of freedom in the application
  of rule $\ol{\mathit{Sched}}$ since it does not interfere with the
  remaining rules, except for $\ol{\mathit{Receive}}$ and other
  applications of $\ol{\mathit{Sched}}$. Therefore, the application of
  rule $\ol{\mathit{Sched}}$ can be switched with the application of
  any other backward rule except for $\ol{\mathit{Receive}}$ or another
  $\ol{\mathit{Sched}}$. The fact that two $\ol{\mathit{Sched}}$
  (involving the same process) do not commute is ensured since
  $\ol{\mathit{Sched}}$ always applies to the most recent message of a
  queue.
  The fact that a $\ol{\mathit{Sched}}$ and a $\ol{\mathit{Receive}}$
  do not commute is ensured since the side condition of
  $\ol{\mathit{Sched}}$ checks that there is no $\mathsf{rec}(\ldots)$
  item in the history of the process that can be used to apply rule
  $\ol{Receive}$ with the current queue. 
  Hence, their applicability conditions do not overlap.
\end{itemize}

\begin{example} \label{ex2b}
  \begin{figure}[p]
    \scriptsize
    \[
      \begin{array}{l@{~}l@{~}l}
        & \{\;\}; & \l\mathrm{c1},[\mathsf{rec}(\_,\_,m_4,[m_4]),\mathsf{send}(\_,\_,\mathrm{s},m_3),\mathsf{spawn}(\_,\_,\mathrm{c2}),\mathsf{spawn}(\_,\_,\mathrm{s})],(\_,\mathrm{ok}),\nil\r \\
        &&\comp\l\mathrm{s},
           [\mathsf{send}(\_,\_,\mathrm{c1},m_4),\mathsf{rec}(\_,\_,m_3,[m_3]),\mathsf{send}(\_,\_,\mathrm{c2},m_2),\mathsf{rec}(\_,\_,m_1,[m_1])],\\
        &&~~(\_,C[\mathsf{receive}~\{P,M\}\to\ldots]),\nil\r\\
        &&\comp\l\mathrm{c2},[\underline{\mathsf{rec}(\_,\_,m_2,[m_2])},\mathsf{send}(\_,\_,\mathrm{s},m_1)],(\_,\mathrm{ok}),\nil\r
        \\[1ex]

        \lh_{\ol{Receive}} 
        & \{\;\}; & \l\mathrm{c1},[\underline{\mathsf{rec}(\_,\_,m_4,[m_4])},\mathsf{send}(\_,\_,\mathrm{s},m_3),\mathsf{spawn}(\_,\_,\mathrm{c2}),\mathsf{spawn}(\_,\_,\mathrm{s})],(\_,\mathrm{ok}),\nil\r \\
        &&\comp\l\mathrm{s},
           [\mathsf{send}(\_,\_,\mathrm{c1},m_4),\mathsf{rec}(\_,\_,m_3,[m_3]),\mathsf{send}(\_,\_,\mathrm{c2},m_2),\mathsf{rec}(\_,\_,m_1,[m_1])],\\
        &&~~(\_,C[\mathsf{receive}~\{P,M\}\to\ldots]),\nil\r\\
        &&\comp\l\mathrm{c2},[\mathsf{send}(\_,\_,\mathrm{s},m_1)],(\_, C[\mathsf{receive}~\mathrm{ack}\to \mathrm{ok}]),[m_2]\r
        \\[1ex]

        \lh_{\ol{Receive}} 
        & \{\;\}; & \l\mathrm{c1},[\mathsf{send}(\_,\_,\mathrm{s},m_3),\mathsf{spawn}(\_,\_,\mathrm{c2}),\mathsf{spawn}(\_,\_,\mathrm{s})],(\_, C[\mathsf{receive}~\mathrm{ack}\to \mathrm{ok}]),[\underline{m_4}]\r \\
        &&\comp\l\mathrm{s},
           [\mathsf{send}(\_,\_,\mathrm{c1},m_4),\mathsf{rec}(\_,\_,m_3,[m_3]),\mathsf{send}(\_,\_,\mathrm{c2},m_2),\mathsf{rec}(\_,\_,m_1,[m_1])],\\
        &&~~(\_,C[\mathsf{receive}~\{P,M\}\to\ldots]),\nil\r\\
        &&\comp\l\mathrm{c2},[\mathsf{send}(\_,\_,\mathrm{s},m_1)],(\_, C[\mathsf{receive}~\mathrm{ack}\to \mathrm{ok}]),[m_2]\r
        \\[1ex]

        \lh_{\ol{Sched}} 
        & \{(\mathrm{c1},m_4)\}; & \l\mathrm{c1},[\mathsf{send}(\_,\_,\mathrm{s},m_3),\mathsf{spawn}(\_,\_,\mathrm{c2}),\mathsf{spawn}(\_,\_,\mathrm{s})],(\_, C[\mathsf{receive}~\mathrm{ack}\to \mathrm{ok}]),\nil\r \\
        &&\comp\l\mathrm{s},
           [\mathsf{send}(\_,\_,\mathrm{c1},m_4),\mathsf{rec}(\_,\_,m_3,[m_3]),\mathsf{send}(\_,\_,\mathrm{c2},m_2),\mathsf{rec}(\_,\_,m_1,[m_1])],\\
        &&~~(\_,C[\mathsf{receive}~\{P,M\}\to\ldots]),\nil\r\\
        &&\comp\l\mathrm{c2},[\mathsf{send}(\_,\_,\mathrm{s},m_1)],(\_, C[\mathsf{receive}~\mathrm{ack}\to \mathrm{ok}]),[\underline{m_2}]\r
        \\[1ex]

        \lh_{\ol{Sched}} 
        & \{(\mathrm{c2},m_2),(\mathrm{c1},m_4)\}; & \l\mathrm{c1},[\mathsf{send}(\_,\_,\mathrm{s},m_3),\mathsf{spawn}(\_,\_,\mathrm{c2}),\mathsf{spawn}(\_,\_,\mathrm{s})],(\_, C[\mathsf{receive}~\mathrm{ack}\to \mathrm{ok}]),\nil\r \\
        &&\comp\l\mathrm{s},
           [\underline{\mathsf{send}(\_,\_,\mathrm{c1},m_4)},\mathsf{rec}(\_,\_,m_3,[m_3]),\mathsf{send}(\_,\_,\mathrm{c2},m_2),\mathsf{rec}(\_,\_,m_1,[m_1])],\\
        &&~~(\_,C[\mathsf{receive}~\{P,M\}\to\ldots]),\nil\r\\
        &&\comp\l\mathrm{c2},[\mathsf{send}(\_,\_,\mathrm{s},m_1)],(\_, C[\mathsf{receive}~\mathrm{ack}\to \mathrm{ok}]),\nil\r
        \\[1ex]

        \lh_{\ol{Send}} 
        & \{(\mathrm{c2},m_2)\}; & \l\mathrm{c1},[\mathsf{send}(\_,\_,\mathrm{s},m_3),\mathsf{spawn}(\_,\_,\mathrm{c2}),\mathsf{spawn}(\_,\_,\mathrm{s})],(\_, C[\mathsf{receive}~\mathrm{ack}\to \mathrm{ok}]),\nil\r \\
        &&\comp\l\mathrm{s},
           [\underline{\mathsf{rec}(\_,\_,m_3,[m_3])},\mathsf{send}(\_,\_,\mathrm{c2},m_2),\mathsf{rec}(\_,\_,m_1,[m_1])],(\_, C[\mathrm{c1}\:!\:\mathrm{ack}]),\nil\r\\
        &&\comp\l\mathrm{c2},[\mathsf{send}(\_,\_,\mathrm{s},m_1)],(\_, C[\mathsf{receive}~\mathrm{ack}\to \mathrm{ok}]),\nil\r
        \\[1ex]

        \lh_{\ol{Receive}} 
        & \{(\mathrm{c2},m_2)\}; & \l\mathrm{c1},[\mathsf{send}(\_,\_,\mathrm{s},m_3),\mathsf{spawn}(\_,\_,\mathrm{c2}),\mathsf{spawn}(\_,\_,\mathrm{s})],(\_, C[\mathsf{receive}~\mathrm{ack}\to \mathrm{ok}]),\nil\r \\
        &&\comp\l\mathrm{s},
           [\mathsf{send}(\_,\_,\mathrm{c2},m_2),\mathsf{rec}(\_,\_,m_1,[m_1])],(\_, C[\mathsf{receive}~\{P,M\}\to\ldots]),[\underline{m_3}]\r\\
        &&\comp\l\mathrm{c2},[\mathsf{send}(\_,\_,\mathrm{s},m_1)],(\_, C[\mathsf{receive}~\mathrm{ack}\to \mathrm{ok}]),\nil\r
        \\[1ex]

        \lh_{\ol{Sched}} 
        & \{(\mathrm{s},m_3),(\mathrm{c2},m_2)\}; & \l\mathrm{c1},[\mathsf{send}(\_,\_,\mathrm{s},m_3),\mathsf{spawn}(\_,\_,\mathrm{c2}),\mathsf{spawn}(\_,\_,\mathrm{s})],(\_, C[\mathsf{receive}~\mathrm{ack}\to \mathrm{ok}]),\nil\r \\
        &&\comp\l\mathrm{s},
           [\underline{\mathsf{send}(\_,\_,\mathrm{c2},m_2)},\mathsf{rec}(\_,\_,m_1,[m_1])],(\_, C[\mathsf{receive}~\{P,M\}\to\ldots]),\nil\r\\
        &&\comp\l\mathrm{c2},[\mathsf{send}(\_,\_,\mathrm{s},m_1)],(\_, C[\mathsf{receive}~\mathrm{ack}\to \mathrm{ok}]),\nil\r
        \\[1ex]

        \lh_{\ol{Send}} 
        & \{(\mathrm{s},m_3)\}; & \l\mathrm{c1},[\mathsf{send}(\_,\_,\mathrm{s},m_3),\mathsf{spawn}(\_,\_,\mathrm{c2}),\mathsf{spawn}(\_,\_,\mathrm{s})],(\_, C[\mathsf{receive}~\mathrm{ack}\to \mathrm{ok}]),\nil\r \\
        &&\comp\l\mathrm{s},
           [\underline{\mathsf{rec}(\_,\_,m_1,[m_1])}],(\_, C[\mathrm{c2}\:!\:\mathrm{ack}]),\nil\r\\
        &&\comp\l\mathrm{c2},[\mathsf{send}(\_,\_,\mathrm{s},m_1)],(\_, C[\mathsf{receive}~\mathrm{ack}\to \mathrm{ok}]),\nil\r
        \\[1ex]

        \lh_{\ol{Receive}} 
        & \{(\mathrm{s},m_3)\}; & \l\mathrm{c1},[\mathsf{send}(\_,\_,\mathrm{s},m_3),\mathsf{spawn}(\_,\_,\mathrm{c2}),\mathsf{spawn}(\_,\_,\mathrm{s})],(\_, C[\mathsf{receive}~\mathrm{ack}\to \mathrm{ok}]),\nil\r \\
        &&\comp\l\mathrm{s},\nil,(\_, C[\mathsf{receive}~\{P,M\}\to\ldots]),[\underline{m_1}]\r\\
        &&\comp\l\mathrm{c2},[\mathsf{send}(\_,\_,\mathrm{s},m_1)],(\_, C[\mathsf{receive}~\mathrm{ack}\to \mathrm{ok}]),\nil\r
        \\[1ex]

        \lh_{\ol{Sched}} 
        & \{(\mathrm{s},m_1),(\mathrm{s},m_3)\}; & \l\mathrm{c1},[\mathsf{send}(\_,\_,\mathrm{s},m_3),\mathsf{spawn}(\_,\_,\mathrm{c2}),\mathsf{spawn}(\_,\_,\mathrm{s})],(\_, C[\mathsf{receive}~\mathrm{ack}\to \mathrm{ok}]),\nil\r \\
        &&\comp\l\mathrm{s},\nil,(\_, C[\mathsf{receive}~\{P,M\}\to\ldots]),\nil\r\\
        &&\comp\l\mathrm{c2},[\underline{\mathsf{send}(\_,\_,\mathrm{s},m_1)}],(\_, C[\mathsf{receive}~\mathrm{ack}\to \mathrm{ok}]),\nil\r
        \\[1ex]

        \lh_{\ol{Send}} 
        & \{(\mathrm{s},m_3)\}; & \l\mathrm{c1},[\underline{\mathsf{send}(\_,\_,\mathrm{s},m_3)},\mathsf{spawn}(\_,\_,\mathrm{c2}),\mathsf{spawn}(\_,\_,\mathrm{s})],(\_, C[\mathsf{receive}~\mathrm{ack}\to \mathrm{ok}]),\nil\r \\
        &&\comp\l\mathrm{s},\nil,(\_, C[\mathsf{receive}~\{P,M\}\to\ldots]),\nil\r\\
        &&\comp\l\mathrm{c2},\nil,(\_, C[\mathrm{s}\:!\:\{\mathrm{c2},\mathrm{req}\}]),\nil\r
        \\[1ex]

        \lh_{\ol{Send}} 
        & \{\;\}; & \l\mathrm{c1},[\underline{\mathsf{spawn}(\_,\_,\mathrm{c2})},\mathsf{spawn}(\_,\_,\mathrm{s})],(\_, C[\mathrm{s}\:!\:\{\mathrm{c1},\mathrm{req}\}]),\nil\r \\
        &&\comp\l\mathrm{s},\nil,(\_, C[\mathsf{receive}~\{P,M\}\to\ldots]),\nil\r\\
        &&\comp\l\mathrm{c2},\nil,(\_, C[\mathrm{s}\:!\:\{\mathrm{c2},\mathrm{req}\}]),\nil\r
        \\[1ex]

        \lh_{\ol{Spawn}} 
        & \{\;\}; & \l\mathrm{c1},[\underline{\mathsf{spawn}(\_,\_,\mathrm{s})}],(\_, C[\mathsf{spawn}(\mathrm{client}/1,[\mathrm{s}])]),\nil\r \\
        &&\comp\l\mathrm{s},\nil,(\_, C[\mathsf{receive}~\{P,M\}\to\ldots]),\nil\r\\[1ex]

        \lh_{\ol{Spawn}} 
        & \{\;\}; & \l\mathrm{c1},\nil,(\_, C[\mathsf{spawn}(\mathrm{server}/0,\nil)]),\nil\r \\[1ex]

        \lh^\ast 
        & \{\;\}; & \l\mathrm{c1},\nil,(\_, C[\mathsf{apply}~\mathrm{main}/0~()]),\nil\r \\
      \end{array}
    \]
    \caption{A derivation under the backward semantics, with
      $m_1 = \{\{\mathrm{c2},\mathrm{req}\},1\}$,
      $m_2 = \{\mathrm{ack},2\}$,
      $m_3 = \{\{\mathrm{c1},\mathrm{req}\},3\}$, and $m_4 = \{\mathrm{ack},4\}$.} \label{fig:backder}
  \end{figure}

  Consider again the program shown in Figure~\ref{fig:ex2-prog}. By
  starting from the last system in the forward derivation shown in
  Figure~\ref{fig:ex2-der}, we may construct the backward derivation
  shown in Figure~\ref{fig:backder}. Observe that it does not strictly
  follow the inverse order of the derivation shown in
  Figure~\ref{fig:ex2-der}. Actually, a derivation that undoes the
  steps in the precise inverse order exists, but it is not the only
  possibility. We will characterise later on (see Corollary~\ref{cor:consequences}) which orders are allowed
  and which are not. In Figure~\ref{fig:backder}, besides following the same conventions of
  Example~\ref{ex2}, for clarity, we underline the selected history
  item to be undone or the element in the queue to be removed (when
  the applied rule is $\mathit{\ol{Sched}}$).
\end{example}
\subsection{Properties of the Uncontrolled Reversible Semantics}
In the following, we prove several properties of our reversible
semantics, including its \emph{causal consistency}, an essential
property for reversible concurrent calculi \cite{DK04}.

Given systems $s_1,s_2$, we call $s_1 \rh^\ast s_2$ a \emph{forward}
derivation and $s_2 \lh^\ast s_1$ a \emph{backward} derivation.  A
derivation potentially including both forward and backward steps is
denoted by $s_1 \rlh^\ast s_2$.
We label transitions as follows: $s_1 \rlh_{p,r,k} s_2$ where
\begin{itemize}
\item $p,r$ are the pid of the selected process and the label of the
  applied rule, respectively, as in Section~\ref{sec:concur},
\item $k$ is a history item if the applied rule was different from
  $\mathit{Sched}$ and $\ol{\mathit{Sched}}$, and 
\item $k=\mathsf{sched}(\{v,\k\})$ when the applied rule was
  $\mathit{Sched}$ or $\ol{\mathit{Sched}}$, where $\{v,\k\}$ is the
  message delivered or put back into $\Gamma$. Note that this
  information is available when applying the rule.
\end{itemize}
We ignore some labels when they are clear from the context.

We extend the definitions of functions $\init$ and $\final$ from
Section~\ref{sec:concur} to reversible derivations in the natural way.
The notions of composable, coinitial and cofinal derivations are
extended also in a straightforward manner.

Given a rule label $r$, we let $\ol{r}$ denote its reverse version, i.e.,
if $r=\mathit{Send}$ then $\ol{r} = \ol{\mathit{Send}}$ and vice
versa (if $r=\ol{\mathit{Send}}$ then $\ol{r} = \mathit{Send}$).
Also, given a transition $t$, we let $\ol{t} = (s' \lh_{p,\ol{r},k}
s)$ if $t= (s \rh_{p,r,k} s')$ and $\ol{t} = (s' \rh_{p,r,k} s)$ if
$t= (s \lh_{p,\ol{r},k} s')$. We say that $\ol{t}$ is the \emph{inverse} of $t$.
This notation is naturally extended to derivations. 
We let $\epsilon_s$ denote the zero-step derivation
$s\rlh^\ast s$.

In the following we restrict the attention to systems reachable from
the execution of a program:

\begin{definition}[Reachable systems]\label{def:reachable}
  A system is \emph{initial} if it is composed by a single process,
  and this process has an empty history and an empty queue; furthermore the
  global mailbox is empty. A system $s$ is \emph{reachable} if there
  exists an initial system $s_0$ and a derivation $s_0 \rlh^\ast s$
  using the rules corresponding to a given program.
\end{definition}
Moreover, for simplicity, we also consider an implicit, fixed program
in the technical results, that is we fix the function $\mu$ in the
semantics of expressions.

The next lemma proves that every forward (resp.\ backward) transition
can be undone by a backward (resp.\ forward) transition. 

\begin{lemma}[Loop lemma] \label{lemma:loop} For every pair of
  reachable systems, $s_1$ and $s_2$, we have $s_1 \rh_{p,r,k} s_2$ iff $s_2
  \lh_{p,\ol{r},k} s_1$.
\end{lemma}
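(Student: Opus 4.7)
The proof is by case analysis on the applied rule. The seven pairs $(\mathit{Seq},\ol{\mathit{Seq}}),\ldots,(\mathit{Sched},\ol{\mathit{Sched}})$ in Figures~\ref{fig:forwardsem} and~\ref{fig:backwardsem} are designed to be mutually inverse on system states: each backward rule consumes exactly the history item added by its forward counterpart, restores the control $(\theta,e)$ stored in that item, and (for $\mathit{Send}$ and $\mathit{Sched}$) moves the relevant tagged message $\{v,\k\}$ between $\Gamma$ and the local queue in the opposite direction. Once this structural correspondence is noted, the lemma amounts to checking the \emph{applicability} of the reverse rule in each direction.

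For the direction $s_1 \rh_{p,r,k} s_2 \Rightarrow s_2 \lh_{p,\ol{r},k} s_1$, applicability is immediate for every rule except $\mathit{Sched}$, since the other backward rules carry no side conditions and the required history item (resp.\ deposited message) is present in $s_2$ by construction. For $\mathit{Sched}$, the rule $\ol{\mathit{Sched}}$ has a side condition on the topmost $\mathsf{rec}(\ldots)$ item in $\h$. I would verify it via the following invariant on reachable systems: for any process $\tuple{p,\h,(\theta,e),q}$, if $\mathsf{rec}(\theta',e',\{v',\k'\},q')$ is the topmost $\mathsf{rec}$ item in $\h$, then $q$ has the form $m_n\cons\cdots\cons m_1\cons (q'\backslash\!\!\backslash\{v',\k'\})$ for some $n\geq 0$ (and $q$ is arbitrary if no such item exists). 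This is an easy induction on derivation length: the only rules modifying the queue are $\mathit{Receive}$ (which establishes the form with $n=0$), $\mathit{Sched}$ (which increments $n$), $\ol{\mathit{Receive}}$ (which pops the topmost $\mathsf{rec}$), and $\ol{\mathit{Sched}}$ (whose side condition is precisely what prevents $n$ from going negative). From this invariant, after a forward $\mathit{Sched}$ the queue $\{v,\k\}\cons q$ is strictly longer than $q'\backslash\!\!\backslash\{v',\k'\}$, so $q'\backslash\!\!\backslash\{v',\k'\}\neq \{v,\k\}\cons q$ holds in $s_2$ and $\ol{\mathit{Sched}}$ fires.

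For the direction $s_2 \lh_{p,\ol{r},k} s_1 \Rightarrow s_1 \rh_{p,r,k} s_2$, the backward step has reinstated a control $(\theta,e)$ taken directly from the history item. By the determinism of the expression-level relation $\too$ (ensured because $\mathsf{match}$ and $\mathsf{matchrec}$ always pick the first matching clause/message), the unique transition from $(\theta,e)$ agrees with the $(\theta',e')$ stored in the item, so rule $r$ is applicable and produces $s_2$. For $\mathit{Spawn}$ and $\mathit{Send}$, whose forward rules introduce a fresh pid $p'$ or fresh identifier $\k$, I would reuse the name recorded in the just-reinserted history item; the backward step has removed its unique occurrence from the system, so the name is again legitimately fresh. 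The only genuinely delicate step in the whole argument is the side-condition check for $\ol{\mathit{Sched}}$ above; all remaining cases reduce to matching rule shapes.
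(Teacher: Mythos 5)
Your proof is correct and follows essentially the same route as the paper's: a case analysis on the applied rule, with the two non-trivial points being exactly the ones the paper singles out --- that the side condition of $\ol{\mathit{Sched}}$ is always satisfied immediately after a forward $\mathit{Sched}$ (because at least one message has been added since the last $\mathit{Receive}$), and that for the backward-to-forward direction the restored control can indeed re-perform the recorded step on reachable systems. Your explicit queue-shape invariant is a more careful rendering of the paper's one-line justification (though to survive the $\ol{\mathit{Receive}}$ case the induction hypothesis should be stated for every $\mathsf{rec}$ item in the history, not only the topmost one).
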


\begin{proof}
  The proof is by case analysis on the applied rule.
  We discuss below the most interesting cases.
  \begin{itemize}
    \item Rule $\mathit{Sched}$: notice that the queue of a process is changed only
      by rule $\mathit{Receive}$ (which removes messages) and
      $\mathit{Sched}$ (which adds messages). Since, after the last
      $\mathit{Receive}$ at least one message has been added, then the
      side condition of rule $\ol{\mathit{Sched}}$ is always verified.
    \item Rule $\ol{\mathit{Seq}}$: one has to check that the restored
      control $(\theta,e)$ can indeed perform a sequential step to
      $(\theta',e')$. This always holds for reachable systems. An
      analogous check needs to be done for all backward rules. \qed
  \end{itemize}
\end{proof}
The following notion of concurrent transitions allows us to
characterise which actions can be switched without changing the
semantics of a computation. It extends the same notion from the
standard semantics (cf.\ Definition~\ref{def:concurrent1}) to the
reversible semantics.

\begin{definition}[Concurrent transitions] \label{def:concurrent}
  Given two coinitial transitions, $t_1 = (s \rlh_{p_1,r_1,k_1} s_1)$
  and $t_2 = (s \rlh_{p_2,r_2,k_2} s_2)$, we say that they are
  \emph{in conflict} if at least one of the following conditions
  holds:
  \begin{itemize}

  \item \textbf{both transitions are forward}, they consider the same
    process, i.e., $p_1 = p_2$, and either $r_1 = r_2 =
    \mathit{Sched}$ or one transition applies rule $\mathit{Sched}$
    and the other transition applies rule $\mathit{Receive}$.

  \item one is a \textbf{forward} transition that applies to a process
    $\mathrm{p}$, say $p_1=\mathrm{p}$, and the other one is a
    \textbf{backward} transition that undoes the creation of
    $\mathrm{p}$, i.e., $p_2 = \mathrm{p}'\neq\mathrm{p}$, $r_2=
    \ol{\mathit{Spawn}}$ and $k_2 =
    \mathsf{spawn}(\theta,e,\mathrm{p})$ for some control $(\theta,e)$;

  \item one is a \textbf{forward} transition that delivers a message
    $\{v,\k\}$ to a process $\mathrm{p}$, say $p_1 = \mathrm{p}$,
    $r_1 = \mathit{Sched}$ and $k_1 = \mathsf{sched}(\{v,\k\})$, and
    the other one is a \textbf{backward} transition that undoes the
    sending $\{v,\k\}$ to $\mathrm{p}$, i.e., $p_2 = \mathrm{p}'$
    (note that $\mathrm{p}=\mathrm{p}'$ if the message is sent to its
    own sender), $r_2 = \ol{\mathit{Send}}$ and
    $k_2 = \mathsf{send}(\theta,e,\mathrm{p},\{v,\k\})$ for some
    control $(\theta,e)$;

  \item one is a \textbf{forward} transition and the other one is a
    \textbf{backward} transition such that $p_1=p_2$ and either i)
    both applied rules are different from both $\mathit{Sched}$ and
    $\ol{\mathit{Sched}}$, i.e., $\{r_1,r_2\} \cap \{\mathit{Sched},\ol{\mathit{Sched}}\} = \emptyset$; ii) one rule is $\mathit{Sched}$ and the
    other one is $\ol{\mathit{Sched}}$; iii) one rule is
    $\mathit{Sched}$ and the other one is $\ol{\mathit{Receive}}$; or
    iv) one rule is $\ol{\mathit{Sched}}$ and the other one is
    $\mathit{Receive}$.
  \end{itemize}
  Two coinitial transitions are \emph{concurrent} if they are not in
  conflict. Note that two coinitial backward transitions are always concurrent.
\end{definition}
The following lemma (the counterpart of Lemma~\ref{lemma:square} for
the standard semantics) is a key result to prove the causal
consistency of the semantics.

\begin{lemma}[Square lemma] \label{lemma:square} Given two coinitial
  concurrent transitions $t_1 = (s \rlh_{p_1,r_1,k_1} s_1)$ and
  $t_2 = (s \rlh_{p_2,r_2,k_2} s_2)$, there exist two cofinal
  transitions $t_2/t_1 = (s_1 \rlh_{p_2,r_2,k_2} s')$ and
  $t_1/t_2 = (s_2 \rlh_{p_1,r_1,k_1} s')$.  Graphically,
  \[
  \begin{minipage}{50ex}
  \xymatrix@C=50pt@R=20pt{
   s \ar@<1pt>@^{->}[r]^{p_1,r_1,k_1} \ar@<1pt>@^{->}[d]_{p_2,r_2,k_2} & s_1 \ar@<1pt>@^{->}[l]\\
   s_2 \ar@<1pt>@^{->}[u]& 
  }
  \end{minipage}
  ~~
  \Longrightarrow
  ~~
  \begin{minipage}{50ex}
  \xymatrix@C=50pt@R=20pt{
   s \ar@<1pt>@^{->}[r]^{p_1,r_1,k_1} \ar@<1pt>@^{->}[d]_{p_2,r_2,k_2} & s_1 \ar@<1pt>@^{->}[d]^{p_2,r_2,k_2} \ar@<1pt>@^{->}[l]\\
   s_2 \ar@<1pt>@^{->}[r]_{p_1,r_1,k_1} \ar@<1pt>@^{->}[u] &  s' \ar@<1pt>@^{->}[u] \ar@<1pt>@^{->}[l]
  }
  \end{minipage}
  \]
\end{lemma}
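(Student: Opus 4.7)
The plan is to proceed by case analysis on whether each of $t_1$ and $t_2$ is a forward or backward transition, giving three main cases: both forward, both backward, and one of each. In every case I must check two things: (a) applicability, i.e.\ after performing one of the two transitions, the rule used by the other transition is still enabled at the resulting system; and (b) equality, i.e.\ applying them in either order yields the same cofinal system.

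First, for the both-forward case I would largely reduce to Lemma~\ref{lemma:square1}, since the forward reversible semantics is an annotated version of the standard semantics (cf.\ Theorem~\ref{th:conservative}). The only extra bookkeeping is that the history stacks grow by one item in each process and that fresh identifiers (pids for $\mathit{Spawn}$, message identifiers $\k$ for $\mathit{Send}$) must be chosen coherently; since these are fresh with respect to the starting system $s$, I can pick the same ones on both sides of the diagram, and the resulting histories coincide because each rule appends information computed only from the process's own local state. The conflict clause forbidding two coinitial forward transitions on the same process with $\{\mathit{Sched},\mathit{Receive}\}$-like collisions ensures that, when the two transitions touch the same process, they do so in genuinely independent ways (this is exactly the content reused from the standard square lemma).

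Second, for the both-backward case I would invoke the Loop Lemma (Lemma~\ref{lemma:loop}) in a localised way: each backward transition has a unique forward inverse, and the history items stored in $s$ record enough information to reconstruct precisely which forward transition was undone. So $t_1$ and $t_2$ may both be mirrored against $\ol{t_1}$ and $\ol{t_2}$ out of $s_1$ and $s_2$; commutativity of forward transitions then yields the fourth corner $s'$. The delicate point is that backward rules like $\ol{\mathit{Sched}}$, $\ol{\mathit{Spawn}}$ and $\ol{\mathit{Receive}}$ carry side conditions (topmost item in $\h$, empty history/queue of the spawned process, or a specific restored queue shape); I must verify these side conditions are preserved when the other backward transition is executed first. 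Since two coinitial backward transitions are always declared concurrent, it suffices to check that backward steps at one process cannot disable the side conditions of a backward step at another process, which follows from the locality of the information inspected by each side condition (it only mentions the process's own history and queue, or the global mailbox $\Gamma$, to which backward rules only add messages).

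Third, for the mixed forward/backward case I would proceed rule by rule. Here the definition of \emph{conflict} does most of the work: it rules out exactly the combinations that would destroy applicability (e.g.\ a forward action on a process $\mathrm{p}$ together with a backward undo of the spawn of $\mathrm{p}$; a forward $\mathit{Sched}$ delivering $\{v,\k\}$ together with a backward $\ol{\mathit{Send}}$ of that same message; any pair acting on the same process except the innocuous combinations listed in clause iv). For each surviving combination I check that the forward step leaves the history items, mailbox messages and queue entries inspected by the backward step intact, and vice versa. This is the step I expect to be the main obstacle, because the case list is longish and several subtleties must be checked: that the ``topmost $\mathsf{rec}(\ldots)$'' side condition of $\ol{\mathit{Sched}}$ remains true after an unrelated forward or backward step at another process (trivially), and after a forward step at the same process when that step is not in conflict (this is why cases iii and iv of the same-process clause are flagged as conflicts, and all others are safe); that $\ol{\mathit{Receive}}$'s queue-matching requirement is not broken by a concurrent $\mathit{Sched}$ at another process (it is not, because $\Gamma$-to-queue transfers are per-process); and that fresh-identifier choices for a forward $\mathit{Send}$ or $\mathit{Spawn}$ remain fresh after an unrelated backward step. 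Once these per-case verifications are done, the four-corner diagram closes by direct inspection of the rules in Figures~\ref{fig:forwardsem} and~\ref{fig:backwardsem}, completing the proof. \qed
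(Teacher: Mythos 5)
Your overall strategy---case analysis on the directions of $t_1$ and $t_2$, reduction of the forward/forward case to the standard-semantics square lemma modulo history bookkeeping and coherent choice of fresh identifiers, and a rule-by-rule check of the mixed case driven by the conflict definition---matches the paper's proof, and your treatment of the forward/forward and mixed cases identifies the same delicate points (the $\ol{\mathit{Sched}}$/$\ol{\mathit{Receive}}$ side conditions, the excluded $\mathit{Sched}$ vs.\ $\ol{\mathit{Send}}$ and $\mathit{Spawn}$-undo combinations).

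There is, however, a genuine gap in your both-backward case. You argue that, since all coinitial backward transitions are declared concurrent, ``it suffices to check that backward steps at one process cannot disable the side conditions of a backward step at another process.'' This silently assumes $p_1\neq p_2$, but two distinct coinitial backward transitions \emph{can} target the same process: e.g.\ a process whose topmost history item enables $\ol{\mathit{Seq}}$ while its newest queued message simultaneously enables $\ol{\mathit{Sched}}$. These pairs are concurrent by definition, so the lemma must be proved for them, and your locality argument does not apply. The paper closes this case by showing that most same-process pairs cannot coexist (at most one non-$\ol{\mathit{Sched}}$ backward rule is enabled, namely the one matching the last history item, and at most one $\ol{\mathit{Sched}}$ is enabled, on the newest message), that the $\ol{\mathit{Sched}}$/$\ol{\mathit{Receive}}$ pair is excluded because their side conditions are non-overlapping, and that the remaining pair $\ol{\mathit{Sched}}$ plus another backward rule commutes since they touch disjoint components of the process. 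Two smaller points: your reduction of the backward case via the loop lemma does not work as stated, because the inverses $\ol{t_1}$ and $\ol{t_2}$ are \emph{cofinal} forward transitions, to which the forward square lemma (a statement about coinitial transitions) does not apply---a direct argument is needed; and your claim that backward rules ``only add messages'' to $\Gamma$ is false ($\ol{\mathit{Send}}$ removes one), although uniqueness of message identifiers still saves the conclusion there.
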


\begin{proof}
  We distinguish the following cases depending on the applied rules:\\[1ex]
  (1) Two forward transitions. Then, we have the following cases:
  \begin{itemize}
  \item Two transitions $t_1$ and $t_2$ where $r_1\neq \mathit{Sched}$
    and $r_2\neq \mathit{Sched}$. Trivially, they apply to different
    processes, i.e., $p_1\neq p_2$.  Then, we can easily prove that by
    applying rule $r_2$ to $p_1$ in $s_1$ and rule $r_1$ to $p_2$ in
    $s_2$ we have two transitions $t_1/t_2$ and $t_2/t_1$ which
    produce the corresponding history items and are cofinal.

  \item One transition $t_1$ which applies rule $r_1 = \mathit{Sched}$
    to deliver message $\{v_1,\k_1\}$ to process $p_1 = \mathrm{p}$,
    and another transition which applies a rule $r_2$ different from
    $\mathit{Sched}$. All cases but $r_2 = \mathit{Receive}$ with $p_2
    = \mathrm{p}$ and $k_2 = \mathsf{rec}(\theta,e,\{v_2,\k_2\},q)$
    are straightforward. Note that $\k_1\neq \k_2$ since these
    identifiers are unique. Here, by applying rule $\mathit{Receive}$
    to $s_1$ and rule $\mathit{Sched}$ to $s_2$ we will end up with
    the same mailbox in $\mathrm{p}$ (since it is a FIFO
    queue). However, the history item
    $\mathsf{rec}(\theta,e,\{v_2,\k_2\},q')$ will be necessarily
    different since $q\neq q'$ by the application of rule
    $\mathit{Sched}$. This situation, though, cannot happen since
    transitions using rules $\mathit{Sched}$ and $\mathit{Receive}$
    are not concurrent.

  \item Two transitions $t_1$ and $t_2$ with rules $r_1 = r_2 =
    \mathit{Sched}$ delivering messages $\{v_1,\k_1\}$ and
    $\{v_2,\k_2\}$, respectively. Since the transitions are
    concurrent, they should deliver the messages to different
    processes, i.e., $p_1\neq p_2$. Therefore, we can easily prove
    that delivering $\{v_2,\k_2\}$ from $s_1$ and $\{v_1,\k_1\}$ from
    $s_2$ we get two cofinal transitions.
  \end{itemize}
  (2) One forward transition and one backward transition. Then, we
  distinguish the following cases:
  \begin{itemize}
  \item If the two transitions apply to the same process, i.e., $p_1 =
    p_2$, then, since they are concurrent, we can only have $r_1 = \mathit{Sched}$ and a rule
    different from both $\ol{\mathit{Sched}}$ and
    $\ol{\mathit{Receive}}$, or $r_1 = \ol{\mathit{Sched}}$ and a rule
    different from both $\mathit{Sched}$ and $\mathit{Receive}$. In
    these cases, the claim follows easily by a case distinction on the
    applied rules.

  \item Let us now consider that the transitions apply to different
    processes, i.e., $p_1\neq p_2$, and the applied rules are
    different from $\mathit{Sched},\ol{\mathit{Sched}}$. In this case,
    the claim follows easily except when one transition considers a
    process $\mathrm{p}$ and the other one undoes the spawning of the
    same process $\mathrm{p}$. This case, however, is not allowed
    since the transitions are concurrent.

  \item Finally, let us consider that the transitions apply to
    different processes, i.e., $p_1\neq p_2$, and that one transition
    applies rule $\mathit{Sched}$ to deliver a message $\{v,\k\}$ from
    sender $\mathrm{p}$ to receiver $\mathrm{p}'$, i.e., $p_1 =
    \mathrm{p}'$, $r_1 = \mathit{Sched}$ and $k_1 =
    \mathsf{sched}(\{v,\k\})$. In this case, the other transition
    should apply a rule $r_2$ different from $\ol{\mathit{Send}}$ with
    $k_2 = \mathsf{send}(\theta,e,\mathrm{p}',\{v,\k\})$ for some control
    $(\theta,e)$ since, otherwise, the transitions would not be
    concurrent. In any other case, one can easily prove that by
    applying $r_2$ to $s_1$ and $\mathit{Sched}$ to $s_2$ we get
    two cofinal transitions.

  \end{itemize}
  (3) Two backward transitions. We distinguish the following cases:
  \begin{itemize}
  \item If the two transitions apply to different processes, the
    claim follows easily.

  \item Let us now consider that they apply to the same process, i.e.,
    $p_1 = p_2$ and that the applied rules are different from
    $\ol{\mathit{Sched}}$. This case is not possible since, given a
    system, only one backward transition rule different from
    $\ol{\mathit{Sched}}$ is applicable (i.e., the one that
    corresponds to the last item in the history).

  \item Let us consider that both transitions apply to the same
    process and that both are applications of rule
    $\ol{\mathit{Sched}}$. This case is not possible since rule
    $\ol{\mathit{Sched}}$ can only take the newest message from the
    local queue of the process, and thus only one rule
    $\ol{\mathit{Sched}}$ can be applied to a given process.

  \item Finally, consider that both transitions apply to the same
    process and only one of them applies rule
    $\ol{\mathit{Sched}}$. In this case, the only non-trivial case is
    when the other applied rule is $\ol{\mathit{Receive}}$, since both
    change the local queue of the process. However, this case is not
    allowed by the backward semantics, since the conditions to apply
    rule $\ol{\mathit{Sched}}$ and rule $\ol{\mathit{Receive}}$ are
    non-overlapping. \qed
  \end{itemize}
\end{proof}
\begin{corollary}[Backward confluence]\label{cor:backconf}
  Given two backward derivations $s \lh^\ast s_1$ and $s \lh^\ast s_2$
  there exist $s_3$ and two backward derivations $s_1 \lh^\ast s_3$ and
  $s_2 \lh^\ast s_3$.
\end{corollary}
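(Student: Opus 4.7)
The plan is to derive backward confluence from strong confluence of single backward steps, which in turn follows from the Square lemma (Lemma~\ref{lemma:square}) together with the observation at the end of Definition~\ref{def:concurrent} that any two coinitial backward transitions are concurrent.

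First, I would establish \emph{strong confluence} for $\lh$: given any two backward transitions $t_1\colon s \lh s_1$ and $t_2\colon s \lh s_2$, either $t_1$ and $t_2$ coincide (in which case $s_1 = s_2$ and there is nothing to do), or they are distinct coinitial backward transitions and hence concurrent by Definition~\ref{def:concurrent}. In the latter case Lemma~\ref{lemma:square} supplies a system $s'$ together with backward transitions $s_1 \lh s'$ and $s_2 \lh s'$ that close the diamond in a single step on each side.

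Next I would prove the standard \emph{strip lemma} by induction on the length $n$ of the longer derivation: if $s \lh s_a$ and $s \lh^n s_b$, then there exists $s_c$ with $s_a \lh^\ast s_c$ and $s_b \lh^\ast s_c$. The case $n=0$ is immediate, taking $s_c = s_a$. For $n>0$, decompose $s \lh s_b' \lh^{n-1} s_b$, apply strong confluence to the initial fork $s \lh s_a$, $s \lh s_b'$ to obtain $s_d$ with $s_a \lh^\ast s_d$ and $s_b' \lh^\ast s_d$ (each in at most one step), and then apply the induction hypothesis to $s_b' \lh^\ast s_d$ and $s_b' \lh^{n-1} s_b$ to find the common successor $s_c$.

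Finally, I would conclude the corollary by a second induction, this time on the length $m$ of $s \lh^\ast s_1$. If $m=0$, take $s_3 = s_2$. Otherwise decompose $s \lh s_1' \lh^{m-1} s_1$, apply the strip lemma to $s \lh s_1'$ and $s \lh^\ast s_2$ to obtain a common descendant $s_d$ of $s_1'$ and $s_2$, and then apply the induction hypothesis to the two derivations $s_1' \lh^{m-1} s_1$ and $s_1' \lh^\ast s_d$ issuing from $s_1'$ to produce the desired $s_3$. The argument is a textbook strong-confluence-implies-confluence tiling; no step is really an obstacle, but the key Erlang-specific ingredient---without which the whole scheme would collapse---is precisely the fact highlighted in Definition~\ref{def:concurrent} that two coinitial backward transitions are never in conflict, so that the Square lemma is always available at the first step.
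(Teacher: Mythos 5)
Your proposal is correct and follows essentially the same route as the paper: the paper's proof is exactly ``iterate the Square lemma, noting that coinitial backward transitions are always concurrent,'' appealing to the standard fact that the diamond property implies confluence. You merely spell out the tiling (strip lemma plus outer induction) that the paper leaves implicit by citing the abstract-rewriting literature.
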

\begin{proof}
  By iterating the square lemma (Lemma~\ref{lemma:square}), noticing
  that backward transitions are always concurrent.
  This is a standard result for abstract relations (see, e.g.,
  \cite{BN98} and the original work by Rosen \cite{Ros73}), where
  confluence is implied by the \emph{diamond property} (the square
  lemma in our work). \qed
\end{proof}
The notion of concurrent transitions for the reversible semantics is a
natural extension of the same notion for the standard semantics:

\begin{lemma}
  Let $t_1$ and $t_2$ be two forward coinitial transitions using the
  reversible semantics, and let $t'_1$ and $t'_2$ be their counterpart
  in the standard semantics obtained by removing the histories and the
  unique identifiers for messages. Then, $t_1$ and $t_2$ are
  concurrent iff $t'_1$ and $t'_2$ are.
\end{lemma}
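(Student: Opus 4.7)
The plan is to reduce the lemma to a direct inspection of the two definitions of concurrency, using the conservative extension result (Theorem~\ref{th:conservative}) as the bridge between the two semantics.

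First, I would observe that the mapping taking a reversible transition $t_i = (s \rh_{p_i,r_i,k_i} s_i)$ to its standard counterpart $t'_i = (\del(s) \hoo_{p_i,r_i} \del(s_i))$ preserves both the selected pid $p_i$ and the applied rule label $r_i$. This is essentially by construction: comparing the rules in Figure~\ref{fig:forwardsem} with those in Figure~\ref{fig:system-rules}, each forward reversible rule is an annotated version of the corresponding standard rule, and the side effects on the pid component and rule choice are identical. The same observation underlies the proof of Theorem~\ref{th:conservative}; here we simply record that the pid/rule labels carry over. Stripping the unique message identifier $\k$ likewise does not affect which rule is applied nor the process chosen, since $\mathsf{matchrec}$ ignores $\k$ when computing the first matching message.

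Next, I would compare the two notions of conflict. Definition~\ref{def:concurrent1} says that two coinitial (forward) transitions $t'_1, t'_2$ are in conflict iff $p_1 = p_2$ and either both rules are $\mathit{Sched}$, or one rule is $\mathit{Sched}$ and the other is $\mathit{Receive}$. Inspecting Definition~\ref{def:concurrent}, the only clause that can apply to a pair of forward transitions is the first one (the remaining three clauses explicitly require at least one backward transition), and its condition is literally the same: $p_1 = p_2$ together with the same combination of rules $\mathit{Sched}$/$\mathit{Receive}$.

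Putting the two observations together finishes the proof: $t_1$ and $t_2$ are in conflict in the reversible semantics iff $p_1 = p_2$ and the rules $r_1, r_2$ satisfy the stated condition, iff $t'_1$ and $t'_2$ are in conflict in the standard semantics. Taking contrapositives gives the ``concurrent'' version of the claim. There is no real obstacle here; the only thing to be careful about is to check that none of the three ``mixed'' bullets in Definition~\ref{def:concurrent} can be triggered by a pair of forward transitions, which is immediate since each of them mentions a backward rule ($\ol{\mathit{Spawn}}$, $\ol{\mathit{Send}}$, or one of $\ol{\mathit{Sched}}$/$\ol{\mathit{Receive}}$) explicitly.
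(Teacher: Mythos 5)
Your proposal is correct and follows essentially the same route as the paper, which simply notes that Definition~\ref{def:concurrent1} and the first clause of Definition~\ref{def:concurrent} are perfectly analogous; you merely spell out the details (preservation of pid and rule label under $\del$, and that the three mixed clauses cannot fire for two forward transitions) that the paper leaves implicit.
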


\begin{proof}
  The proof is straightforward since Definition~\ref{def:concurrent1}
  and the first case of Definition~\ref{def:concurrent} are perfectly
  analogous. \qed
\end{proof}
The next result is used to switch the successive application of two
transition rules. Let us note that previous proof schemes of causal
consistency (e.g., \cite{DK04}) did not include such a result,
directly applying the square lemma instead. In our case, this would
not be correct.

\begin{lemma}[Switching lemma] \label{lemma:switch} Given two
  composable transitions of the form $t_1 = (s_1
  \rlh_{p_1,r_1,k_1} s_2)$ and $t_2 = (s_2
  \rlh_{p_2,r_2,k_2} s_3)$ such that $\ol{t_1}$ and
  $t_2$ are concurrent, there exist a system $s_4$ and two composable
  transitions $t'_1 = (s_1 \rlh_{p_2,r_2,k_2} s_4)$ and
  $t'_2 = (s_4 \rlh_{p_1,r_1,k_1} s_3)$.
\end{lemma}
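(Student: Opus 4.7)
The plan is to reduce the Switching Lemma to the Square Lemma (Lemma~\ref{lemma:square}) by ``bending'' the first transition backwards. The key observation is that although $t_1$ and $t_2$ are composable (so they share the intermediate state $s_2$ at the end of $t_1$ and the beginning of $t_2$), they are not coinitial, so the Square Lemma does not directly apply to them. However, $\ol{t_1}$ starts exactly at $s_2$, so $\ol{t_1}$ and $t_2$ \emph{are} coinitial, and by hypothesis they are concurrent.

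First, I would invoke the Loop Lemma (Lemma~\ref{lemma:loop}) to obtain the inverse transition $\ol{t_1} = (s_2 \rlh_{p_1,\ol{r_1},k_1} s_1)$. Next, I apply the Square Lemma to the coinitial concurrent pair $\ol{t_1}$ and $t_2 = (s_2 \rlh_{p_2,r_2,k_2} s_3)$: this yields a system $s_4$ together with cofinal transitions
\[
  t_2/\ol{t_1} = (s_1 \rlh_{p_2,r_2,k_2} s_4)
  \quad \text{and} \quad
  \ol{t_1}/t_2 = (s_3 \rlh_{p_1,\ol{r_1},k_1} s_4).
\]
The first of these is already in the desired form, so I set $t'_1 = t_2/\ol{t_1}$, which gives a transition $s_1 \rlh_{p_2,r_2,k_2} s_4$ as required.

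To obtain $t'_2$, I apply the Loop Lemma once more, this time to the transition $\ol{t_1}/t_2$, producing its inverse $t'_2 = (s_4 \rlh_{p_1,r_1,k_1} s_3)$. By construction, $t'_1$ and $t'_2$ are composable (both meet at $s_4$) and their composition $s_1 \to s_4 \to s_3$ has the required labels. Before applying the Loop Lemma, one should briefly observe that all intermediate systems $s_1, s_2, s_3, s_4$ are reachable (which is needed for Lemma~\ref{lemma:loop}): $s_1$ is reachable by assumption, $s_2$ and $s_3$ are reachable via $t_1$ and $t_2$, and $s_4$ is reachable from $s_1$ via $t'_1$.

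The main obstacle is simply making sure the applicability conditions of the Square Lemma are verified: we must check that $\ol{t_1}$ and $t_2$ are coinitial (which is immediate from composability of $t_1$ and $t_2$) and concurrent (which is precisely the hypothesis of the Switching Lemma). Once this check is made, the rest of the proof is essentially a diagrammatic chase combining one application of the Square Lemma with two applications of the Loop Lemma, and preservation of labels across these applications is ensured by the fact that both lemmas track the labels $(p,r,k)$ explicitly.
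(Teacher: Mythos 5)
Your proof is correct and follows essentially the same route as the paper's: apply the Loop Lemma to invert $t_1$, apply the Square Lemma to the coinitial concurrent pair $\ol{t_1}$ and $t_2$, and apply the Loop Lemma again to re-invert the resulting transition. The extra remark about reachability of the intermediate systems is a reasonable (and welcome) bit of bookkeeping that the paper leaves implicit.
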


\begin{proof}
  First, using the loop lemma (Lemma~\ref{lemma:loop}), we have
  $\ol{t_1} = (s_2 \rlh_{p_1,\ol{r_1},k_1} s_1)$. Now, since
  $\ol{t_1}$ and $t_2$ are concurrent, by applying the square lemma
  (Lemma~\ref{lemma:square}) to $\ol{t_1} = (s_2
  \rlh_{p_1,\ol{r_1},k_1} s_1)$ and $t_2 = (s_2 \rlh_{p_2,r_2,k_2}
  s_3)$, there exists a system $s_4$ such that $\ol{t'_1} =
  \ol{t_1}/t_2 = (s_3 \rlh_{p_1,\ol{r_1},k_1} s_4)$ and $t'_2 =
  t_2/\ol{t_1} = (s_1 \rlh_{p_2,r_2,k_2} s_4)$. Using the loop lemma
  (Lemma~\ref{lemma:loop}) again, we have $t'_1 = t_1/t_2 = (s_4
  \rlh_{p_1,r_1,k_1} s_3)$, which concludes the proof. \qed
\end{proof}

\begin{corollary} \label{corollary:switching}
  Given two composable transitions $t_1 = (s_1 \rh_{p_1,r_1,k_1} s_2)$
  and $t_2 = (s_2 \lh_{p_2,r_2,k_2} s_3)$, there exist a system $s_4$
  and two composable transitions $t'_1 = ( s_1 \lh_{p_2,r_2,k_2} s_4)$
  and $t'_2 = (s_4 \rh_{p_1,r_1,k_1} s_3)$. Graphically, 
  \[
  \begin{minipage}{50ex}
    \xymatrix@C=50pt@R=10pt{
      s_1 \ar@^{->}[r]^{p_1,r_1,k_1} & s_2 \\
    & s_3 \ar@^{->}[u]_{p_2,r_2,k_2}
    }
  \end{minipage}
  ~~
  \Longrightarrow
  ~~
  \begin{minipage}{50ex}
    \xymatrix@C=50pt@R=10pt{
      s_1 \ar@^{->}[r]^{p_1,r_1,k_1} & s_2 \\
      s_4 \ar@^{->}[r]_{p_1,r_1,k_1} \ar@^{->}[u]_{p_2,r_2,k_2} & s_3 \ar@^{->}[u]_{p_2,r_2,k_2}}
  \end{minipage}
  \]
\end{corollary}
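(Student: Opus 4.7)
The plan is to derive this corollary as an immediate consequence of the Switching Lemma (Lemma~\ref{lemma:switch}). That lemma already takes care of the creative part of the argument: given any two composable transitions $u_1$ and $u_2$ with $\ol{u_1}$ and $u_2$ concurrent, it produces the desired square. So the only real content of the proof will be verifying the concurrency hypothesis, and then reading off the directions of the arrows.

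First, I would set $u_1 := t_1$ and $u_2 := t_2$; these are composable at $s_2$ by assumption. To invoke Lemma~\ref{lemma:switch} I need $\ol{t_1}$ and $t_2$ to be concurrent. Here is the key observation: since $t_1$ is a forward transition ($\rh$), its inverse $\ol{t_1}$ is a backward transition ($\lh$) from $s_2$ to $s_1$; and $t_2$ is, by hypothesis, a backward transition from $s_2$ to $s_3$. Hence $\ol{t_1}$ and $t_2$ are two coinitial backward transitions, and the closing remark of Definition~\ref{def:concurrent} (\emph{two coinitial backward transitions are always concurrent}) gives concurrency for free, with no case analysis required.

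Applying Lemma~\ref{lemma:switch} then yields a system $s_4$ and composable transitions $s_1 \rlh_{p_2,r_2,k_2} s_4$ and $s_4 \rlh_{p_1,r_1,k_1} s_3$. Reading the arrow directions: $r_2$ is the label of a backward rule (inherited from $t_2$), so the first transition is in fact $s_1 \lh_{p_2,r_2,k_2} s_4$; and $r_1$ is the label of a forward rule (inherited from $t_1$), so the second is $s_4 \rh_{p_1,r_1,k_1} s_3$. This is exactly the statement of the corollary.

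In short, I do not anticipate any genuine obstacle: the Switching Lemma has already done the hard work of reconciling a backward step with a surrounding step of either direction, and the only thing special about the forward/backward pair in this corollary is that the concurrency hypothesis of the Switching Lemma is automatic rather than something to be checked by a case analysis on rule labels. If anything, the care needed is purely bookkeeping—making sure the pids, rule labels, and history items $p_i$, $r_i$, $k_i$ are transported correctly through the application of the Loop Lemma that is hidden inside Lemma~\ref{lemma:switch}—which the statement of Lemma~\ref{lemma:switch} already handles for us.
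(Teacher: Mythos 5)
Your proposal is correct and follows exactly the paper's own argument: apply the Switching Lemma to $t_1$ and $t_2$, with the concurrency hypothesis discharged by the observation that $\ol{t_1}$ and $t_2$ are coinitial backward transitions, which Definition~\ref{def:concurrent} declares always concurrent. The paper's proof is a one-liner saying precisely this; you have merely spelled out the bookkeeping.
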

\begin{proof}
  The corollary follows by applying the switching lemma
  (Lemma~\ref{lemma:switch}), noticing that two backward transitions
  are always concurrent. \qed
\end{proof}
We now formally define the notion of causal equivalence between
derivations, in symbols $\approx$, as the least equivalence relation
between transitions closed under composition that obeys the following
rules:
\[
t_1; t_2/t_1 \approx t_2 ; t_1/t_2
~~~~~~
t;\ol{t} \approx \epsilon_{\init(t)} 
\]
Causal equivalence amounts to say that those derivations that only
differ for swaps of concurrent actions or the removal of successive
inverse actions are equivalent. Observe that any of the notations $t_1;
t_2/t_1$ and $t_2 ; t_1/t_2 $ requires $t_1$ and $t_2$ to be
concurrent. 

\begin{lemma}[Rearranging lemma] \label{lemma:rearrange} Given systems
  $s,s'$, if $d = (s \rlh^\ast s')$, then there exists a system $s''$
  such that $d' = (s \lh^\ast s'' \rh^\ast s')$ and $d\approx
  d'$. Furthermore, $d'$ is not longer than $d$.
\end{lemma}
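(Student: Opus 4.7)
The plan is to proceed by induction on a disorder measure of $d$ that counts how far $d$ is from already having the shape $\lh^\ast \rh^\ast$. A natural choice is
\[
\mu(d) \;=\; \#\{(i,j)\::\: i<j,\; u_i \text{ is forward},\; u_j \text{ is backward}\},
\]
where $u_1,\ldots,u_n$ are the transitions making up $d$. Clearly $\mu(d)=0$ precisely when every backward step in $d$ precedes every forward step, so in the base case I would simply take $d' = d$, with $s''$ the state reached after the backward prefix.

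For the inductive step, assume $\mu(d)>0$. Then there must exist an index $i$ with $u_i$ forward and $u_{i+1}$ backward; write $d = d_1;u_i;u_{i+1};d_2$. I would then split on whether $u_{i+1} = \ol{u_i}$. In the first subcase, the causal equivalence rule $t;\ol{t}\approx\epsilon_{\init(t)}$ gives $d \approx d_1;d_2$, which is strictly shorter and has strictly smaller $\mu$, so the inductive hypothesis applies. In the second subcase, $\ol{u_i}$ and $u_{i+1}$ are two distinct coinitial backward transitions at $\final(u_i)=\init(u_{i+1})$; by Definition~\ref{def:concurrent} two coinitial backward transitions are always concurrent, so the Switching Lemma (Lemma~\ref{lemma:switch}) produces composable transitions $u'_i;u'_{i+1}$ with $u'_i$ backward and $u'_{i+1}$ forward. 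The causal equivalence $u_i;u_{i+1} \approx u'_i;u'_{i+1}$ follows by combining the Square-Lemma identity $\ol{u_i};(u_{i+1}/\ol{u_i})\approx u_{i+1};(\ol{u_i}/u_{i+1})$ with the two inverse-cancellation identities $u_i;\ol{u_i}\approx\epsilon$ and $\ol{u'_{i+1}};u'_{i+1}\approx\epsilon$ (pre- and post-composing appropriately). Replacing $u_i;u_{i+1}$ by $u'_i;u'_{i+1}$ yields an equivalent derivation of the same length but smaller $\mu$.

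The crux of the argument is checking that each local swap strictly decreases $\mu$. A short bookkeeping verifies this: the pair at positions $(i,i{+}1)$ stops being counted (forward-before-backward becomes backward-before-forward), while for any outside index $k$ the contribution of pairs linking $k$ with one of $\{i,i{+}1\}$ is invariant, because the multiset of directions at positions $\{i,i{+}1\}$ is \emph{one forward and one backward} both before and after the swap. Hence each swap lowers $\mu$ by exactly one and leaves the length unchanged, while each cancellation lowers both $\mu$ and the length; consequently $d'$ is never longer than $d$. The main obstacle I anticipate is precisely this accounting---it is routine but easy to get wrong---together with the small but important check that the Switching Lemma's premise is always available in the swap subcase, which is exactly the reason for isolating the cancellation subcase $u_{i+1}=\ol{u_i}$ first.
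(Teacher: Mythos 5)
Your proof is correct and follows essentially the same route as the paper's: isolate an adjacent forward--backward pair, cancel it when the two steps are mutually inverse, and otherwise turn it into a backward--forward pair via the Switching Lemma, justified exactly as you do by the fact that two coinitial backward transitions are always concurrent (the paper packages this swap as Corollary~\ref{corollary:switching}). The only difference is bookkeeping: the paper uses a lexicographic induction on the length of $d$ and the position of the earliest offending pair, commuting that backward step past all preceding forward steps at once, whereas you count inversions and swap locally --- both measures are well-founded and yield the same length bound.
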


\begin{proof}
  The proof is by lexicographic induction on the length of $d$
  and on the number of steps from the earliest pair of
  transitions in $d$ of the form $s_1 \rh s_2 \lh s_3$ to $s'$.  If there is
  no such pair we are done. If $s_1 = s_3$, then $s_1 \rh s_2 = \ol{(s_2
  \lh s_3)}$.  Indeed, if $s_1 \rh s_2$ adds an item to the
  history of some process then $s_2 \lh s_3$ should remove the same
  item. Otherwise, $s_1 \rh s_2$ is an application of rule
  $\mathit{Sched}$ and $s_2 \lh s_3$ should undo the scheduling of the
  same message.  Then, we can remove these two transitions and the
  claim follows by induction since the resulting derivation is shorter
  and $(s_1 \rh s_2 \lh s_3) \approx \epsilon_{s_1}$. Otherwise, we
  apply Corollary~\ref{corollary:switching} commuting $s_2 \lh s_3$
  with all forward transitions preceding it in $d$.  If one such
  transition is its inverse, then we reason as above.  Otherwise, we
  obtain a new derivation $d' \approx d$ which has the same length of
  $d$, and where the distance between the earliest pair of
  transitions in $d'$ of the form $s'_1 \rh s'_2 \lh s'_3$ and $s'$ has decreased.
  The claim follows then by the inductive hypothesis.\qed
\end{proof}
An interesting consequence of the rearranging lemma
is the following result, which states
that every system obtained by both forward and backward steps from an
initial system, is also reachable by a forward-only derivation:

\begin{corollary} \label{cor:lopstr} Let $s$ be an initial system. For
  each derivation $s \rlh^\ast s'$, there exists a forward derivation
  of the form $s \rh^\ast s'$.
\end{corollary}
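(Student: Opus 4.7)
My plan is to derive the corollary as a direct consequence of the Rearranging Lemma (Lemma~\ref{lemma:rearrange}) together with the observation that an initial system admits no backward transitions.

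Given an arbitrary derivation $d = (s \rlh^\ast s')$, I would first apply Lemma~\ref{lemma:rearrange} to obtain a causally equivalent derivation $d' = (s \lh^\ast s'' \rh^\ast s')$, i.e., a derivation in ``loop-strict'' form where all backward steps precede all forward steps. Next, I would argue that the backward prefix $s \lh^\ast s''$ is necessarily empty, so that $s'' = s$ and therefore $s \rh^\ast s'$ is the desired forward-only derivation.

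The emptiness of the backward prefix is a short case analysis on the backward rules in Figure~\ref{fig:backwardsem}. By Definition~\ref{def:reachable}, the initial system $s$ is of the form $\{\,\};\tuple{p,\nil,(\id,e),\nil}$ for a single process, with an empty global mailbox, an empty history, and an empty mailbox. Each of the rules $\ol{\mathit{Seq}}$, $\ol{\mathit{Check}}$, $\ol{\mathit{Send}}$, $\ol{\mathit{Receive}}$, $\ol{\mathit{Spawn}}$, and $\ol{\mathit{Self}}$ requires the history of some process to contain a matching top item, which is impossible since the only process has history $\nil$. Similarly, $\ol{\mathit{Sched}}$ requires the local queue to be non-empty, which also fails. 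Hence no backward transition is enabled from $s$, so the prefix $s \lh^\ast s''$ has length zero.

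The only mildly subtle point — which I would flag but not expect to be a real obstacle — is that Lemma~\ref{lemma:rearrange} is applicable because $s$ is (in particular) reachable, so all intermediate systems in $d$ are reachable as well and the Loop/Square/Switching lemmas used inside its proof remain applicable. With that in place, combining the rearrangement with the emptiness of the backward prefix yields the forward derivation $s \rh^\ast s'$, completing the proof.
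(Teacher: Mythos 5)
Your proposal is correct and matches the argument the paper intends: the corollary is presented precisely as a consequence of the Rearranging Lemma (Lemma~\ref{lemma:rearrange}), with the backward prefix $s \lh^\ast s''$ collapsing because an initial system (single process, empty history, empty queue, empty global mailbox) enables none of the backward rules. Your explicit case analysis on the rules of Figure~\ref{fig:backwardsem} just fills in the details the paper leaves implicit.
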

The following auxiliary result is also needed for proving causal
consistency.

\begin{lemma}[Shortening lemma] \label{lemma:short}
  Let $d_1$ and $d_2$ be coinitial and cofinal derivations, such that
  $d_2$ is a forward derivation while $d_1$ contains at least one backward transition.
  Then,
  there exists a forward derivation $d'_1$ of length strictly less
  than that of $d_1$ such that $d'_1\approx d_1$.
\end{lemma}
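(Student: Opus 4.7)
The plan is to apply the Rearranging Lemma (Lemma~\ref{lemma:rearrange}) to $d_1$, obtaining a causally equivalent derivation $d_1^{*} = (s \lh^m s'' \rh^n s')$ in backward-then-forward form with $|d_1^{*}| \le |d_1|$, and then to split on whether $m = 0$.

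If $m = 0$, then $d_1^{*}$ is purely forward. Since $d_1$ contains at least one backward transition and the rearranging procedure can only eliminate backward transitions via cancellations of inverse pairs $t;\ol{t}$ (the swaps underlying the rearrangement preserve the backward count), at least one such cancellation must have occurred, removing two transitions. Hence $|d_1^{*}| \le |d_1| - 2 < |d_1|$, and I set $d'_1 = d_1^{*}$.

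The main obstacle is the case $m > 0$. Here my plan is to consider the loop $L = d_1^{*};\ol{d_2}$, a derivation from $s$ back to $s$ of length $m + n + |d_2|$, and to show that $L \approx \epsilon_s$. Applying the Rearranging Lemma to $L$ yields $L \approx L' = (s \lh^a \tilde{s} \rh^b s)$. The key observation is that forward derivations are acyclic: every forward rule except $\mathit{Sched}$ strictly enlarges the total history size of the system, and $\mathit{Sched}$ strictly decreases $|\Gamma|$, so the lexicographic measure on these two quantities strictly increases along every forward step; by the Loop Lemma, backward derivations are likewise acyclic. If $\tilde{s} = s$, acyclicity forces $a = b = 0$. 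If instead $a, b > 0$, I would argue that the boundary pair at $\tilde{s}$ must consist of mutually inverse transitions---using that the Loop Lemma makes the inverse of each transition unique and that distinct forward transitions from a common source yield distinct successors---so it cancels; iterating this reduction collapses $L'$ to $\epsilon_s$. The delicate part will be formally justifying this boundary cancellation when $a, b > 0$.

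Once $L \approx \epsilon_s$, composing with $d_2$ gives $d_1^{*} \approx d_2$. Counting transitions in $L$: the $n$ forward and $m + |d_2|$ backward transitions must cancel in pairs of one forward with one backward, so $n = m + |d_2|$ and thus $|d_2| = n - m$. Because $m \ge 1$, we have $|d_2| < m + n = |d_1^{*}| \le |d_1|$, so $d'_1 = d_2$ is a forward derivation of length strictly less than $|d_1|$ and causally equivalent to $d_1$, as required.
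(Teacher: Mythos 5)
There is a genuine gap in the case $m>0$, and it sits exactly where you flag the ``delicate part.'' Your whole argument reduces to showing $L = d_1^{*};\ol{d_2} \approx \epsilon_s$, and after rearranging $L$ into $s \lh^a \tilde{s} \rh^b s$ with $a,b>0$ you claim the boundary pair at $\tilde{s}$ must consist of mutually inverse transitions. This is false in general: the last backward transition might undo, say, a $\mathit{Seq}$ step of process $p$, while the first forward transition performs a $\mathit{Sched}$ delivering a message to a different process $q$; these are plainly not inverses, yet the loop can still close because the re-doing of $p$'s step occurs later in the forward suffix. Uniqueness of inverses and injectivity of forward transitions do not rule this out. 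What is actually needed is to locate, inside the forward suffix, the transition $t$ that re-creates the history item (or, for $\ol{\mathit{Sched}}$, re-delivers the message) removed by the last backward transition, and to prove that $t$ is concurrent with---and hence can be switched past---every forward transition preceding it. That concurrency analysis (including the separate, queue-based treatment of $\ol{\mathit{Sched}}$, which leaves no history item and so cannot be tracked the same way) is the entire technical content of the paper's proof of this lemma; your proposal defers it to a single unproven sentence. There is also a circularity concern: the assertion that a backward-then-forward loop is causally equivalent to $\epsilon_s$ is essentially the hard direction of causal consistency restricted to this shape, and in the paper that theorem is proved \emph{using} the Shortening Lemma.

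The surrounding scaffolding is fine: the $m=0$ case (a cancellation must have occurred, so the length drops by at least $2$), the acyclicity of forward derivations via the lexicographic measure on total history size and $-|\Gamma|$, and the counting argument showing $|d_2| = n-m < |d_1|$ are all correct, the last one because the difference between forward and backward transition counts is invariant under $\approx$. But these do not substitute for the missing commutation argument, so the proof as proposed does not go through.
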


\begin{proof}
  We prove this lemma by induction on the length of $d_1$.  By the
  rearranging lemma (Lemma \ref{lemma:rearrange}) there exist a
  backward derivation $d$ and a forward derivation $d'$ such that
  $d_1\approx d;d'$. Furthermore, $d;d'$ is not longer than $d_1$. Let
  $s_1 \lh_{p_1,\ol{r_1},k_1} s_2 \rh_{p_2,r_2,k_2} s_3$ be the only
  two successive transitions in $d;d'$ with opposite direction.
  We will show below that there is in $d'$ a transition $t$ which is the inverse
  of $s_1 \lh_{p_1,\ol{r_1},k_1} s_2$. Moreover, we can swap $t$ with all the
      transitions between $t$ and
      $s_1 \lh_{p_1,\ol{r_1},k_1} s_2$, in order to obtain
      a derivation in which
      $s_1 \lh_{p_1,\ol{r_1},k_1} s_2$ and $t$ are
      adjacent.\!\footnote{More precisely, the transition is not $t$, but a
    transition that applies the same rule to the same process and
    producing the same history item, but possibly applied to a
    different system.} To do so we use the switching lemma
      (Lemma~\ref{lemma:switch}), since for all transitions $t'$ in
      between, we have that $\ol{t'}$ and $t$ are concurrent (this is proved below too). When $s_1 \lh_{p_1,\ol{r_1},k_1} s_2$ and $t$
      are adjacent we can remove both of them using $\approx$. The
      resulting derivation is strictly shorter, thus the claim follows
      by the inductive hypothesis.\\

  Let us now prove the results used above.
  Thanks to the loop lemma (Lemma \ref{lemma:loop}) we have the
  derivations above iff we have two forward derivations which are
  coinitial (with $s_2$ as initial state) and cofinal: $\ol{d};d_2$ and
  $d'$.  We first consider the case where $\ol{r_1} \neq
  \ol{\mathit{Sched}}$.  Since the first transition of $\ol{d};d_2$,
  $\ol{(s_1 \lh_{p_1,\ol{r_1},k_1} s_2)}$, adds item $k_1$ to the
  history of $p_1$ and such an item is never removed (since the
  derivation is forward), then the same item $k_1$ has to be added
  also by a transition in $d'$, otherwise the two derivations cannot
  be cofinal.
  The earliest transition in
  $d'$ adding item $k_1$ is exactly $t$.

  Let us now justify that for each transition $t'$ before $t$ in $d'$ we have that $\ol{t'}$ and $t$ are
  concurrent. First, $t'$ is a forward transition and it should be
  applied to a process which is different from $p_1$, otherwise
  the item $k_1$ would be added by transition $t$ in the wrong position in the history of $p_1$.
  We
  consider the following cases:
  \begin{itemize}
  \item If $t'$ applies rule
    $\mathit{Spawn}$ to create a process $\mathrm{p}$, then $t$ should
    not apply to process $\mathrm{p}$ since the process $\mathrm{p_1}$ to which $t$ applies
    already existed before $t'$. Therefore, $\ol{t'}$
    and $t$ are concurrent.
  \item If $t'$ applies rule $\mathit{Send}$ to send a message to some
    process $\mathrm{p}$, then $t$ cannot deliver the same message
    since we know that $t$ is not a $\mathit{Sched}$ since it adds item $k_1$ to the history.
    Thus $\ol{t'}$ and $t$ are
    concurrent.

  \item If $t'$ applies some other rule,
    then $t'$ and $t$ are clearly concurrent.
  \end{itemize}
  Now, we consider the case $\ol{r_1} = \ol{\mathit{Sched}}$ with
  $k_1 = \mathsf{sched}(\{v,\k\})$, so that
  $\ol{(s_1 \lh_{p_1,\ol{\mathit{Sched}},k_1} s_2)}$ adds a message
  $\{v,\k\}$ to the queue of $p_1$.
  We now distinguish two cases according to whether there is in $\ol{d};d_2$ an application of rule $\mathit{Receive}$ to $p_1$ or not:
  \begin{itemize}
  \item If the forward derivation $\ol{d};d_2$ contains no application
    of rule $\mathit{Receive}$ to $p_1$ then, in the final
    state, the queue of process $p_1$ contains the message.
    Hence, $d'$ needs to contain a $\mathit{Sched}$ for the same message.
    The earliest such
    $\mathit{Sched}$ transition in $d'$ is exactly $t$.

    Let us now justify that for each transition $t'$ before $t$ in $d'$ we have that $\ol{t'}$ and $t$ are
  concurrent. Consider the case where $t'$ applies rule
      $\mathit{Sched}$ to deliver a different message to the same
      process $p_1$. Since no $\mathit{Receive}$ would be performed on
      $p_1$ then the queue will stay different, and the two
      derivations could not be cofinal, hence this case can never
      happen. In all the other cases the two transitions are concurrent.

  \item If the forward derivation $\ol{d};d_2$ contains at least an
    application of rule $\mathit{Receive}$ to $p_1$, let us consider the
    first such application.  This creates a history item $k_2$. In
    order for the two derivations to be cofinal, the same history item
    needs to be created in $d'$. The queue stored in $k_2$ has a
    suffix $\{v,\k\}\cons q$, hence also in $d'$ the first
    $\mathit{Sched}$ delivering a message to $p_1$ should deliver
    message $\{v,\k\}$.  Since there are no other $\mathit{Sched}$ nor
    $\mathit{Receive}$ targeting $p_1$ then the $\mathit{Sched}$
    delivering message $\{v,\k\}$ to $p_1$ is concurrent to all
    previous transitions as desired. \qed
\end{itemize}
\end{proof}
Finally, we can state and prove the causal consistency of our
reversible semantics. Intuitively speaking, it states that two
different derivations starting from the same initial state can reach
the same final state if and only if they are causal consistent. On the
one hand, it means that derivations which are causal consistent lead
to the same final state, hence it is not possible to distinguish such
derivations looking at their final states (as a consequence, also
their possible evolutions coincide). In particular, swapping two
concurrent transitions or doing and undoing a given transition has no
impact on the final state. On the other hand, derivations differing in
any other way are distinguishable by looking at their final state, e.g., the
final state keeps track of any past nondeterministic choice. In other
terms, causal consistency states that the amount of history
information stored is precisely what is needed to distinguish
computations which are not causal consistent, and no more.

\begin{theorem}[Causal consistency] \label{thm:causal}
  Let $d_1$ and $d_2$ be coinitial derivations. Then, $d_1\approx d_2$
  iff $d_1$ and $d_2$ are cofinal.
\end{theorem}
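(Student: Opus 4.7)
The plan is to prove the two implications separately, with the easy direction first.

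For the ($\Rightarrow$) direction, I would argue that $\approx$ preserves cofinality. Since $\approx$ is the least equivalence relation closed under composition generated by the two axioms $t_1;t_2/t_1 \approx t_2;t_1/t_2$ and $t;\ol{t}\approx\epsilon_{\init(t)}$, it is enough to check the two generators. The first relates cofinal derivations by the square lemma (Lemma~\ref{lemma:square}), which fills in the diamond. The second does too, trivially, since the loop lemma (Lemma~\ref{lemma:loop}) says $\ol{t}$ undoes $t$, so $\init(t) = \final(\ol{t}) = \final(t;\ol{t}) = \final(\epsilon_{\init(t)})$. Cofinality is then clearly preserved by composition and by the reflexive, symmetric and transitive closure.

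For the ($\Leftarrow$) direction, I would proceed by strong induction on $|d_1|+|d_2|$. The first step is to use the rearranging lemma (Lemma~\ref{lemma:rearrange}) to replace $d_1$ by an equivalent derivation of the form $d_1^{-};d_1^{+}$ with $d_1^{-}$ backward and $d_1^{+}$ forward, and similarly for $d_2$. Since $\approx$ preserves cofinality (by the first direction), the two normalised derivations are still coinitial and cofinal. Now if either of them contains a backward step, I would apply the shortening lemma (Lemma~\ref{lemma:short}): the cofinal derivations $d_1^{-};d_1^{+}$ and $d_2^{-};d_2^{+}$ share the common endpoint, and if, say, $d_1^{-}$ is non-empty, then composing with $\ol{d_2^{+}};\ol{d_2^{-}}$ yields a mixed coinitial cofinal pair in which the shortening lemma produces a strictly shorter equivalent; the induction hypothesis then closes the case. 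After finitely many iterations both derivations are purely forward.

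The remaining base case is the heart of the argument: two coinitial, cofinal, forward-only derivations $d_1$ and $d_2$ are $\approx$-equivalent. I would prove this by a secondary induction on $|d_1|+|d_2|$. If either is empty, cofinality forces the other to be empty as well (since forward transitions can never return to the initial state, as every rule modifies histories, mailboxes or the global mailbox monotonically in a recognisable way). Otherwise let $t_1$ and $t_2$ be the first transitions of $d_1$ and $d_2$. If $t_1 = t_2$, apply the induction hypothesis to the tails. If $t_1 \neq t_2$, I claim they must be concurrent: otherwise they are in conflict in the sense of Definition~\ref{def:concurrent}, and then the difference they introduce on the affected process (either on its history, its mailbox, or the global mailbox for $\mathit{Sched}$/$\mathit{Sched}$ conflicts) cannot be reconciled by any forward continuation, contradicting cofinality. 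Concurrency gives us, via the square lemma, transitions $t_2/t_1$ and $t_1/t_2$ leading to a common state $s'$. Then $t_1;(t_2/t_1)\approx t_2;(t_1/t_2)$ by the defining axiom, and both tails lead from $s'$ to the shared final state, so the induction hypothesis applies to pair them with the tails of $d_1$ and $d_2$, yielding $d_1 \approx d_2$.

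The main obstacle I anticipate is the claim that two distinct initial transitions in coinitial cofinal forward derivations must be concurrent; ruling out each conflict pattern of Definition~\ref{def:concurrent} requires a careful case analysis showing that the differing history item, queue state, or global mailbox content produced by conflicting first moves is preserved (or only altered in distinguishable ways) by any subsequent forward evolution, so that the two derivations could never reach a common final state.
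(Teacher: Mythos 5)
Your forward direction is fine, and your overall architecture (normalise with the rearranging lemma, remove backward steps with the shortening lemma, then handle two forward derivations by induction) has the right shape; your anticipated ``main obstacle'' -- that distinct first transitions of coinitial, cofinal, forward derivations must be concurrent -- is indeed argued in the paper exactly as you sketch. The first genuine gap is in the reduction to forward-only derivations. Lemma~\ref{lemma:short} requires one of the two coinitial--cofinal derivations to be \emph{forward}; when both normalised derivations $d_1^{-};d_1^{+}$ and $d_2^{-};d_2^{+}$ have non-empty backward prefixes, neither side satisfies that hypothesis. Your move of ``composing with $\ol{d_2^{+}};\ol{d_2^{-}}$'' yields a single cyclic derivation; you do not say what the second member of the ``pair'' is, and a shorter equivalent of the \emph{composite} gives no handle on $d_1$ or $d_2$ individually, so neither the appeal to the induction hypothesis nor the claim that ``after finitely many iterations both derivations are purely forward'' is justified. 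The paper instead compares the \emph{earliest differing transitions} of the two normalised derivations: if one is forward and the other backward, the suffix of the normal form starting at the forward one is entirely forward, so the shortening lemma applies directly to the two suffixes; a separate case treats two differing backward transitions. (Your composition idea can actually be rescued -- pair the cyclic composite $d_1;\ol{d_2}$ with $\epsilon_s$, use the shortening lemma plus the fact that a non-empty forward derivation cannot return to its start to get $d_1;\ol{d_2}\approx\epsilon_s$, hence $d_1\approx d_2$ outright -- but that is a different, induction-free proof, and it is not what you wrote.)

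The second gap is in the forward--forward base case. After the square lemma you have $t_2/t_1$ and $t_1/t_2$ ending in a common state $s'$, but the tails $d_1'$ and $d_2'$ start at $\final(t_1)$ and $\final(t_2)$, not at $s'$, and nothing guarantees that the shared final state is reachable from $s'$ by a forward derivation; without such a completion there is no pair of derivations to which your induction hypothesis can be applied, so the tiling does not close. The paper's case (2) instead uses cofinality to locate \emph{inside} $d_2$ the transition $t_1'$ producing the same history item as $t_1$ (or scheduling the same message, for $\mathit{Sched}$ steps) -- it must exist because forward steps never erase history items -- shows by a case analysis that $t_1'$ is concurrent with, and can be permuted past, every transition preceding it via the switching lemma (Lemma~\ref{lemma:switch}), and only then peels off the now-common first step, with a lexicographic induction whose second component is the position of the earliest disagreement. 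You need this ``locate the corresponding transition and permute it to the front'' step; a single application of the square lemma to the two first transitions is not enough.
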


\begin{proof}
  By definition of $\approx$, if $d_1\approx d_2$, then they are
  coinitial and cofinal, so this direction of the theorem is verified.

  Now, we have to prove that, if $d_1$ and $d_2$ are coinitial and
  cofinal, then $d_1\approx d_2$. By the rearranging lemma
  (Lemma~\ref{lemma:rearrange}), we know that the two derivations can
  be written as the composition of a backward derivation, followed by
  a forward derivation, so we assume that $d_1$ and $d_2$ have this
  form. The claim is proved by lexicographic induction on the sum of
  the lengths of $d_1$ and $d_2$, and on the distance between the end
  of $d_1$ and the earliest pair of transitions $t_1$ in $d_1$ and
  $t_2$ in $d_2$ which are not equal. If all such transitions are
  equal, we are done. Otherwise, we have to consider three cases
  depending on the directions of the two transitions:
  \begin{enumerate}
  \item Consider that $t_1$ is a forward transition and $t_2$ is a
    backward one. Let us assume that $d_1 = d;t_1;d'$ and
    $d_2 = d;t_2;d''$. Here, we know that $t_1;d'$ is a forward
    derivation, so we can apply the shortening lemma
    (Lemma~\ref{lemma:short}) to the derivations $t_1;d'$ and
    $t_2;d''$ (since $d_1$ and $d_2$ are coinitial and cofinal, so are
    $t_1;d'$ and $t_2;d''$), and we have that $t_2;d''$ has a strictly
    shorter forward derivation which is causally equivalent, and so
    the same is true for $d_2$. The claim then follows by induction.

  \item Consider now that both $t_1$ and $t_2$ are forward
    transitions.  By assumption, the two transitions must be
    different. Let us assume first that they are not
    concurrent. Therefore, they should be applied to the same process
    and either both rules are $\mathit{Sched}$, or one is
    $\mathit{Sched}$ and the other one is $\mathit{Receive}$. In the
    first case, we get a contradiction to the fact that $d_1$ and
    $d_2$ are cofinal since both derivations are forward and, thus, we
    would either have a different queue in the process or different
    items $\mathsf{rec}(\ldots)$ in the history. In the second case,
    where we have one rule $\mathit{Sched}$ and one
    $\mathit{Receive}$, the situation is similar. Therefore, we can
    assume that $t_1$ and $t_2$ are concurrent transitions.

    We have two cases, according to whether $t_1$ is an application of
    $\mathit{Sched}$ or not. If it is not, let $t'_1$ be the
    transition in $d_2$ creating the same history item as $t_1$. Then,
    we have to prove that $t'_1$ can be switched back with all
    previous forward transitions.  This holds since no previous
    forward transition can add any history item to the same process,
    since otherwise the two derivations could not be cofinal.  Hence
    the previous forward transitions are applied to different
    processes and thus we never have a conflict since the only
    possible sources of conflict would be rules $\mathit{Spawn}$ and
    $\mathit{Sched}$, but this could not happen since, in this case,
    $t_1$ could not happen neither.

    If $t_1$ is an application of $\mathit{Sched}$ then we can find
    the transition $t'_1$ in $d_2$ scheduling the same message
    (otherwise the two derivations could not be cofinal), and show
    that it can be switched with all the previous transitions. If the
    previous transition targets a different process then the only
    possible conflicts are with rules $\mathit{Send}$ or
    $\mathit{Spawn}$, but in this case $t_1$ could not have been
    performed. If the previous transition targets the same process
    then the only possible conflicts are with rules $\mathit{Sched}$
    or $\mathit{Receive}$, but in this case the derivations could not
    be cofinal.

    Then, in all the cases, we can repeatedly apply the switching
    lemma (Lemma~\ref{lemma:switch}) to have a derivation causally
    equivalent to $d_2$ where $t_2$ and $t'_1$ are consecutive. The
    same reasoning can be applied in $d_1$, so we end up with
    consecutive transitions $t_1$ and $t'_2$. Finally, we can apply
    the switching lemma once more to $t_1;t'_2$ so that the first pair
    of different transitions is now closer to the end of the
    derivation. Hence the claim follows by the inductive hypothesis.

  \item Finally, consider that both $t_1$ and $t_2$ are backward
    transitions.
    By definition, we have that $t_1$ and $t_2$ are
    concurrent. Let us consider first that the rules applied in the
    transitions are different from $\ol{\mathit{Sched}}$. Then, we
    have that $t_1$ and $t_2$ cannot remove the same history item. Let
    $k_1$ be the history item removed by $t_1$. Since $d_1$ and $d_2$
    are cofinal, either there is another transition in $d_1$ that puts
    $k_1$ back in the history or there is a transition $t'_1$ in $d_2$
    removing the same history item $k_1$. In the first case,
    $\ol{t_1}$ should be concurrent to all the backward transitions
    following it but the ones that remove history items from the
    history of the same process. All the transitions of this kind have
    to be undone by corresponding forward transitions (since they are
    not possible in $d_2$). Consider the last such transition: we can
    use the switching lemma (Lemma~\ref{lemma:switch}) to make it the
    last backward transition. Similarly, the forward transition
    undoing it should be concurrent to all the previous forward
    transitions (the reason is the same as in the previous
    case). Thus, we can use the switching lemma again to make it the
    first forward transition. Finally, we can apply the simplification
    rule $t;\ol{t} \approx \epsilon_{\init(t)}$ to remove the two
    transitions, thus shortening the derivation. In the second case
    (there is a transition $t'_1$ in $d_2$ removing the same history
    item $k_1$), one can argue as in case (2) above.  The claim then
    follows by the inductive hypothesis.

    The case when at least one of the rules applied in the transitions
    is $\ol{\mathit{Sched}}$ follows by a similar reasoning by
    considering the respective queues instead of the histories. \qed
  \end{enumerate}
\end{proof}
We now show that, as a corollary of previous results, a transition can
be undone if and only if each of its consequences, if any, has been
undone. Formally, a \emph{consequence} of a forward transition $t$ is a
forward transition $t'$ that can only happen after $t$ has been
performed (assuming $t$ has not been undone in between). 
Hence $t'$ cannot be switched with $t$. 
E.g., consuming a message from the queue of a process (using
rule $\mathit{Receive}$) is a consequence of delivering this
message (using rule $\mathit{Sched}$). Similarly, every action
performed by a process is a consequence of spawning this process. 

\begin{corollary} \label{cor:consequences}
  Let
  $d = (s_1 \rlh \cdots \rlh s_n \rh s_{n+1} \rlh \cdots \rlh s_m)$ be
  a derivation, with $t=(s_n \rh_{p,r,k} s_{n+1})$ a forward
  transition. Then, transition $\ol{t}$ can be applied to $s_m$, i.e.,
  $s_m \lh_{p,\ol{r},k} s_{m+1}$ iff each consequence of $t$ in $d$,
  if any, has been undone in $d$.
\end{corollary}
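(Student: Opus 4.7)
The plan is to connect the informal notion of \emph{consequence} with the formal notion from Definition~\ref{def:concurrent}: a forward transition $t'$ occurring after $t$ in $d$ is a consequence of $t$ precisely when $\ol{t}$ and $t'$ (once placed in a configuration where the comparison makes sense) are not concurrent. The forms of non-concurrency listed there match exactly the intuitive cases of causal dependency: an action on a process spawned by $t$, the delivery or reception of a message sent by $t$, the next action of the process that performed $t$, and so on. I will use this equivalence implicitly throughout both directions.

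For the \emph{only if} direction, suppose $\ol{t}$ is applicable to $s_m$. By inspection of the backward rules in Figure~\ref{fig:backwardsem}, applicability of $\ol{t}$ forces the state $s_m$ to ``match'' the state immediately after $t$ was fired: the topmost history item of $p$ must be the item $k$ created by $t$; if $t$ was a $\mathit{Send}$, the sent message must still reside in $\Gamma$; if $t$ was a $\mathit{Spawn}$, the spawned process must still have empty history and queue; and analogously for the other rules. If some consequence $t'$ of $t$ had not been undone, one of these preconditions would fail (the topmost history item would be that of $t'$ or something further, the sent message would have been scheduled out of $\Gamma$, the spawned process would carry a non-empty history, etc.). A short case analysis on the rule label of $t$ makes this precise.

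For the \emph{if} direction, suppose every consequence of $t$ in $d$ has been undone; I construct via $\approx$ a derivation equivalent to $d$ ending with $t$, from which $\ol{t}$ applies to $s_m$ by the Loop lemma (Lemma~\ref{lemma:loop}). Write $d = d_{\mathrm{pre}}; t; d_{\mathrm{post}}$ and induct on the length of $d_{\mathrm{post}}$. If $d_{\mathrm{post}}$ is empty, we are done. Otherwise let $t''$ be its first transition. If $\ol{t}$ and $t''$ are concurrent, Lemma~\ref{lemma:switch} lets us swap $t$ past $t''$, shortening the suffix after $t$ and closing by induction. Otherwise, since two backward transitions are always concurrent, $t''$ must be a forward consequence of $t$; by hypothesis an inverse $\ol{t''}$ of $t''$ appears later in $d_{\mathrm{post}}$. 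Bring $t''$ and $\ol{t''}$ adjacent via further applications of the Switching lemma, then cancel them using $t''; \ol{t''} \approx \epsilon_{\init(t'')}$ to obtain a strictly shorter equivalent derivation, and close by induction. The main obstacle is the adjacency step: the transitions between $t''$ and $\ol{t''}$ may themselves be forward consequences of $t''$ and hence non-concurrent with $\ol{t''}$, requiring recursive cancellations. This is resolved by a nested induction patterned on the proof of the Shortening lemma (Lemma~\ref{lemma:short}), together with the auxiliary invariant that the property ``every consequence of $t$ is undone'' is preserved by the transformations performed.
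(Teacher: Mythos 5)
Your proof is essentially correct, and the two directions split neatly: the \emph{if} direction follows the paper's route, while the \emph{only if} direction is genuinely different. For the \emph{if} direction the paper does exactly what you do --- bring each consequence $t'$ adjacent to its inverse $\ol{t'}$ via the Switching lemma (handling nested consequences first), cancel with $t';\ol{t'}\approx\epsilon_{\init(t')}$, push $t$ to the end of the derivation, and conclude by the Loop lemma; your explicit double induction and the remark that the invariant ``every consequence of $t$ is undone'' must be preserved under the $\approx$-transformations make precise something the paper leaves implicit. For the \emph{only if} direction the paper instead applies the Rearranging lemma to $d;\ol{t}$ and argues that $\ol{t}$ can only be commuted back to meet $t$ if all consequences have been cancelled; your argument by direct inspection of the applicability conditions of the backward rules (topmost history item equal to $k$, message present in $\Gamma$ for $\ol{\mathit{Send}}$, empty history and queue for $\ol{\mathit{Spawn}}$, newest-message and $\mathsf{rec}$ side conditions for $\ol{\mathit{Sched}}$) is more elementary and arguably more informative, since it exhibits concretely which precondition a surviving consequence would violate. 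The one place where your case analysis needs a little extra care is that a $\mathit{Sched}$ consequence leaves no history item, so ``its effect is absent at $s_m$'' must be traced through the local queue (distinguishing whether the message was put back by $\ol{\mathit{Sched}}$ or consumed by a $\mathit{Receive}$, the latter being a further consequence visible in the history); this is routine but should be said. Both approaches are at a comparable level of rigour, and yours has the advantage of not depending on the Rearranging lemma for this direction.
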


\begin{proof}
  If each consequence $t'$ of $t$ in $d$ has been undone in $d$ then
  we can find $d' \approx d$ with no consequence of $t$, by moving
  each consequence $t'$ and its undoing $\ol{t'}$ close to each other
  (they can be switched using the switching lemma
  (Lemma~\ref{lemma:switch}) with all the transitions in between, but
  for further consequences which can be removed beforehand) and then
  applying $t';\ol{t'} \approx \epsilon_{\init(t')}$. Then we can find
  $d'' \approx d'$ where $t$ is the last transition, since $t$ is
  concurrent to all subsequent transitions, hence we can apply the
  switching lemma (Lemma~\ref{lemma:switch}) again. The thesis then
  follows by applying the loop lemma (Lemma~\ref{lemma:loop}).

  Assume now that transition $\ol{t}$ can be applied to $s_m$. Thanks to the
  rearranging lemma (Lemma~\ref{lemma:rearrange}) there is a
  derivation $d_b;d_f \approx d;\ol{t}$ where $d_b$ is a backward
  derivation and $d_f$ is a forward derivation. In order to transform
  $d;\ol{t}$ into $d_b;d_f$ we need to move $\ol{t}$ backward using the switching
  lemma (Lemma~\ref{lemma:switch}) until we find $t$. However, neither
  $t$ nor $\ol{t}$ can be switched with the consequences of $t$, hence
  the only possibility is that all the consequences $t'$ of $t$ can be
  removed using $t';\ol{t'} \approx \epsilon_{\init(t')}$ as above. \qed
\end{proof}

\section{Rollback Semantics} \label{sec:rollbacksem}

In this section, we introduce a (nondeterministic) ``undo'' operation
which has some similarities to, e.g., the rollback operator of
\cite{LMSS11,GLM14}.  Here, processes in ``rollback'' mode are annotated
using $\lfloor ~ \rfloor_\Psi$, where $\Psi$ is the set of requested
rollbacks. A typical rollback refers to a checkpoint that the backward
computation of the process has to go through before resuming its
forward computation. To be precise, we distinguish the following types
of rollbacks:
\begin{itemize}
\item $\#_\mathsf{ch}^\mathtt{t}$, where ``$\mathsf{ch}$'' stands for
  ``checkpoint'': a rollback to undo the actions of a process until a
  checkpoint with identifier $\mathtt{t}$ is reached;
  
\item $\#_\mathsf{sp}$, where ``$\mathsf{sp}$'' stands for ``spawn'':
  a rollback to undo \emph{all} the actions of a process, finally
  deleting it from the system;
  
\item $\#_\mathsf{sch}^{\k}$, where ``$\mathsf{sch}$'' stands for
  ``sched'': a rollback to undo the actions of a process until the
  delivery of a message $\{v,\k\}$ is undone.
\end{itemize}
In the following, in order to simplify the reduction rules, we
consider that our semantics satisfies the following \emph{structural equivalence}:
\[
  \begin{array}{ll}
    (\mathit{SC}) & {\displaystyle
                    \Gamma;
                    \lfloor\tuple{p,\h,(\theta,e),q}\rfloor_{\emptyset}
                    \:\comp\: \Pi
                         ~\equiv~
                         \Gamma;\tuple{p,\h,(\theta,e),q}\:\comp\: \Pi
                                                              }
  \end{array}
\]
Note that only the first of the rollback types above targets a
checkpoint. This kind of checkpoint is introduced nondeterministically
by the rule below, where we denote by $\lhh$ the new reduction
relation that models backward moves of the rollback semantics:
\[
  \begin{array}{ll}
    (\mathit{\ol{Undo}}) & {\displaystyle
                           \Gamma;
                           \lfloor\tuple{p,\h,(\theta,e),q}\rfloor_{\Psi}
                           \:\comp\: \Pi
                                \lhh
                                \Gamma;\lfloor\tuple{p,\h,(\theta,e),q}\rfloor_{\Psi\cup\{\#_\mathsf{ch}^\mathtt{t}\}}\:\comp\: \Pi
    }
    \\[1ex] & \mbox{if } \mathsf{check}(\theta',e',\mathtt{t}) \mbox{ occurs
      in } \h, ~\mbox{for some $\theta'$ and $e'$}
  \end{array}
\]
Only after this rule is applied steps can be undone, since default
computation in the rollback semantics is forward.

\begin{figure}[t]
  \footnotesize
 \[
  \hspace{-3ex}
  \begin{array}{r@{~~}c}
      (\mathit{\ol{Seq}}) & {\displaystyle
        \Gamma;
                            \lfloor\tuple{p,\tau(\theta,e)\cons\h,(\theta',e'),q}\rfloor_{\Psi}
                            \:\comp\: \Pi
          \lhh  \Gamma;\lfloor\tuple{p,\h,(\theta,e),q}\rfloor_{\Psi}\:\comp\: \Pi
        }
      \\[2ex]

      (\mathit{\ol{Check}}) & {\displaystyle
        \Gamma;
                            \lfloor\tuple{p,\mathsf{check}(\theta,e,\mathtt{t})\cons\h,(\theta',e'),q}\rfloor_{\Psi}
                            \:\comp\: \Pi
          \lhh  \Gamma;\lfloor\tuple{p,\h,(\theta,e),q}\rfloor_{\Psi\setminus\{\#_\mathsf{ch}^\mathtt{t}\}}\:\comp\: \Pi
        } 
      \\[2ex]

    (\mathit{\ol{Send1}}) & {\displaystyle
      \begin{array}{l}
        \Gamma\cup\{(p',\{v,\k\})\};\lfloor\tuple{p,\mathsf{send}(\theta,e,p',\{v,\k\})\cons\h,(\theta',e'),q}\rfloor_{\Psi}\:\comp\:
        \Pi
        \lhh \Gamma;\lfloor\tuple{p,\h,(\theta,e),q}\rfloor_{\Psi} 
        \:\comp\: \Pi\\
    \end{array}
  }\\[2ex]

    (\mathit{\ol{Send2}}) & {\displaystyle
      \begin{array}{l}
        \Gamma;\lfloor\tuple{p,\mathsf{send}(\theta,e,p',\{v,\k\})\cons\h,(\theta',e'),q}\rfloor_{\Psi}\:\comp\:
                                                                                                            \lfloor\tuple{p',\h',(\theta'',e''),q'}\rfloor_{\Psi'}\:\comp\:\Pi\\
        \lhh 
        \Gamma;\lfloor\tuple{p,\mathsf{send}(\theta,e,p',\{v,\k\})\cons\h,(\theta',e'),q}\rfloor_{\Psi}\:\comp\:
                                                                                                            \lfloor\tuple{p',\h',(\theta'',e''),q'}\rfloor_{\Psi'\cup\{\#_\mathsf{sch}^\k\}}\:\comp\:
        \Pi\\
        \hspace{20ex}\mbox{if}~(p',\{v,\k\})~\mbox{does not
        occur
          in}~\Gamma~\mbox{and}~\#_\mathsf{sch}^{\k}\not\in\Psi'
    \end{array}
  }\\[5ex]

      (\mathit{\ol{Receive}}) & {\displaystyle
        \Gamma;\lfloor\tuple{p,\mathsf{rec}(\theta,e,\{v,\k\},q)\cons\h,(\theta',e'), q\backslash\!\!\backslash\{v,\k\}}\rfloor_{\Psi}\:\comp\: \Pi
          \lhh  \Gamma;\lfloor\tuple{p,\h,(\theta,e),q}\rfloor_{\Psi}\:\comp\: \Pi
        }
      \\[2ex]
      
    (\mathit{\ol{Spawn1}}) & {\displaystyle
      \begin{array}{l}
       \Gamma;\lfloor\tuple{p,\mathsf{spawn}(\theta,e,p'')\cons\h,(\theta',e'),q}\rfloor_{\Psi}
        \:\comp\: 
             \lfloor\tuple{\nil,p'',(\theta'',e''),\nil}\rfloor_{\Psi'}
        \:\comp\: \Pi 
        \\
        \hspace{20ex}\lhh
       \Gamma;\lfloor\tuple{p,\h,(\theta,e),q}\rfloor_{\Psi}
        \:\comp\: \Pi 
      \end{array}
      }\\[3ex]

    (\mathit{\ol{Spawn2}}) & {\displaystyle
      \begin{array}{l}
       \Gamma;\lfloor\tuple{p,\mathsf{spawn}(\theta,e,p'')\cons\h,(\theta,e),q}\rfloor_{\Psi}
        \:\comp\: 
             \lfloor\tuple{p'',\h'',(\theta'',e''),q''}\rfloor_{\Psi'}
        \:\comp\: \Pi 
        \\
        \hspace{0ex}\lhh
       \Gamma;\lfloor\tuple{p,\mathsf{spawn}(\theta,e,p'')\cons\h,(\theta,e),q}\rfloor_{\Psi}
        \:\comp\: 
             \lfloor\tuple{p'',\h'',(\theta'',e''),q''}\rfloor_{\Psi'\cup\{\#_\mathsf{sp}\}}
        \:\comp\: \Pi \\
        \hspace{20ex}\mbox{if}~\h''\neq\nil \lor q''\neq\nil~\mbox{and}~\#_\mathsf{sp}\not\in\Psi'
      \end{array}
      }\\[4ex]

    (\mathit{\ol{Self}}) & {\displaystyle
       \Gamma;\lfloor\tuple{p,\mathsf{self}(\theta,e)\cons\h,(\theta',e'),q}\rfloor_{\Psi} 
        \:\comp\: \Pi \lhh
      \Gamma;\lfloor\tuple{p,\h,(\theta,e),q}\rfloor_{\Psi} 
        \:\comp\: \Pi  
      }\\[2ex]

    (\mathit{\ol{Sched}}) & {\displaystyle
      \begin{array}{l}
      \Gamma;\lfloor\tuple{p,\h,(\theta,e),\{v,\k\}\cons q}\rfloor_{\Psi} \: \comp\: \Pi \lhh 
      \Gamma\cup(p,\{v,\k\});\lfloor\tuple{p,\h,(\theta,e),q}\rfloor_{\Psi\setminus\{\#_\mathsf{sch}^\k\}}\:\comp\:\Pi\\
        \hspace{20ex} 
       \mbox{if the topmost $\mathsf{rec}(\ldots)$ item in
        $\h$ (if any) has the}\\ 
      \hspace{20ex}\mbox{form}~\mathsf{rec}(\theta',e',\{v',\k'\},q')~\mbox{with}~q'\backslash\!\!\backslash\{v',\k'\}\neq
        \{v,\k\}\cons q
    \end{array}
    }
  \end{array}
  \]
  \caption{Rollback semantics: backward reduction rules} \label{fig:rollsem}
\end{figure}

The backward rules of the rollback semantics are shown in
Figure~\ref{fig:rollsem}. Here, we assume that $\Psi\neq\emptyset$
(but $\Psi'$ might be empty).

Note that, while rollbacks to checkpoints are generated
nondeterministically by rule $\mathit{\ol{Undo}}$, the two other kinds
of checkpoints are generated by the backward reduction rules in order
to ensure causal consistency (in the sense of
Corollary~\ref{cor:consequences}). This is clarified by the discussion
below, where we briefly explain the main differences w.r.t.\ the
uncontrolled backward semantics:
\begin{itemize}
\item As in the uncontrolled semantics of
  Figure~\ref{fig:backwardsem}, the sending of a message can be undone
  when the message is still in the global mailbox (rule
  $\ol{\mathit{Send1}}$). Otherwise, one may need to first apply rule
  $\ol{\mathit{Send2}}$ in order to ``propagate'' the rollback mode to
  the receiver of the message, so that rules $\ol{\mathit{Sched}}$ and
  $\ol{\mathit{Send1}}$ can be eventually applied.

\item As for undoing the spawning of a process $p''$, rule
  $\ol{\mathit{Spawn1}}$ steadily applies when both the history and
  the queue of the spawned process $p''$ are empty, thus deleting both
  the history item in $p$ and the process $p''$. Otherwise, we apply
  rule $\ol{\mathit{Spawn2}}$ to propagate the rollback mode to
  process $p''$ so that, eventually, rule $\ol{\mathit{Spawn1}}$ can
  be applied.  

\item Finally, observe that rule $\ol{\mathit{Sched}}$ requires the
  same side condition as in the uncontrolled semantics. This is needed
  in order to avoid the commutation of rules $\mathit{\ol{Receive}}$
  and $\mathit{\ol{Sched}}$.
\end{itemize}
The rollback semantics is modeled by the relation $\looparrowright$,
which is defined as the union of the forward reversible relation $\rh$
(Figure~\ref{fig:forwardsem}) and the backward relation $\lhh$ defined
in Figure~\ref{fig:rollsem}. 
Note that, in contrast to the (uncontrolled) reversible 
semantics of Section~\ref{sec:revsem},
the rollback semantics given by the relation $\looparrowright$
has less nondeterministic choices:
all computations run forward except when
a rollback action demands some backward steps to recover a previous
state of a process (which can be propagated to other processes in
order to undo the spawning of a process or the sending of a message).

Note, however, that besides the introduction of rollbacks, there is
still some nondeterminism in the backward rules of the rollback semantics: on the one hand,
the selection of the process when there are several ongoing rollbacks is
nondeterministic; also, in many cases, both rule $\mathit{\ol{Sched}}$
and another rule are applicable to the same process. The semantics
could be made deterministic by using a particular strategy to select
the processes (e.g., round robin) and applying rule
$\mathit{\ol{Sched}}$ 
whenever possible (i.e., give to $\mathit{\ol{Sched}}$ a higher
priority than to the remaining backward rules).

\begin{example} \label{ex2r}
  Consider again the program shown in Figure~\ref{fig:ex2-prog}. Let
  us assume that function $\mathrm{main}/0$ is now defined as follows:
  \[      
    \begin{array}{r@{~}ll}
      \mathrm{main}/0 = \mathsf{fun}~()\to & \mathsf{let}~S =
                                             \mathsf{spawn}(\mathrm{server}/0,\nil)\\
                                           & \mathsf{in}~\mathsf{let}~\_ = \mathsf{spawn}(\mathrm{client}/1,[S]) \\
                                           &
                                             \mathsf{in}~\mathsf{let}~X = \mathsf{check}(\mathtt{t})\\
                                           & \mathsf{in}~\mathsf{apply}~\mathrm{client}/1~(S)
    \end{array}
  \]
  so that a checkpoint has been introduced after spawning the two
  processes: the server ($\mathrm{s}$) and one of the clients
  ($\mathrm{c2}$). Then, by repeating the same forward derivation
  shown in Figure~\ref{fig:ex2-der} (with the additional step to
  evaluate the checkpoint), we get the following final system:
  \[ 
      \begin{array}{l@{~}l@{~}l}
        & \{\;\}; &
                    \l\mathrm{c1},[\mathsf{rec}(\_,\_,m_4,[m_4]),\mathsf{send}(\_,\_,\mathrm{s},m_3),\mathsf{check}(\_,\_,\mathtt{t}),\mathsf{spawn}(\_,\_,\mathrm{c2}), \\
        &&~~~~~~\mathsf{spawn}(\_,\_,\mathrm{s})],(\_,\mathrm{ok}),\nil\r\\
        &&\comp\l\mathrm{s},
           [\mathsf{send}(\_,\_,\mathrm{c1},m_4),\mathsf{rec}(\_,\_,m_3,[m_3]),\mathsf{send}(\_,\_,\mathrm{c2},m_2),\\
        &&~~~~~~\mathsf{rec}(\_,\_,m_1,[m_1])], (\_,C[\mathsf{receive}~\{P,M\}\to\ldots]),\nil\r\\
        &&\comp\l\mathrm{c2},[\underline{\mathsf{rec}(\_,\_,m_2,[m_2])},\mathsf{send}(\_,\_,\mathrm{s},m_1)],(\_,\mathrm{ok}),\nil\r
      \end{array}
    \]
  Figure~\ref{fig:rollbackder} shows the steps performed by the
  rollback semantics in order to undo the steps of process
  $\mathrm{c1}$ until the checkpoint is reached. In
  Figure~\ref{fig:rollbackder} we follow the same conventions as in
  Examples~\ref{ex2} and \ref{ex2b}. Observe that we could also use
  the relation ``$\looparrowright$'' here in order to also perform
  some forward steps on process $\mathrm{c2}$, as it would happen in
  practice.

  \begin{figure}[p]
    \scriptsize
    \[
      \begin{array}{l@{~}l@{~}l}
        & \{\;\}; &
        \lfloor\l\mathrm{c1},[\underline{\mathsf{rec}(\_,\_,m_4,[m_4])},\mathsf{send}(\_,\_,\mathrm{s},m_3),\mathsf{check}(\_,\_,\mathtt{t}),\mathsf{spawn}(\_,\_,\mathrm{c2}),\mathsf{spawn}(\_,\_,\mathrm{s})],\\
        &&~~(\_,\mathrm{ok}),\nil\r\rfloor_{\{\#_\mathsf{ch}^\mathtt{t}\}}\\
        &&\comp\l\mathrm{s},
           [\mathsf{send}(\_,\_,\mathrm{c1},m_4),\mathsf{rec}(\_,\_,m_3,[m_3]),\mathsf{send}(\_,\_,\mathrm{c2},m_2), \mathsf{rec}(\_,\_,m_1,[m_1])], \\
        &&~~(\_,C[\mathsf{receive}~\{P,M\}\to\ldots]),\nil\r\\
        &&\comp\l\mathrm{c2},[\mathsf{rec}(\_,\_,m_2,[m_2]),\mathsf{send}(\_,\_,\mathrm{s},m_1)],(\_,\mathrm{ok}),\nil\r \\[1ex]

        \lhh_\mathit{\ol{Receive}} & \{\;\}; &
        \lfloor\l\mathrm{c1},[\underline{\mathsf{send}(\_,\_,\mathrm{s},m_3)},\mathsf{check}(\_,\_,\mathtt{t}),\mathsf{spawn}(\_,\_,\mathrm{c2}),\mathsf{spawn}(\_,\_,\mathrm{s})],\\
        &&~~(\_, C[\mathsf{receive}~\mathrm{ack}\to \mathrm{ok}]),[m_4]\r\rfloor_{\{\#_\mathsf{ch}^\mathtt{t}\}}\\
        &&\comp\l\mathrm{s},
           [\mathsf{send}(\_,\_,\mathrm{c1},m_4),\mathsf{rec}(\_,\_,m_3,[m_3]),\mathsf{send}(\_,\_,\mathrm{c2},m_2), \mathsf{rec}(\_,\_,m_1,[m_1])], \\
        &&~~(\_,C[\mathsf{receive}~\{P,M\}\to\ldots]),\nil\r\\
        &&\comp\l\mathrm{c2},[\mathsf{rec}(\_,\_,m_2,[m_2]),\mathsf{send}(\_,\_,\mathrm{s},m_1)],(\_,\mathrm{ok}),\nil\r \\[1ex]

        \lhh_\mathit{\ol{Send2}} & \{\;\}; &
        \lfloor\l\mathrm{c1},[\mathsf{send}(\_,\_,\mathrm{s},m_3),\mathsf{check}(\_,\_,\mathtt{t}),\mathsf{spawn}(\_,\_,\mathrm{c2}),\mathsf{spawn}(\_,\_,\mathrm{s})],\\
        &&~~(\_, C[\mathsf{receive}~\mathrm{ack}\to \mathrm{ok}]),[m_4]\r\rfloor_{\{\#_\mathsf{ch}^\mathtt{t}\}}\\
        &&\comp\lfloor\l\mathrm{s},
           [\underline{\mathsf{send}(\_,\_,\mathrm{c1},m_4)},\mathsf{rec}(\_,\_,m_3,[m_3]),\mathsf{send}(\_,\_,\mathrm{c2},m_2), \mathsf{rec}(\_,\_,m_1,[m_1])], \\
        &&~~(\_,C[\mathsf{receive}~\{P,M\}\to\ldots]),\nil\r\rfloor_{\{\#_ \mathsf{sch}^3\}}\\
        &&\comp\l\mathrm{c2},[\mathsf{rec}(\_,\_,m_2,[m_2]),\mathsf{send}(\_,\_,\mathrm{s},m_1)],(\_,\mathrm{ok}),\nil\r \\[1ex]

        \lhh_\mathit{\ol{Send2}} & \{\;\}; &
        \lfloor\l\mathrm{c1},[\mathsf{send}(\_,\_,\mathrm{s},m_3),\mathsf{check}(\_,\_,\mathtt{t}),\mathsf{spawn}(\_,\_,\mathrm{c2}),\mathsf{spawn}(\_,\_,\mathrm{s})],\\
        &&~~(\_, C[\mathsf{receive}~\mathrm{ack}\to \mathrm{ok}]),[\underline{m_4}]\r\rfloor_{\{\#_\mathsf{ch}^\mathtt{t}, \#_ \mathsf{sch}^4\}}\\
        &&\comp\lfloor\l\mathrm{s},
           [\mathsf{send}(\_,\_,\mathrm{c1},m_4),\mathsf{rec}(\_,\_,m_3,[m_3]),\mathsf{send}(\_,\_,\mathrm{c2},m_2), \mathsf{rec}(\_,\_,m_1,[m_1])], \\
        &&~~(\_,C[\mathsf{receive}~\{P,M\}\to\ldots]),\nil\r\rfloor_{\{\#_ \mathsf{sch}^3\}}\\
        &&\comp\l\mathrm{c2},[\mathsf{rec}(\_,\_,m_2,[m_2]),\mathsf{send}(\_,\_,\mathrm{s},m_1)],(\_,\mathrm{ok}),\nil\r \\[1ex]

        \lhh_\mathit{\ol{Sched}} & \{(\mathrm{c1},m_4)\}; &
        \lfloor\l\mathrm{c1},[\mathsf{send}(\_,\_,\mathrm{s},m_3),\mathsf{check}(\_,\_,\mathtt{t}),\mathsf{spawn}(\_,\_,\mathrm{c2}),\mathsf{spawn}(\_,\_,\mathrm{s})],\\
        &&~~(\_, C[\mathsf{receive}~\mathrm{ack}\to \mathrm{ok}]),\nil\r\rfloor_{\{\#_\mathsf{ch}^\mathtt{t}\}}\\
        &&\comp\lfloor\l\mathrm{s},
           [\underline{\mathsf{send}(\_,\_,\mathrm{c1},m_4)},\mathsf{rec}(\_,\_,m_3,[m_3]),\mathsf{send}(\_,\_,\mathrm{c2},m_2), \mathsf{rec}(\_,\_,m_1,[m_1])], \\
        &&~~(\_,C[\mathsf{receive}~\{P,M\}\to\ldots]),\nil\r\rfloor_{\{\#_ \mathsf{sch}^3\}}\\
        &&\comp\l\mathrm{c2},[\mathsf{rec}(\_,\_,m_2,[m_2]),\mathsf{send}(\_,\_,\mathrm{s},m_1)],(\_,\mathrm{ok}),\nil\r \\[1ex]

        \lhh_\mathit{\ol{Send1}} & \{ \;\}; &
        \lfloor\l\mathrm{c1},[\mathsf{send}(\_,\_,\mathrm{s},m_3),\mathsf{check}(\_,\_,\mathtt{t}),\mathsf{spawn}(\_,\_,\mathrm{c2}),\mathsf{spawn}(\_,\_,\mathrm{s})],\\
        &&~~(\_, C[\mathsf{receive}~\mathrm{ack}\to \mathrm{ok}]),\nil\r\rfloor_{\{\#_\mathsf{ch}^\mathtt{t}\}}\\
        &&\comp\lfloor\l\mathrm{s},
           [\underline{\mathsf{rec}(\_,\_,m_3,[m_3])},\mathsf{send}(\_,\_,\mathrm{c2},m_2), \mathsf{rec}(\_,\_,m_1,[m_1])],(\_,C[\mathrm{c1} \:!\: \mathrm{ack}]),\nil\r\rfloor_{\{\#_ \mathsf{sch}^3\}}\\
        &&\comp\l\mathrm{c2},[\mathsf{rec}(\_,\_,m_2,[m_2]),\mathsf{send}(\_,\_,\mathrm{s},m_1)],(\_,\mathrm{ok}),\nil\r \\[1ex]

        \lhh_\mathit{\ol{Receive}} & \{ \;\}; &
        \lfloor\l\mathrm{c1},[\mathsf{send}(\_,\_,\mathrm{s},m_3),\mathsf{check}(\_,\_,\mathtt{t}),\mathsf{spawn}(\_,\_,\mathrm{c2}),\mathsf{spawn}(\_,\_,\mathrm{s})],\\
        &&~~(\_, C[\mathsf{receive}~\mathrm{ack}\to \mathrm{ok}]),\nil\r\rfloor_{\{\#_\mathsf{ch}^\mathtt{t}\}}\\
        &&\comp\lfloor\l\mathrm{s},
           [\mathsf{send}(\_,\_,\mathrm{c2},m_2), \mathsf{rec}(\_,\_,m_1,[m_1])],(\_,C[\mathsf{receive}~\{P,M\}\to\ldots]),[\underline{m_3}]\r\rfloor_{\{\#_ \mathsf{sch}^3\}}\\
        &&\comp\l\mathrm{c2},[\mathsf{rec}(\_,\_,m_2,[m_2]),\mathsf{send}(\_,\_,\mathrm{s},m_1)],(\_,\mathrm{ok}),\nil\r \\[1ex]

        \lhh_\mathit{\ol{Sched}} & \{ (\mathrm{s},m_3)\}; &
        \lfloor\l\mathrm{c1},[\underline{\mathsf{send}(\_,\_,\mathrm{s},m_3)},\mathsf{check}(\_,\_,\mathtt{t}),\mathsf{spawn}(\_,\_,\mathrm{c2}),\mathsf{spawn}(\_,\_,\mathrm{s})],\\
        &&~~(\_, C[\mathsf{receive}~\mathrm{ack}\to \mathrm{ok}]),\nil\r\rfloor_{\{\#_\mathsf{ch}^\mathtt{t}\}}\\
        &&\comp\l\mathrm{s},
           [\mathsf{send}(\_,\_,\mathrm{c2},m_2), \mathsf{rec}(\_,\_,m_1,[m_1])],(\_,C[\mathsf{receive}~\{P,M\}\to\ldots]),\nil\r\\
        &&\comp\l\mathrm{c2},[\mathsf{rec}(\_,\_,m_2,[m_2]),\mathsf{send}(\_,\_,\mathrm{s},m_1)],(\_,\mathrm{ok}),\nil\r \\[1ex]

        \lhh_\mathit{\ol{Send1}} & \{ \;\}; &
        \lfloor\l\mathrm{c1},[\underline{\mathsf{check}(\_,\_,\mathtt{t})},\mathsf{spawn}(\_,\_,\mathrm{c2}),\mathsf{spawn}(\_,\_,\mathrm{s})],(\_, C[\mathrm{s}\:!\:\{\mathrm{c1},\mathrm{req}\}]),\nil\r\rfloor_{\{\#_\mathsf{ch}^\mathtt{t}\}}\\
        &&\comp\l\mathrm{s},
           [\mathsf{send}(\_,\_,\mathrm{c2},m_2), \mathsf{rec}(\_,\_,m_1,[m_1])],(\_,C[\mathsf{receive}~\{P,M\}\to\ldots]),\nil\r\\
        &&\comp\l\mathrm{c2},[\mathsf{rec}(\_,\_,m_2,[m_2]),\mathsf{send}(\_,\_,\mathrm{s},m_1)],(\_,\mathrm{ok}),\nil\r \\[1ex]

        \lhh_\mathit{\ol{Check}} & \{ \;\}; &
        \l\mathrm{c1},[\mathsf{spawn}(\_,\_,\mathrm{c2}),\mathsf{spawn}(\_,\_,\mathrm{s})],(\_, C[\mathsf{check}(\texttt{t})]),\nil\r\\
        &&\comp\l\mathrm{s},
           [\mathsf{send}(\_,\_,\mathrm{c2},m_2), \mathsf{rec}(\_,\_,m_1,[m_1])],(\_,C[\mathsf{receive}~\{P,M\}\to\ldots]),\nil\r\\
        &&\comp\l\mathrm{c2},[\mathsf{rec}(\_,\_,m_2,[m_2]),\mathsf{send}(\_,\_,\mathrm{s},m_1)],(\_,\mathrm{ok}),\nil\r 
      \end{array}
    \]
    \caption{A derivation under the backward reduction rules of
      Figure~\ref{fig:rollsem}, with
      $m_1 = \{\{\mathrm{c2},\mathrm{req}\},1\}$,
      $m_2 = \{\mathrm{ack},2\}$,
      $m_3 = \{\{\mathrm{c1},\mathrm{req}\},3\}$, and $m_4 = \{\mathrm{ack},4\}$.} \label{fig:rollbackder}
  \end{figure}
\end{example}
We state below the soundness of the rollback semantics.
In order to do it, we let $\rolldel(s)$ denote the system that
results from $s$ by removing ongoing rollbacks; formally,
$\rolldel(\Gamma;\Pi) = \Gamma;\rolldel'(\Pi)$, with
\[
  \begin{array}{lll}
    \rolldel'(\tuple{p,\h,(\theta,e),q}) & = &
                                                  \tuple{p,\h,(\theta,e),q}\\
    \rolldel'(\lfloor \tuple{p,\h,(\theta,e),q}\rfloor_{\Psi}) & = &
                                                  \tuple{p,\h,(\theta,e),q}\\
    \rolldel'(\tuple{p,\h,(\theta,e),q}\:\comp\:\Pi) & = &
                                                  \tuple{p,\h,(\theta,e),q}\:\comp\:\rolldel'(\Pi)\\
    \rolldel'(\lfloor\tuple{p,\h,(\theta,e),q}\rfloor_{\Psi}\:\comp\:\Pi) & = &
                                                  \tuple{p,\h,(\theta,e),q}\:\comp\:\rolldel'(\Pi)
    \\                                              
  \end{array}
\]
where we assume that $\Pi$ is not empty.
We also extend the definition of initial and reachable systems to the rollback semantics.

\begin{definition}[Reachable systems under the rollback
  semantics]\label{def:reachableroll} \mbox{}\\
  A system is \emph{initial} under the rollback semantics if it is
  composed by a single process with an empty set $\Psi$ of active
  rollbacks; furthermore, the history, the queue and the global
  mailbox are empty too. A system $s$ is
  \emph{reachable} under the rollback semantics if there exist an initial
  system $s_0$ and a derivation $s_0 \looparrowright^\ast s$
  using the rules corresponding to a given program.
\end{definition}

\begin{theorem}[Soundness] \label{thm:soundness-rollback}
  Let $s$ be a system reachable under the rollback semantics. If
  $s \looparrowright^\ast s'$, then $\rolldel(s) \rlh^\ast \rolldel(s')$.
\end{theorem}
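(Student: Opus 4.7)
The plan is to proceed by induction on the length $n$ of the derivation $s \looparrowright^\ast s'$. The base case $n=0$ is immediate: $s=s'$ entails $\rolldel(s)=\rolldel(s')$, which is witnessed by the empty $\rlh$-derivation. For the inductive step, write the derivation as $s \looparrowright s'' \looparrowright^\ast s'$; by the inductive hypothesis $\rolldel(s'') \rlh^\ast \rolldel(s')$, so it suffices to prove $\rolldel(s) \rlh^\ast \rolldel(s'')$ and then compose. This reduces the whole theorem to a one-step simulation lemma.

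The one-step simulation is established by case analysis on the rule applied in $s \looparrowright s''$. The forward rules of the rollback semantics are precisely the rules of Figure~\ref{fig:forwardsem}; via the structural equivalence $(\mathit{SC})$ they act on processes whose rollback set is empty, and $\rolldel$ only erases the annotations $\lfloor\cdot\rfloor_\Psi$, so each such step gives $\rolldel(s) \rh \rolldel(s'')$ by the very same rule. For the backward rules of Figure~\ref{fig:rollsem}, I distinguish three groups. First, rule $(\mathit{\ol{Undo}})$ only enlarges the set $\Psi$ of pending rollbacks on a process, leaving histories, controls, queues, and the global mailbox untouched; hence $\rolldel(s)=\rolldel(s'')$ and the zero-step $\rlh$-derivation suffices. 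The same holds for the propagation rules $(\mathit{\ol{Send2}})$ and $(\mathit{\ol{Spawn2}})$, whose sole effect on the underlying configuration is to add $\#_\mathsf{sch}^\k$ or $\#_\mathsf{sp}$ to the rollback set of another process. Second, the remaining backward rules $(\mathit{\ol{Seq}})$, $(\mathit{\ol{Check}})$, $(\mathit{\ol{Send1}})$, $(\mathit{\ol{Receive}})$, $(\mathit{\ol{Spawn1}})$, $(\mathit{\ol{Self}})$, $(\mathit{\ol{Sched}})$ mirror exactly the homonymous rules of Figure~\ref{fig:backwardsem}: they act on the same syntactic components and carry out the same rewriting on history, control, queue, and global mailbox, differing only in the presence of the annotations $\lfloor\cdot\rfloor_\Psi$ which $\rolldel$ erases. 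So these steps translate to $\rolldel(s) \lh \rolldel(s'')$.

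The only nontrivial obligation in each case of the second group is to check that the side conditions of the rollback rule imply the side conditions of its uncontrolled counterpart. This is a direct syntactic match: $(\mathit{\ol{Send1}})$ requires the message to sit in $\Gamma$, which is exactly what $(\mathit{\ol{Send}})$ of Figure~\ref{fig:backwardsem} needs; $(\mathit{\ol{Spawn1}})$ requires the child process to have empty history and empty queue, which is exactly the shape $\tuple{p',\nil,(\id,e''),\nil}$ demanded by the uncontrolled $(\mathit{\ol{Spawn}})$; $(\mathit{\ol{Receive}})$ reconstructs the same queue $q$ as in the uncontrolled rule; and the side condition of $(\mathit{\ol{Sched}})$ in the rollback semantics is literally the side condition of $(\mathit{\ol{Sched}})$ in the uncontrolled semantics. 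Since $\rolldel$ preserves all the data on which these conditions are phrased, the conditions transfer verbatim.

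The main potential obstacle is the role of reachability. I expect it is not needed for the simulation itself, since all the side-condition checks described above are purely local. Reachability is only used to guarantee that the derivation starts from a well-formed system, which is preserved by both $\looparrowright$ and $\rlh$; in particular the ``empty rollback set'' side of $(\mathit{SC})$ is always consistent with the forward rules inheriting the configurations produced by the backward rules. Thus, after combining the per-step simulation above with the inductive hypothesis, we obtain $\rolldel(s) \rlh^\ast \rolldel(s'')$ followed by $\rolldel(s'') \rlh^\ast \rolldel(s')$, and concatenation yields the desired $\rolldel(s) \rlh^\ast \rolldel(s')$, completing the proof.
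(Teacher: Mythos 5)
Your proposal is correct and follows essentially the same route as the paper's proof: forward steps are matched verbatim, $(\mathit{\ol{Undo}})$, $(\mathit{\ol{Send2}})$ and $(\mathit{\ol{Spawn2}})$ become zero-step derivations under $\rolldel$, and the remaining backward rules are matched by their homonymous uncontrolled counterparts. The explicit induction on derivation length and the verification of the side conditions are just a more detailed write-up of the same case analysis.
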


\begin{proof}
  For forward transitions the proof is trivial since the forward rules
  are the same in both semantics, and they apply only to processes which
  are not under rollback. For backward transitions the proof is by case
  analysis on the applied rule, noting that the effect of structural
  equivalence is removed by $\rolldel$:

  \begin{itemize}
    \item Rule $\mathit{\ol{Undo}}$: the effect is removed by
      $\rolldel$, hence an application of this rule corresponds to a
      zero-step derivation under the uncontrolled semantics;
    \item Rules $\mathit{\ol{Seq}}$, $\mathit{\ol{Check}}$, $\mathit{\ol{Send1}}$, $\mathit{\ol{Receive}}$, $\mathit{\ol{Spawn1}}$, $\mathit{\ol{Self}}$ and $\mathit{\ol{Sched}}$: they are matched, respectively, by rules $\mathit{\ol{Seq}}$, $\mathit{\ol{Check}}$, $\mathit{\ol{Send}}$, $\mathit{\ol{Receive}}$, $\mathit{\ol{Spawn}}$, $\mathit{\ol{Self}}$ and $\mathit{\ol{Sched}}$ of the uncontrolled semantics;
      \item Rules $\mathit{\ol{Send2}}$ and $\mathit{\ol{Spawn2}}$:
        the effect is removed by $\rolldel$, hence an application of
        any of these rules corresponds to a zero-step derivation under the
        uncontrolled semantics.\qed
  \end{itemize}
\end{proof}
We can now show the completeness of the rollback semantics provided
that the involved process is in rollback mode:

\begin{lemma}[Completeness in rollback mode] \label{lemma:completeness-rollback}
  Let $s$ be a reachable system. If $s \lh s'$ then take any system
  $s_r$ such that $\rolldel(s_r)=s$ and where the process that
  performed the transition $s \lh s'$ is in rollback mode for a
  non-empty set of rollbacks. There exists $s'_r$ such that $s_r \lhh s'_r$
  and $\rolldel(s'_r)=s'$.
\end{lemma}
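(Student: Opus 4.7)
The plan is to proceed by case analysis on the backward rule of the uncontrolled semantics that justifies the transition $s \lh s'$. For each such rule, I will exhibit a matching rule in the rollback semantics (Figure~\ref{fig:rollsem}) which is applicable to $s_r$ when the relevant process is enclosed in $\lfloor \cdot \rfloor_{\Psi}$ with $\Psi \neq \emptyset$, and verify that the resulting system $s'_r$ satisfies $\rolldel(s'_r) = s'$. The general observation that makes this work is that $\rolldel$ affects only the rollback annotations; it preserves histories, queues, controls, the global mailbox and the overall pool structure. Hence the preconditions of the uncontrolled rules translate directly into preconditions of the rollback rules whenever the latter mention the same data.

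For rules $\mathit{\ol{Seq}}$, $\mathit{\ol{Check}}$, $\mathit{\ol{Receive}}$ and $\mathit{\ol{Self}}$, the rollback semantics has rules with the same name whose only difference is the wrapping $\lfloor \cdot \rfloor_{\Psi}$ around the active process (and, for $\mathit{\ol{Check}}$, the additional effect of removing $\#_\mathsf{ch}^\mathtt{t}$ from $\Psi$). Since each of these rules is applicable whenever its uncontrolled counterpart is, and since $\rolldel$ erases the rollback set, applying the matching rule to $s_r$ immediately produces the desired $s'_r$. For rule $\mathit{\ol{Sched}}$, the side condition on the topmost $\mathsf{rec}(\ldots)$ item in $\h$ is literally the same in both semantics, so it carries over. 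The residual rollback $\Psi \setminus \{\#_\mathsf{sch}^\k\}$ may become empty, but this only strengthens the correspondence via the structural equivalence $(\mathit{SC})$ after applying $\rolldel$.

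The two delicate cases are $\mathit{\ol{Send}}$ and $\mathit{\ol{Spawn}}$, because the rollback semantics splits each into two rules. For $\mathit{\ol{Send}}$, note that the uncontrolled rule requires the message $(p'',\{v,\k\})$ to be present in $\Gamma$; this is exactly the precondition of $\mathit{\ol{Send1}}$, so $\mathit{\ol{Send1}}$ applies to $s_r$ and yields $s'_r$ with $\rolldel(s'_r) = s'$. The auxiliary rule $\mathit{\ol{Send2}}$ is never needed here: it is precisely the propagation mechanism used \emph{instead} of the uncontrolled $\mathit{\ol{Send}}$ when the message is no longer in $\Gamma$. For $\mathit{\ol{Spawn}}$, the uncontrolled rule requires the spawned process to have both empty history and empty queue, and since $\rolldel$ preserves these components, the same emptiness holds for the corresponding process in $s_r$. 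Hence $\mathit{\ol{Spawn1}}$ applies; again $\mathit{\ol{Spawn2}}$ is not required, as it is the propagation rule used only when these emptiness conditions fail.

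The main obstacle is making sure that none of the rollback rules blocks where its uncontrolled counterpart does not: in particular, one must verify that the side conditions of $\mathit{\ol{Send1}}$, $\mathit{\ol{Spawn1}}$ and $\mathit{\ol{Sched}}$ (in the rollback semantics) are implied by the preconditions of the uncontrolled rules applied to $s$, a check that is carried out using the fact that $\rolldel(s_r) = s$ together with the hypothesis that $s$ is reachable. Once this is established, the final check that $\rolldel(s'_r) = s'$ is routine, since the backward rollback rules and their uncontrolled counterparts modify histories, queues and $\Gamma$ identically, and the potential differences (updates of $\Psi$ and the rollback wrapping of the active process) are exactly the information discarded by $\rolldel$.
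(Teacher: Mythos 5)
Your proposal is correct and follows essentially the same route as the paper's proof: a case analysis on the applied uncontrolled rule, matching each by its homonymous rollback rule except $\mathit{\ol{Send}}$ and $\mathit{\ol{Spawn}}$, which are matched by $\mathit{\ol{Send1}}$ and $\mathit{\ol{Spawn1}}$ respectively. Your additional checks that the side conditions transfer under $\rolldel$ are a more explicit version of what the paper leaves implicit.
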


\begin{proof}
  The proof is by case analysis on the applied rule. Each step is
  matched by the homonymous rule, but for $\mathit{\ol{Send}}$ and
  $\mathit{\ol{Spawn}}$ which are matched by rules
  $\mathit{\ol{Send1}}$ and $\mathit{\ol{Spawn1}}$. \qed
\end{proof}
The following result illustrates the usefulness of the
rollback semantics:

\begin{lemma} \label{lemma:rev} Let us consider a forward derivation $d$
  of the form: 
  \[
  \begin{array}{l}
    \Gamma;\tuple{p,\h,(\theta,\mathsf{let}~X=\mathsf{check}(\mathtt{t})~\mathsf{in}~e),q}\comp\Pi \\
    \rh
    \Gamma; \tuple{p,\mathsf{check}(\theta,
    \mathsf{let}~X=\mathsf{check}(\mathtt{t})~\mathsf{in}~e,\mathtt{t})\cons\h,(\theta,
    \mathsf{let}~X=\mathtt{t}~\mathsf{in}~e),q}\comp\Pi \\
    \rh^\ast
    \Gamma';\tuple{p,\h',(\theta',e'),q'}\comp\Pi'
  \end{array}
  \]
  Then, there is a backward derivation $d'$ under the rollback
  semantics restoring process $p$:
  \[
    \begin{array}{l}
      \Gamma';\lfloor\tuple{p,\h',(\theta',e'),q'}\rfloor_{\{\#_\mathsf{ch}^\mathtt{t}\}}\comp\Pi'\\
      \lhh^\ast
      \Gamma''; \tuple{p,\h,(\theta,\mathsf{let}~X=\mathsf{check}(\mathtt{t})~\mathsf{in}~e),q}\comp\Pi''
    \end{array}  
  \]
\end{lemma}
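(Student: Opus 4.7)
\medskip

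\noindent\textbf{Proof sketch.} The strategy is to obtain the required backward derivation by lifting an uncontrolled backward derivation (guaranteed by the Loop lemma, Lemma~\ref{lemma:loop}) into the rollback semantics, inserting the $\mathit{\ol{Undo}}$, $\mathit{\ol{Send2}}$ and $\mathit{\ol{Spawn2}}$ propagation rules whenever a process not currently in rollback mode needs to perform a backward step. The base observation is that, by Lemma~\ref{lemma:loop} applied to the tail of $d$ starting after the $\mathit{Check}$ step, an uncontrolled derivation bringing $p$ back to its state just before the checkpoint does exist; the task is to realise a witness of that existence inside $\lhh$.

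First, I would fire $\mathit{\ol{Undo}}$ on process $p$: since $\mathsf{check}(\theta,\ldots,\mathtt{t})$ occurs in $\h'$, this yields a system where $p$ carries the rollback $\{\#_\mathsf{ch}^\mathtt{t}\}$. The goal then becomes undoing, in rollback mode, every history item sitting above that $\mathsf{check}$ entry; once reached, $\mathit{\ol{Check}}$ discharges $\#_\mathsf{ch}^\mathtt{t}$ and restores the control to $(\theta,\mathsf{let}~X=\mathsf{check}(\mathtt{t})~\mathsf{in}~e)$, which is precisely what the statement requires.

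I would then proceed by induction on the length of $d$, focusing at each stage on the topmost history item of $p$. Items of the form $\tau$, $\mathsf{self}$, intermediate $\mathsf{check}$, and $\mathsf{rec}$ are undone directly by the homonymous backward rules. For a $\mathsf{send}(\theta',e'',p'',\{v,\k\})$ item, if $(p'',\{v,\k\})$ is still in $\Gamma$ then $\mathit{\ol{Send1}}$ applies immediately; otherwise, I use $\mathit{\ol{Send2}}$ to propagate $\#_\mathsf{sch}^\k$ to $p''$, after which backward steps on $p''$ (enabled by Lemma~\ref{lemma:completeness-rollback}) undo everything done by $p''$ after delivering $\{v,\k\}$, culminating in a $\mathit{\ol{Sched}}$ that both puts the message back into $\Gamma$ and removes $\#_\mathsf{sch}^\k$; at that point $\mathit{\ol{Send1}}$ becomes applicable on $p$. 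Symmetrically, a $\mathsf{spawn}(\theta',e'',p''')$ item is undone by $\mathit{\ol{Spawn1}}$ when $p'''$ is pristine, and otherwise by first using $\mathit{\ol{Spawn2}}$ to propagate $\#_\mathsf{sp}$ to $p'''$, causing $p'''$ to undo all its actions recursively before being removed.

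The main obstacle will be showing that the cascade of propagations terminates and that the side condition of $\mathit{\ol{Sched}}$ is never violated during a propagated rollback. Termination follows because every propagated backward step corresponds, under $\rolldel$ and the Loop lemma, to the inverse of a step in the finite derivation $d$. The soundness of the side condition is guaranteed by the causal structure: a $\mathsf{rec}(\theta''',e''',\{v',\k'\},q')$ item whose associated queue depends on the delivery of $\{v,\k\}$ must lie above that delivery in the history of $p''$, and hence must itself be undone first, which is exactly what the rollback process forces. Combined with Theorem~\ref{thm:causal} and the rearranging lemma (Lemma~\ref{lemma:rearrange}), this lets me always pick a schedule of the lifted backward steps that realises the required ordering, completing the induction with a final $\mathit{\ol{Check}}$ step that yields the claimed backward derivation $d'$.
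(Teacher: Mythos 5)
Your proposal is correct and follows essentially the same strategy as the paper's proof: reverse the forward derivation via the loop lemma, keep only the backward steps causally required to restore $p$, reorder them into a depth-first (stack-like) schedule that recursively descends into spawned processes and message receivers, and lift that schedule into the rollback semantics via the propagation rules $\mathit{\ol{Send2}}$/$\mathit{\ol{Spawn2}}$ and Lemma~\ref{lemma:completeness-rollback}. The only imprecisions are that the paper justifies the reordering by repeated use of the switching lemma (Lemma~\ref{lemma:switch}) along an explicitly defined causal partial order on the reversed transitions---rather than by Theorem~\ref{thm:causal} or the rearranging lemma, which are not quite the right tools for permuting individual backward steps---and that your initial $\mathit{\ol{Undo}}$ step is unnecessary, since the source system of $d'$ already carries the rollback $\#_\mathsf{ch}^\mathtt{t}$.
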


\begin{proof}
  Trivially (by Theorem~\ref{thm:soundness-rollback}) the forward
  derivation $d$ can also be performed under the uncontrolled reversible
  semantics.   Now, by applying the loop lemma (Lemma~\ref{lemma:loop}) to each
  step of $d$, we have a backward derivation $\ol{d}$ of the form:
  \[
  \begin{array}{l}
    \Gamma';\tuple{p,\h',(\theta',e'),q'}\comp\Pi'\\
    \lh^\ast
    \Gamma;\tuple{p,\h,(\theta,\mathsf{let}~X=\mathsf{check}(\mathtt{t})~\mathsf{in}~e),q}\comp\Pi
  \end{array}
  \]
  Consider the relation $\leq$ on transitions of $\ol{d}$ defined as
  the reflexive and transitive closure of the following clauses:
  \begin{itemize}
  \item $t_1 \leq t_2$ if both $t_1$ and $t_2$ undo actions in the
    same process $p'$, and the transition undone by $t_2$ is a direct
    consequence of the one undone by $t_1$;
  \item $t_1 \leq t_2$ if $t_1$ undoes a spawn of process $p_2$ and $t_2$
    undoes the first transition of $p_2$;
  \item $t_1 \leq t_2$ if $t_1$ undoes the send of a message $\k$
    and $t_2$ undoes the scheduling of the same message.
  \end{itemize}
  Let us show that $\leq$ is a partial order. We only need to show
  that there are no cycles, but this follows from the fact that the
  total order given by $\ol{d}$ is compatible with $\leq$.

  We also notice that any two transitions which are not related by
  $\leq$ can be swapped using the switching lemma
  (Lemma~\ref{lemma:switch}).
  
  Then, there exists a derivation $\ol{d_r};\ol{d_u}$ such that $\ol{d_r}$
  contains all transitions $t$ such that $t_l \leq t$ where $t_l$ is
  the last transition in $\ol{d}$, 
  and only them. Since $\ol{d_u}$ contains no transition on $p$ we
  have that $\ol{d_r}$ is of the form:
  \[
  \begin{array}{l}
    \Gamma';\tuple{p,\h',(\theta',e'),q'}\comp\Pi'\\
    \lh^\ast
    \Gamma'';\tuple{p,\h,(\theta,\mathsf{let}~X=\mathsf{check}(\mathtt{t})~\mathsf{in}~e),q}\comp\Pi''
  \end{array}
  \]
  Using again the switching lemma (Lemma~\ref{lemma:switch}) one can
  transform $\ol{d_r}$ into a derivation $\ol{d'_r}$ obtained using
  the following execution strategy, where initially the active process
  is $p$, the termination condition is ``the checkpoint action $\mathtt{t}$ has been undone'', and the stack is empty:
  \begin{itemize}
  \item transitions of the active process are undone if possible, until
    the termination condition holds; if there is an occurrence of the
    active process in the stack and the termination condition for this
    process is matched because of the current transition undo, remove
    such occurrence from the stack (this remove does not follow the
    usual FIFO strategy for stacks);
  \item if the termination condition holds, then pop a new active
    process from the stack, if there are no processes on the stack
    then terminate;
  \item if no transition is possible for the active process then one of
    the two following subconditions should hold:
    \begin{enumerate}
      \item the active process needs to undo a spawn of a process
        which is not in the initial state: push the active process on
        the stack, and set the spawned process as new active process
        with termination condition ``all actions have been undone'';
      \item the active process needs to undo a send of a message $\k$
        which is not in the global mailbox: push the active process on
        the stack, and set the process to which message $\k$ has been
        scheduled as new active process with termination condition
        ``the scheduling of the message $\k$ has been undone'';
    \end{enumerate}
  \end{itemize}
  The switching lemma can be applied since this execution strategy is
  compatible with $\leq$. Now we show that the same execution strategy
  can be performed using the rollback semantics. We only need to show
  that the active process is in rollback mode, then the thesis will
  follow from the completeness in rollback mode
  (Lemma~\ref{lemma:completeness-rollback}). This can be shown by
  inspection of the execution strategy, considering the following
  invariant: the active process and all the processes on the stack are
  in rollback mode, and they have one checkpoint for each occurrence
  in the stack, plus one for the occurrence as active process. The
  invariant holds at the beginning since $p$ has one checkpoint
  corresponding to its termination condition. When the termination
  condition holds, a checkpoint is removed by rule
  $\mathit{\ol{Check}}$, $\mathit{\ol{Spawn1}}$, or
  $\mathit{\ol{Sched}}$.  When a new active process is selected, a new
  checkpoint is added by rule $\mathit{\ol{Spawn2}}$ or
  $\mathit{\ol{Send2}}$. \qed
\end{proof}
One can notice that in the lemma above only the process containing the
checkpoint is restored. We can restore the whole system to the
original configuration only if we restrict the forward derivation to
be a causal derivation, following the terminology in~\cite{DK05}.

\begin{definition}
  A forward derivation $d$ is causal iff all the transitions are 
  consequences of the first one.
\end{definition}
Hence, we have the following corollary:

\begin{corollary} Let us consider a causal derivation $d$
  of the form: 
  \[
  \begin{array}{l}
    \Gamma;\tuple{p,\h,(\theta,\mathsf{let}~X=\mathsf{check}(\mathtt{t})~\mathsf{in}~e),q}\comp\Pi \\
    \rh
    \Gamma; \tuple{p,\mathsf{check}(\theta,
    \mathsf{let}~X=\mathsf{check}(\mathtt{t})~\mathsf{in}~e,\mathtt{t})\cons\h,(\theta,
    \mathsf{let}~X=\mathtt{t}~\mathsf{in}~e),q}\comp\Pi \\
    \rh^\ast
    \Gamma';\tuple{p,\h',(\theta',e'),q'}\comp\Pi'
  \end{array}
  \]
  Then, there is a backward derivation $d'$ under the rollback
  semantics restoring the system to the original configuration:
  \[
    \begin{array}{l}
      \Gamma';\lfloor\tuple{p,\h',(\theta',e'),q'}\rfloor_{\{\#_\mathsf{ch}^\mathtt{t}\}}\comp\Pi'\\
      \lhh^\ast
      \Gamma; \tuple{p,\h,(\theta,\mathsf{let}~X=\mathsf{check}(\mathtt{t})~\mathsf{in}~e),q}\comp\Pi
    \end{array}  
  \]
\end{corollary}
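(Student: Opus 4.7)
The plan is to derive the corollary directly from Lemma~\ref{lemma:rev} by exploiting the causality hypothesis. Recall that Lemma~\ref{lemma:rev} already produces a backward derivation under $\lhh$ that restores the single process $p$ to its pre-checkpoint state, leaving the rest of the system in some $\Gamma'';\Pi''$ that is \emph{a priori} different from $\Gamma;\Pi$. What needs to be added, then, is the observation that when $d$ is causal these residual parts coincide with the original ones.

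First I would look at the backward derivation $\ol{d_r}$ built inside the proof of Lemma~\ref{lemma:rev}: it contains exactly those transitions $t$ of $\ol{d}$ that are related, in the partial order $\leq$ defined there, to the last transition $t_l$ (the one that undoes the checkpoint). This order $\leq$ is precisely the causal dependency used in Corollary~\ref{cor:consequences}: the consequences of a transition $t$ are the transitions $t'$ with $t\leq t'$. Hence the backward derivation produced by Lemma~\ref{lemma:rev} undoes the checkpoint transition together with \emph{all its consequences} in $d$, and nothing more.

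Next I would invoke the causality hypothesis. By definition, $d$ is causal when every one of its transitions is a consequence of the first one, namely the checkpoint step $\mathsf{check}(\mathtt{t})$. Combined with the previous paragraph, this shows that $\ol{d_r}$ in fact coincides with the reverse $\ol{d}$ of the entire forward derivation: it undoes every transition performed by $d$, not merely those local to $p$. Consequently, the residual system produced by Lemma~\ref{lemma:rev} equals the starting system of $d$, i.e.\ $\Gamma''=\Gamma$ and $\Pi''=\Pi$, so that
\[
\Gamma';\lfloor\tuple{p,\h',(\theta',e'),q'}\rfloor_{\{\#_\mathsf{ch}^\mathtt{t}\}}\comp\Pi'
\;\lhh^\ast\;
\Gamma;\tuple{p,\h,(\theta,\mathsf{let}~X=\mathsf{check}(\mathtt{t})~\mathsf{in}~e),q}\comp\Pi,
\]
which is exactly the conclusion of the corollary.

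The main obstacle is simply verifying that the order $\leq$ used inside the proof of Lemma~\ref{lemma:rev} agrees with the informal notion of ``consequence'' appearing in the definition of causal derivation (and in Corollary~\ref{cor:consequences}). Both are generated by the same three elementary dependencies (sequentiality within a process, spawn-to-first-action, and send-to-scheduling), so the two relations coincide; once this identification is made, the argument above is essentially a one-line application of Lemma~\ref{lemma:rev}. No further induction or case analysis on rules is needed, since all the heavy lifting---propagation of rollback requests, correct choice of backward order, and restoration of $p$---has already been performed inside Lemma~\ref{lemma:rev}.
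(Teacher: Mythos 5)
Your proposal is correct and follows essentially the same route as the paper: the paper's proof likewise reuses the construction from Lemma~\ref{lemma:rev} and observes that, since every transition of a causal derivation is a consequence of the initial checkpoint step, the residual sub-derivation $\ol{d_u}$ is empty, whence $\Gamma''=\Gamma$ and $\Pi''=\Pi$. Your additional remark that the order $\leq$ from that proof must be matched against the paper's notion of consequence is a fair point of care, but it does not change the argument.
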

\begin{proof}
  The proof follows the same strategy as the one of
  Lemma~\ref{lemma:rev}, noticing that $\ol{d_u}$ is empty hence
  $\Gamma = \Gamma''$ and $\Pi = \Pi''$. \qed
\end{proof}
While a derivation restoring the whole system exists, not all
derivations do so. More in general, given a set of rollbacks, it is
not the case that there is a unique system that is obtained by
executing backward transitions as far as possible (without executing
any $\mathit{\ol{Undo}}$). Indeed, the only nondeterminism is due to
the fact that $\mathit{\ol{Sched}}$ can commute with other
transitions, e.g., with $\mathit{\ol{Check}}$, which ends the
rollback. If we establish a policy for $\mathit{Sched}$ actions, and
we use the dual policy for undoing them, then the result is unique. A
sample policy could be that $\mathit{Sched}$ steps are performed as late as
possible, and dually undone as soon as possible. In such a setting we
have the following result:

\begin{lemma} \label{lemma:rollback-confluence} Let $s$ be a reachable
  system. If $s \lhh s_1$ and $s \lhh s_2$, both transitions use the
  same policy for $\ol{\mathit{Sched}}$, and the rules are different from
  $\mathit{\ol{Undo}}$, then there exists a system $s'$ such that $s_1
  \lhh^* s'$ and $s_2 \lhh^* s'$.
\end{lemma}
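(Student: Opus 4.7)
The plan is to prove this local confluence result by case analysis on the rules applied in the two transitions $s \lhh s_1$ and $s \lhh s_2$, leveraging the correspondence between the rollback semantics and the uncontrolled backward semantics established in Theorem~\ref{thm:soundness-rollback} and Lemma~\ref{lemma:completeness-rollback}.

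First I would classify the non-$\mathit{\ol{Undo}}$ backward rules into two groups: the \emph{effective} rules $\mathit{\ol{Seq}}$, $\mathit{\ol{Check}}$, $\mathit{\ol{Send1}}$, $\mathit{\ol{Receive}}$, $\mathit{\ol{Spawn1}}$, $\mathit{\ol{Self}}$ and $\mathit{\ol{Sched}}$, whose $\rolldel$-images are exactly backward transitions of the uncontrolled semantics, and the \emph{propagation} rules $\mathit{\ol{Send2}}$ and $\mathit{\ol{Spawn2}}$, which only enlarge some rollback set $\Psi$ without modifying control, history, queue or the global mailbox. Note in particular that propagation steps are invisible under $\rolldel$.

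Second I would perform a case analysis on the two rule kinds. When both steps are propagation rules, they commute trivially since each only adds a specific rollback marker $\#_\mathsf{sch}^\k$ or $\#_\mathsf{sp}$ to (possibly different) processes, and the side conditions ``$\#_\mathsf{sch}^{\k}\not\in\Psi'$'' and ``$\#_\mathsf{sp}\not\in\Psi'$'' remain satisfied after the other step has fired. When one step is propagation and the other effective, they either act on unrelated processes (commutation is immediate) or interact on the same process; in the latter case one checks that the effective step preserves the enabling condition of the propagation step and vice versa, so reconvergence takes at most one step on each side. When both steps are effective, we project via $\rolldel$ to obtain two coinitial backward transitions of the uncontrolled semantics, which by Lemma~\ref{lemma:square} are concurrent and reconverge in one step each; this reconvergence lifts back to the rollback semantics because each effective rule either removes a specific marker from $\Psi$ or leaves $\Psi$ unchanged, so the two updates to $\Psi$ combine as set operations without interference.

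The main obstacle, and the place where the policy hypothesis is really used, is the potential interaction between $\mathit{\ol{Sched}}$ and another effective rule applicable to the same process, since the side condition of $\mathit{\ol{Sched}}$ only rules out an overlap with $\mathit{\ol{Receive}}$. Here the common policy becomes essential: since the policy uniquely determines whether $\mathit{\ol{Sched}}$ is fired on a given process in a given state, the two transitions cannot disagree on such a choice for the same process; hence if both transitions act on the same process they must coincide (and we take $s' = s_1 = s_2$), while otherwise they act on different processes and commute as above. A final check ensures that the common successor $s'$ is reachable from $s_1$ and from $s_2$ by steps that still respect the same policy, so the policy assumption is preserved through joining and the argument goes through uniformly.
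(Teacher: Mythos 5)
Your overall strategy coincides with the paper's: handle the same-process case by observing that the fixed $\ol{\mathit{Sched}}$ policy leaves at most one backward rule applicable to a given process, and handle the different-process case by projecting through $\rolldel$ to the uncontrolled semantics, applying the square lemma (Lemma~\ref{lemma:square}), and lifting back via completeness in rollback mode (Lemma~\ref{lemma:completeness-rollback}). Your explicit separation of the ``propagation'' rules $\ol{\mathit{Send2}}$ and $\ol{\mathit{Spawn2}}$, which are invisible under $\rolldel$, is a reasonable elaboration of what the paper dismisses as the ``length $0$'' cases.

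Two of your justifications, however, do not hold as stated. First, an effective step and a propagation step on the same process do \emph{not} mutually preserve each other's enabling conditions: an $\ol{\mathit{Sched}}$ on the receiver can return exactly the message $\{v,\k\}$ to $\Gamma$, which disables $\ol{\mathit{Send2}}$ for that message, and an effective step that empties the history and queue of a spawned process disables $\ol{\mathit{Spawn2}}$. Reconvergence still holds in these situations --- in the first, applying $\ol{\mathit{Sched}}$ from $s_1$ lands exactly on $s_2$ because the freshly added marker $\#_\mathsf{sch}^\k$ is consumed; in the second, one must continue with $\ol{\mathit{Spawn1}}$ from both sides, taking more than one step --- but not by the symmetric one-step commutation you describe. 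Second, and more importantly, in the both-effective case the issue is not whether ``the updates to $\Psi$ combine as set operations'': the danger is that one transition is an $\ol{\mathit{Spawn1}}$ that \emph{deletes} the very process on which the other transition acts, after which that transition cannot be replayed from $s_1$ at all. This is precisely the case the paper's proof singles out, and it is excluded because $\ol{\mathit{Spawn1}}$ requires the spawned process to have empty history and queue, so no backward rule can be applicable to it in the first place. Your lifting step needs this observation (or an equivalent one) to go through; as written it silently assumes both processes survive both transitions.
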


\begin{proof}
  Let us consider the case where both transitions are applied to the
  same process $p$. In this case, only one backward rule is applicable
  and the claim follows trivially. Note that the only case where more
  than one backward rule would be applicable is when one of the rules
  is $\ol{\mathit{Sched}}$ and the other one is a different rule, but
  this case is excluded by the fact that we consider a fixed policy
  for $\ol{\mathit{Sched}}$ as mentioned above.

  Consider now the case where each transition is applied to a
  different process, say $p_1$ and $p_2$, so that we have $s \lhh s_1$
  and $s \lhh s_2$. By the soundness of the backward reduction rules
  of the rollback semantics (Theorem~\ref{thm:soundness-rollback}), we
  have $\rolldel(s) \lh^* \rolldel(s_1)$ and $\rolldel(s) \lh^*
  \rolldel(s_2)$. Note that each of the derivations above has either
  length $1$ or $0$. We just consider the case where they have both
  length $1$, since the others are simpler.  By the square lemma
  (Lemma~\ref{lemma:square}), there exists a system $s''$ such that
  $\rolldel(s_1) \lh s''$ and $\rolldel(s_2) \lh s''$.  Now, we show
  that processes $p_2$ and $p_1$ are still in rollback mode in $s_1$
  and $s_2$, respectively. Here, the only case where the application
  of a backward rule to a process removes a rollback from a different
  process is $\ol{\mathit{Spawn}}$. Consider, e.g., that the rule
  applied to process $p_1$ is 
 $\ol{\mathit{Spawn}}$ and that the removed process is $p_2$. In this
  case, however, no backward rule could be applied to process $p_2$, so this
  case is not possible. Therefore, by applying the completeness of the
  rollback semantics, we have $s_1 \lhh s'_1$ and $s_2\lhh s'_2$ with
  $\rolldel(s'_1) = \rolldel(s'_2) = s''$. The thesis follows by
  noticing that the rollbacks in $s'_1$ and $s'_2$ coincide (in both
  the cases they are the rollbacks in $s$ minus the ones removed by
  the performed transitions, which are the same in both the cases)
  hence $s'_1=s'_2=s'$.  \qed
\end{proof}
The following result is an easy corollary of the previous lemma:

\begin{corollary}
  Let $s$ be a reachable system. If $s \lhh^\ast s_1 \not\lhh$ and $s
  \lhh^\ast s_2 \not\lhh$, both derivations use the same policy for
  $\ol{\mathit{Sched}}$, and never use rule $\mathit{\ol{Undo}}$, then $s_1 =
  s_2$.
\end{corollary}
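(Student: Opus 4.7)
The plan is to view this as an instance of Newman's lemma: local confluence plus termination implies confluence, and two normal forms of a confluent relation that are reached from a common source must coincide. Lemma~\ref{lemma:rollback-confluence} already gives local confluence of $\lhh$ (restricted to rules other than $\mathit{\ol{Undo}}$ and with a fixed policy for $\ol{\mathit{Sched}}$), so the only missing ingredient is termination.

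First I would argue that $\lhh$ restricted as in the statement is strongly normalising. The idea is to introduce a well-founded measure $|s|$ that strictly decreases at every backward step. A natural candidate is a lexicographic combination of (i) the sum of the lengths of all process histories and queues in $s$, and (ii) the cardinality of the complement in some finite ambient set of the union of all rollback sets $\Psi$ appearing in $s$. Rules $\mathit{\ol{Seq}}$, $\mathit{\ol{Check}}$, $\mathit{\ol{Send1}}$, $\mathit{\ol{Receive}}$, $\mathit{\ol{Spawn1}}$, $\mathit{\ol{Self}}$ and $\mathit{\ol{Sched}}$ all decrease component (i) (note that $\mathit{\ol{Receive}}$ removes an item from a history but re-inserts a message into the queue; the net effect however follows the pattern of the forward Landauer embedding, since the queue length before the undoing is strictly smaller than after, and indeed the measure is tuned so that undoing a receive strictly decreases the total). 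Rules $\mathit{\ol{Send2}}$ and $\mathit{\ol{Spawn2}}$ leave (i) unchanged but strictly enlarge some $\Psi$ by an element that, by their side conditions, was not yet present; since the set of possible rollback markers is finite (checkpoint identifiers, spawned pids and message identifiers all range over bounded sets in a reachable system, thanks to Definition~\ref{def:reachableroll}), component (ii) strictly decreases.

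Having termination, I would then invoke Newman's lemma on the relation $\lhh$ (with the fixed $\ol{\mathit{Sched}}$-policy and with $\mathit{\ol{Undo}}$ forbidden): local confluence from Lemma~\ref{lemma:rollback-confluence} upgrades to confluence. Applied to the two derivations $s \lhh^\ast s_1$ and $s \lhh^\ast s_2$, this gives a system $s'$ with $s_1 \lhh^\ast s'$ and $s_2 \lhh^\ast s'$. But $s_1 \not\lhh$ and $s_2 \not\lhh$ by hypothesis, so both of these derivations must be of length zero, whence $s_1 = s' = s_2$.

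The main obstacle is making the termination measure genuinely work across all rules simultaneously, in particular reconciling the rules that shrink histories with $\mathit{\ol{Receive}}$ (which shifts information from the history back to the queue) and with $\mathit{\ol{Send2}}$/$\mathit{\ol{Spawn2}}$ (which grow $\Psi$ without touching histories or queues). The lexicographic split above handles this, but one must verify carefully that in a reachable system the ambient set of potential rollback markers is indeed finite, so that component (ii) is well-founded; this uses Definition~\ref{def:reachableroll} together with the fact that all checkpoint identifiers, pids and message identifiers involved in any forward derivation leading to $s$ are introduced in finite quantity.
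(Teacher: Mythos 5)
Your proof is correct in outline, but it takes a genuinely different route from the paper's. The paper argues exactly as in Corollary~\ref{cor:backconf}: the joining derivations constructed in the proof of Lemma~\ref{lemma:rollback-confluence} have length at most one, so the lemma actually yields a \emph{diamond} (strong confluence) property, and confluence follows by tiling, with no termination argument at all; uniqueness of the irreducible $s_1=s_2$ is then immediate. You instead read Lemma~\ref{lemma:rollback-confluence} as bare local confluence and upgrade it via Newman's lemma, which forces you to prove strong normalisation of $\lhh$ (restricted as in the statement). That termination claim is true and your lexicographic idea works, but it costs an extra argument the paper never needs; what it buys is robustness, since it does not depend on the unstated fact that the joining derivations in the lemma are single steps (the lemma as \emph{stated}, with $\lhh^\ast$, would not suffice for the paper's one-line appeal to ``standard results'').

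Two details of your measure deserve tightening. First, for $\ol{\mathit{Receive}}$ you say the measure is ``tuned'' but do not say how: a plain sum of history and queue lengths is left \emph{unchanged} by $\ol{\mathit{Receive}}$ (history $-1$, queue $+1$); you need, e.g., weight $2$ on history items and $1$ on queue items, under which every rule other than $\ol{\mathit{Send2}}$ and $\ol{\mathit{Spawn2}}$ strictly decreases the first component. Second, for the second component you should not take the \emph{union} of the rollback sets: the marker $\#_\mathsf{sp}$ can occur in several processes' $\Psi$ simultaneously, so the union can fail to grow when $\ol{\mathit{Spawn2}}$ fires. Use instead the sum $\sum_p |\Psi_p|$ (bounded above, per process, by the finitely many checkpoint identifiers, message identifiers and the single $\#_\mathsf{sp}$ available in a reachable system), which the side conditions of $\ol{\mathit{Send2}}$ and $\ol{\mathit{Spawn2}}$ guarantee to increase strictly. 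With these repairs your Newman-style proof goes through.
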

\begin{proof}
  Analogously to the proof of Corollary~\ref{cor:backconf}, using
  standard results for confluence of abstract relations \cite{BN98},
  we have that Lemma~\ref{lemma:rollback-confluence} implies that
  there exists a system $s'$ such that $s_1 \lhh^\ast s'$ and $s_2
  \lhh^\ast s'$. Moreover, since both $s_1$ and $s_2$ are irreducible,
  we have $s_1 = s_2$.  \qed
\end{proof}

\section{Proof-of-concept Implementation of the Reversible Semantics} \label{sec:imple}

We have developed a proof-of-concept implementation of the uncontrolled
reversible semantics for Erlang that we presented in
Section~\ref{sec:semantics}.  This implementation is conveniently
bundled together with a graphical user interface (we refer to this as
``the application'') in order to facilitate the interaction of users
with the reversible semantics.  However, the application has been
developed in a modular way, so that it is possible to include the
implementation of the reversible semantics in other projects (e.g., it has been included in the
reversible debugger CauDEr~\cite{LNPV:FLOPS,cauder}).

\begin{figure}[t] \begin{center} \includegraphics[scale=0.4]{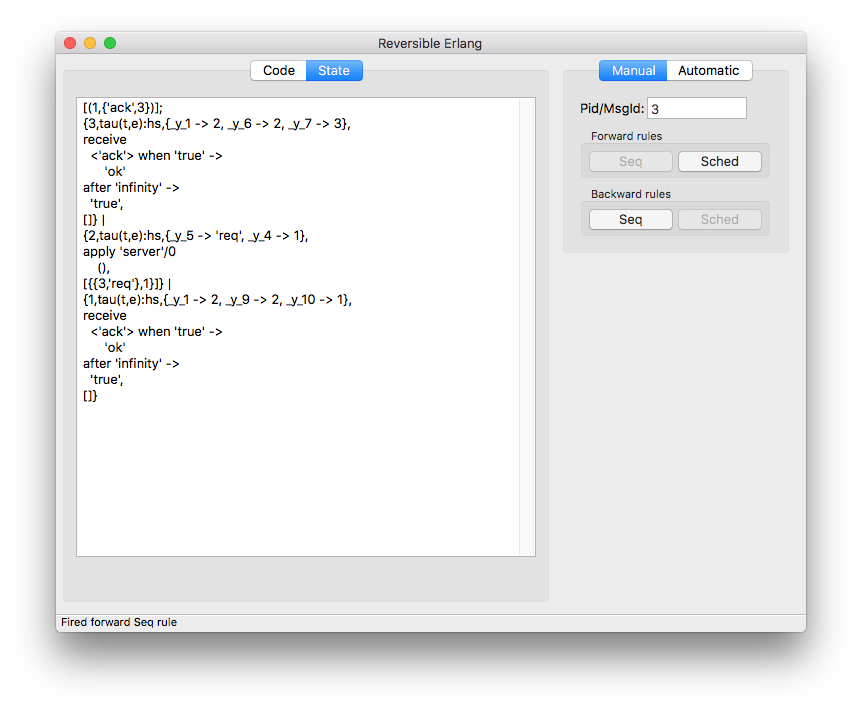}
  \end{center} \label{fig:screen} \vspace{-5ex} \caption{Screenshot of
    the application}\label{fig:screenshot} \end{figure}

Let us recall that our semantics is defined for a language that is
equivalent to Core Erlang \cite{CGJLNPV04}, a much simpler language
than Erlang. Not surprisingly, the implementation of our reversible
semantics is defined for Core Erlang as well. Prior to its
compilation, Erlang programs are translated to Core Erlang by
the Erlang/OTP system,
so that the resulting code is simplified. For instance, pattern
matching can occur almost anywhere in an Erlang program, whereas in
Core Erlang, pattern matching can only occur in case
statements. Nevertheless, directly writing Core Erlang programs 
would not be comfortable for the user,
since Core Erlang is only used as an intermediate language. Hence, our
implementation considers the Core Erlang code translated from the Erlang
program provided by the user.

The application works as follows: when it is started, the first step
is to select an Erlang source file. The selected source file is then
translated into Core Erlang, and the resulting code is shown in the
code window. Then, the user can choose any of the functions from the
module and write the arguments that she wants to evaluate the function
with. An initial system state, with an empty global mailbox and a
single process performing the specified function application, appears
on the state window when the user presses the start button, as shown
in Figure~\ref{fig:screenshot}. Now, the user is able to control
the system state by selecting the rules from the reversible semantics
that she wants to fire.

We have defined two different modes for controlling the reversible
semantics.  The first mode is a \emph{manual} mode, where the user
selects the rule to be fired for a particular process or message. Here,
the user is in charge of ``controlling'' the reversible semantics,
although this approach can rapidly become exhausting.
The second mode is the \emph{automatic} mode. Here, the user specifies
a number of steps and chooses a direction (forward or backward), and
the rules to be applied are selected at random---for the chosen
direction---until the specified number of steps is reached or no more
rules can be applied.  Alternatively, the user can move the state
forward up to a \emph{normalised system}. To normalise a system, one
must ignore the $\mathit{Sched}$ rule and apply only the other
rules. A normalised system is reached when no rule other than
$\mathit{Sched}$ can be fired.  Hence, in a normalised system, either
all processes are blocked (waiting for some message to arrive) or the
system state is final.  Normalising a system allows the user to
perform all the reductions that do not depend on the
network. Reductions depending on the network can then be performed one by one to understand their impact on the derivation. 

The release version (v1.0) of the application is fully written in
Erlang, and it is publicly available from
\texttt{https://github.com/mistupv/rev-erlang} under the MIT license.
Hence, the only requirement to build the application is to have
Erlang/OTP installed.  Besides, we have included some documentation and
a few examples to easily test the application.

\section{Related Work} \label{sec:relwork}

First, regarding the semantics of Erlang presented in
Section~\ref{sec:semantics}, we have some similarities with both
\cite{CMRT13tr} and \cite{SFB10}. In contrast to \cite{CMRT13tr},
which presents a monolithic semantics, our relation is split into
expression-level rules and system-level rules. This division eases the
presentation of a reversible semantics, since it only affects the
system-level rules. As for \cite{SFB10}, we follow the idea of
introducing a global mailbox (there called ``ether'') so that every
message passing communication can be decomposed into two steps: sending and
scheduling. Their semantics considers other features of Erlang (such as
links or monitors) but does not present the semantics of expressions,
as we do. Another difference lies in the fact that all \emph{side
  effects} are asynchronous in \cite{SFB10} (e.g., the spawning of a
process is asynchronous), a design decision that allows for a simpler
semantics. In our case, spawning a process is dealt with in a
synchronous manner, which is closer to the actual behaviour of
Erlang. Finally, as mentioned in Section~\ref{sec:semantics}, we
deliberately ignore the restriction that guarantees the order of
messages for any pair of given processes. This may increase the number
of possible interleavings, but we consider that it models better the
behaviour of current Erlang implementations.

Regarding reversibility, the approach presented in this paper is in
the line of work on causal-consistent reversibility \cite{DK04,PU07}
(see~\cite{LMT14} for a survey). In particular, our work is closer to
\cite{DK04}, since we also consider adding a \emph{memory} (a history
in our terminology) in order to make a computation
reversible. Moreover, our proof of causal consistency mostly follows
the proof scheme in \cite{DK04}. In contrast, we consider a
different concurrent language with asynchronous communication, while
communication in \cite{DK04} is synchronous. On the other hand,
\cite{PU07} does not introduce a memory but keeps the old actions
marked with a ``key''. As pointed out in \cite{PU07}, process
equivalence is easier to check than in \cite{DK04} (where one would
need to abstract away from the memories). Like \cite{DK04}, also
\cite{PU07} considers synchronous communication. Formalising the
Erlang semantics using a labelled transition relation as in
\cite{DK04,PU07} (rather than a reduction semantics, as we do in this
paper), and then defining a reversible extension would be an
interesting and challenging approach for further research.

Nevertheless, as mentioned in the Introduction, the closest to our
work is the debugging approach based on a rollback construct of
\cite{GLM14,GLMT15,LMSS11,LMS16,LLMS12}, but it is defined in the
context of a different language or formalism. Among the languages
considered in the works above, the closest to ours is
$\mu$Oz~\cite{LLMS12,GLM14}. A main difference is that $\mu$Oz is not
distributed: messages move atomically from the sender to a chosen
queue, and from the queue to the receiver. Each of the two actions is
performed by a specific process, hence naturally part of its
history. In our case, the scheduling action is not directly performed by a
process, and it is only potentially observed when the target process performs
the receive action (but not necessarily observed, e.g., if the message
does not match the patterns in the receive). The definition of the
notions of conflict and concurrency in this setting is, as a
consequence, much trickier than in $\mu$Oz. This difficulty carries
over to the definition of the history information that needs to be
tracked, and to how this information is exploited in the reversible
semantics (actually, this was one of the main difficulties we
encountered during our work). Furthermore, in the case of $\mu$Oz only
the uncontrolled semantics has been fully formalised~\cite{LLMS12},
while the controlled semantics and the corresponding results are only
sketched~\cite{GLM14}.

Also, we share some similarities with the checkpointing technique for
fault-tolerant distributed computing of \cite{FV05,KFV14}, although
the aim is different (they aim at defining a new language rather than
extending an existing one).

On the other hand, \cite{NY17} has very recently introduced a novel technique
for recovery in Erlang based on session types. Although the approach
is different, our rollback semantics could also be used for rollback
recovery. In contrast to \cite{NY17}, that only considers recovery of
processes as a whole, our approach could be helpful to design a more
fine grained recovery strategy.

Finally, as mentioned in the Introduction, this paper extends and improves
\cite{NPV16b} in different ways. Firstly, \cite{NPV16b} only presents a
rollback semantics. Here, we have introduced an uncontrolled reversible
semantics and have proved a number of fundamental theoretical properties,
including its causal consistency (no proofs of technical results are provided
in \cite{NPV16b}). Secondly, the reversible semantics in \cite{NPV16b} does
not consider messages' unique identifiers ($\k$), so that the problems
mentioned in Section~\ref{sec:revsem} are not avoided. Moreover, the process'
histories also include items for the applications of rule \textit{Sched},
which makes the underlying notion of concurrency unnecessarily restrictive. As
for the rollback semantics of \cite{NPV16b}, besides the points mentioned
above, it only considered one rollback for each process, while sets of
rollbacks are accepted in this work. Consequently, we have now reduced the
number of rules required to undo the sending of a message or to undo
the introduction of a checkpoint, so
that the rollback semantics is simpler. Furthermore, we have designed and
developed a proof-of-concept implementation in this paper that allowed us to
check the viability of the reversible semantics in practice.

\section{Conclusion and Future Work} \label{sec:conc}

We have defined a reversible semantics for a first-order subset of
Erlang that undoes the actions of a process step by step in a
sequential way. To the best of our knowledge, this is the first
attempt to define a reversible semantics for Erlang. In this work, we
have first introduced an uncontrolled, reversible semantics, and have
proved that it enjoys the usual properties (loop lemma, square lemma,
and causal consistency). Then, we have introduced a controlled version
of the backward semantics that can be used to model a rollback
operator that undoes the actions of a process up to a given
checkpoint. A proof-of-concept implementation shows that our approach
is indeed viable in practice.

As future work, we consider the definition of mechanisms to control
reversibility so that history information is stored only when needed
to perform a rollback.
This could be essential to extend Erlang with a new construct for
\emph{safe sessions}, where all the actions in a session can be undone
if the session aborts. Such a construct could have a great potential
to automate the fault-tolerance capabilities of the language Erlang.


\end{document}